
\documentclass{tlp}
\usepackage{aopmath}

\newtheorem{definition}{Definition} 
\newtheorem{example}{Example} 

\begin{document}
\bibliographystyle{acmtrans}

\long\def\comment#1{}

\title[Fuzzy Linguistic Logic Programming]
			{Fuzzy Linguistic Logic Programming\\
			 and its Applications}

\author[V. H. Le, F. Liu and D. K. Tran]
{VAN HUNG LE, FEI LIU\\
Department of Computer Science and Computer Engineering\\
La Trobe University, Bundoora, VIC 3086, Australia\\
\email{vh2le@students.latrobe.edu.au; f.liu@latrobe.edu.au}
\and DINH KHANG TRAN\\
Faculty of Information Technology\\
Hanoi University of Technology, Vietnam\\
\email{khangtd@it-hut.edu.vn}
}

\pagerange{\pageref{firstpage}--\pageref{lastpage}}
\volume{\textbf{10} (3):}
\jdate{March 2008}
\setcounter{page}{1}
\pubyear{2008}

\maketitle

\label{firstpage}

\begin{abstract}
The paper introduces fuzzy linguistic logic programming, which is a combination of fuzzy logic programming, introduced by P. Vojt\'{a}\v{s}, and hedge algebras in order to facilitate the representation and reasoning on human knowledge expressed in natural languages. In fuzzy linguistic logic programming, truth values are linguistic ones, e.g., \emph{VeryTrue}, \emph{VeryProbablyTrue}, and \emph{LittleFalse}, taken from a hedge algebra of a linguistic truth variable, and linguistic hedges (modifiers) can be used as unary connectives in formulae. 
This is motivated by the fact that humans reason mostly in terms of linguistic terms rather than in terms of numbers, and linguistic hedges are often used in natural languages to express different levels of emphasis.
The paper presents: 
$(i)$ the language of fuzzy linguistic logic programming;
$(ii)$ a declarative semantics in terms of Herbrand interpretations and models; 
$(iii)$ a procedural semantics which directly manipulates linguistic terms to compute a lower bound to the truth value of a query, and proves its soundness;
$(iv)$ a fixpoint semantics of logic programs, and based on it, proves the completeness of the procedural semantics;
$(v)$ several applications of fuzzy linguistic logic programming; and
$(vi)$ an idea of implementing a system to execute fuzzy linguistic logic programs.
\end{abstract}
\begin{keywords}
Fuzzy logic programming, hedge algebra, linguistic value, linguistic hedge, computing with words, databases, querying, threshold computation, fuzzy control
\end{keywords}

\section{Introduction}
People usually use words (in natural languages), which are inherently imprecise, vague and qualitative in nature, to describe real world information, to analyse, to reason, and to make decisions. Moreover, in natural languages, linguistic hedges are very often used to state different levels of emphasis. Therefore, it is necessary to investigate logical systems that can directly work with words, and make use of linguistic hedges since such systems will make it easier to represent and reason on knowledge expressed in natural languages.

Fuzzy logic, which is derived from fuzzy set theory, introduced by L. Zadeh, deals with reasoning that is approximate rather than exact, as in classical predicate logic. In fuzzy logic, the truth value domain is not the classical set $\{False,True\}$ or $\{0, 1\}$, but a set of linguistic truth values \cite{Zadeh75b} or the whole unit interval [0,1]. Moreover, in fuzzy logic, linguistic hedges play an essential role in the generation of the values of a linguistic variable and in the modification of fuzzy predicates \cite{Zadeh89}. Fuzzy logic provides us with a very powerful tool for handling imprecision and uncertainty, which are very often encountered in real world information, and a capacity for representing and reasoning on knowledge expressed in linguistic forms.

Fuzzy logic programming, introduced in \citeN{Vo01}, is a formal model of an extension of logic programming without negation working with a truth functional fuzzy logic in narrow sense. In fuzzy logic programming, atoms and rules, which are many-valued implications, are graded to a certain degree in the interval [0,1]. 
Fuzzy logic programming allows a wide variety of many-valued connectives in order to cover a great variety of applications.
A sound and complete procedural semantics is provided to compute a lower bound to the truth value of a query.
Nevertheless, no proofs of extended versions of Mgu and Lifting lemmas are given.
Fuzzy logic programming has applications such as threshold computation, a data model for flexible querying \cite{PV01}, and fuzzy control \cite{Gerla05}. 

The theory of hedge algebras, introduced in \citeANP{Ho90} \citeNN{Ho90,HoWech92}, forms an algebraic approach to a natural qualitative semantics of linguistic terms in a term domain. The hedge-algebra-based semantics of linguistic terms is qualitative, relative, and dependent on the order-based structure of the term domain. 
Hedge algebras have been shown to have a rich algebraic structure to represent linguistic domains \cite{HoKhang99}, and
the theory can be effectively applied to problems such as linguistic reasoning \cite{HoKhang99} and fuzzy control \cite{Ho08}. 
The notion of an \emph{inverse mapping of a hedge} is defined in \citeN{DK06} for monotonic hedge algebras, a subclass of linear hedge algebras.

In this work, we integrate fuzzy logic programming and hedge algebras to build a logical system that facilitates the representation and reasoning on knowledge expressed in natural languages. 
In our logical system, the set of truth values is that of linguistic ones taken from a hedge algebra of a linguistic truth variable. Furthermore, we consider only finitely many truth values. On the one hand, this is due to the fact that normally, people use finitely many degrees of quality or quantity to describe real world applications which are granulated \cite{Zadeh97}. On the other hand, it is reasonable to provide a logical system suitable for computer implementation. In fact, the finiteness of the truth domain allows us to obtain the Least Herbrand model for a finite logic program after a finite number of iterations of an immediate consequences operator. 
Moreover, we allow the use of linguistic hedges as unary connectives in formulae to express different levels of accentuation on fuzzy predicates. The procedural semantics in \citeN{Vo01} is extended to deduce a lower bound to the truth value of a query by directly computing with linguistic terms.

The paper is organised as follows: 
the next section gives a motivating example for the development of fuzzy linguistic logic programming; Section 3 presents linguistic truth domains taken from hedge algebras of a truth variable, inverse mappings of hedges, many-valued modus ponens w.r.t. such domains; Section 4 presents the theory of fuzzy linguistic logic programming, defining the language, declarative semantics, procedural semantics, and fixpoint semantics, and proving the soundness and completeness of the procedural semantics; Section 5 and Section 6 respectively discuss several applications and an idea for implementing a system where such logic programs can be executed; the last section summarises the paper.

\section{Motivation}
Our motivating example is adapted from the hotel reservation system described in \citeN{NOHW95}. Here, we use logic programming notation. A rule to find a convenient hotel for a business trip can be defined as follows:
\begin{eqnarray*}
convenient\_hotel(Business\_location,Time, Hotel)\leftarrow \\
\wedge(near\_to(Business\_location,Hotel), \\
reasonable\_cost(Hotel,Time), \\
fine\_building(Hotel)). \mbox{with truth value=}VeryTrue 
\end{eqnarray*}
That is, a hotel is regarded to be convenient for a business trip if it is near the business location, has a reasonable cost at the considered time, and is a fine building. 

Here, \emph{fine\_building(Hotel)} is an \emph{atomic formula} (atom), which is a fuzzy predicate symbol with a list of arguments, having a truth value. There is an option that the truth value of \emph{fine\_building} of a hotel is a number in [0,1] and is calculated by a function of its age as in \citeN{NOHW95}. However, in fact, the age of a hotel may not be enough to reflect its fineness since the fineness also depends on the construction quality and the surroundings. Similarly, the truth value of \emph{reasonable\_cost} can be computed as a function of the hotel rate at the time. Nevertheless, since the rate varies from season to season, the function should be modified accordingly to reflect the reasonableness for a particular time. Thus, a more realistic and appropriate way is to assess the fineness and the reasonableness of the cost of a hotel using linguistic truth values, e.g., \emph{ProbablyTrue}, after considering all possible factors. 

Note that there can be more than one way to define the convenience of a hotel, and the above rule is only one of them. Furthermore, since any of such rules may not be absolutely true for everybody, each rule should have a degree of truth (truth value). For example, \emph{VeryTrue} is the truth value of the above rule.

In addition, since linguistic hedges are usually used to state different levels of emphasis, we desire to use them to express different degrees of requirements on the criteria. For example, if we want to emphasise closeness, we can use the formula \emph{\textbf{Very} near\_to(Business\_location,Hotel)} instead of \emph{near\_to(Business\_location,Hotel)} in the rule, and if we do not care much about the cost, we can relax the criterion by using the hedge \emph{Probably} for the atom \emph{reasonable\_cost(Hotel,Time)}. Thus, the rule becomes:
\begin{eqnarray*}
convenient\_hotel(Business\_location,Time, Hotel)\leftarrow \\
\wedge(Very~near\_to(Business\_location,Hotel), \\
Probably~reasonable\_cost(Hotel,Time), \\
fine\_building(Hotel)). \mbox{with truth value=}VeryTrue 
\end{eqnarray*}
In our opinion, in order to model knowledge expressed in natural languages, a formalism should address the twofold usage of linguistic hedges, i.e., in generating linguistic values and in modifying predicates. 
To the best of our knowledge, no existing frameworks of logic programming have addressed the problem of using linguistic truth values as well as allowing linguistic hedges to modify fuzzy predicates.

\section{Hedge algebras and linguistic truth domains}
\subsection{Hedge algebras}
Since the mathematical structures of a given set of truth values play an important role in studying the corresponding logics, we present here an appropriate mathematical structure of a linguistic domain of a linguistic  variable \emph{Truth} in particular, and that of any linguistic variable in general.

In an algebraic approach, values of the linguistic variable \emph{Truth} such as \emph{True}, \emph{VeryTrue}, \emph{ProbablyFalse}, \emph{VeryProbablyFalse}, and so on can be considered to be generated from a set of generators (primary terms) $G = \{False, True\}$ using hedges from a set $H=\{Very,More,Probably,$ $...\}$ as unary operations. There exists a natural ordering among these values, with $a\leq b$ meaning that $a$ indicates a degree of truth less than or equal to $b$. For example, $True < VeryTrue$ and $False <LittleFalse$, where $a<b$ iff $a\leq b$ and $a\neq b$. The relation $\leq$ is called the \emph{semantically ordering relation} (SOR) on the term domain, denoted by $X$. 

There are natural semantic properties of linguistic terms and hedges that can be formulated in terms of the SOR as follows. Let \emph{V, M, L, P}, and \emph{A} stand for the hedges \emph{Very, More, Little, Probably}, and \emph{Approximately}, respectively.

$(i)$ Hedges either increase or decrease the meaning of terms they modify, so they can be regarded as \emph{ordering operations}, i.e., $\forall h\in H, \forall x\in X, \mbox{ either } hx\geq x \mbox{ or } hx\leq x$. 
The fact that a hedge $h$ modifies terms more than or equal to another hedge $k$, i.e., $\forall x \in X$, $hx\leq kx\leq x$ or $x\leq kx\leq hx$, is denoted by $h\geq k$. Note that since the sets $H$ and $X$ are disjoint, we can use the same notation $\leq$ for different ordering relations on $H$ and on $X$ without any confusion.
For example, we have $L>P$ ($h>k$ iff $h\geq k$ and $h\neq k$) since, for instance, $LTrue<PTrue<True$ and $LFalse>PFalse>False$.

$(ii)$ A hedge has a semantic effect on others, i.e., it either strengthens or weakens the degree of modification of other hedges. If $h$ strengthens the degree of modification of $k$, i.e., $\forall x \in X$, $hkx\leq kx\leq x$ or $x\leq kx \leq hkx$, then it is said that $h$ is \emph{positive} w.r.t. $k$; if $h$ weakens the degree of modification of $k$, i.e., $\forall x \in X$, $kx\leq hkx\leq x$ or $x\leq hkx\leq kx$, then it is said that $h$ is \emph{negative} w.r.t. $k$. For instance, $V$ is positive w.r.t. $M$ since, e.g., $VMTrue>MTrue>True$; $V$ is negative w.r.t. $P$ since, e.g., $PTrue<VPTrue<True$.

$(iii)$ An important semantic property of hedges, called \emph{semantic heredity}, is that hedges change the meaning of a term a little, but somewhat preserve the original meaning. Thus, if there are two terms $hx$ and $kx$, where $x\in X$, such that $hx\leq kx$, then all terms generated from $hx$ using hedges are less than or equal to all terms generated from $kx$. This property is formulated by: $(a)$ If $hx\leq kx$, then $H(hx)\leq H(kx)$, where $H(u)$ denotes the set of all terms generated from $u$ by means of hedges, i.e., $H(u) = \{\sigma u|\sigma\in H^{*}\}$, where $H^{*}$ is the set of all strings of symbols in $H$ including the empty one. For example, since $MTrue\leq VTrue$, we have $VMTrue\leq LVTrue$ and $H(MTrue)\leq H(VTrue)$; $(b)$ If two terms $u$ and $v$ are incomparable, then all terms generated from $u$ are incomparable to all terms generated from $v$. For example, since $AFalse$ and $PFalse$ are incomparable, $VAFalse$ and $MPFalse$ are incomparable too. 

Two terms $u$ and $v$ are said to be \emph{independent} if $u\notin H(v)$ and $v\notin H(u)$. For example, $VTrue$ and $PMTrue$ are independent, but $VTrue$ and $LVTrue$ are not since $LVTrue\in H(VTrue)$.

\begin{definition}[Hedge algebra] \cite{Ho90}
An abstract algebra $\underline{X}=(X,G,H,\leq)$, where $X$ is a term domain, $G$ is a set of primary terms, $H$ is a set of linguistic hedges, and $\leq$ is an SOR on $X$, is called a \emph{hedge algebra} (HA) if it satisfies the following:

(A1) Each hedge is either positive or negative w.r.t. the others, including itself;

(A2) If terms $u$ and $v$ are independent, then, for all $x\in H(u)$, we have $x\notin H(v)$. In addition, if $u$ and $v$ are incomparable, i.e., $u \not< v$ and $v \not< u$, then so are $x$ and $y$, for every $x\in H(u)$ and $y\in H(v)$;

(A3) If $x \neq hx$, then $x\notin H(hx)$, and if $h\neq k$ and $hx \leq kx$, then $h'hx\leq k'kx$, for all $h, k, h', k' \in H$ and $x \in X$. Moreover, if $hx\neq kx$, then $hx$ and $kx$ are independent;

(A4) If $u \notin H(v)$ and $u\leq v$ ($u\geq v$), then $u \leq hv$ ($u\geq hv$) for any $h\in H$.
\end{definition}
Axioms (A2)-(A4) are a weak formulation of the semantic heredity of hedges. 

Given a term $u$ in $X$, the expression $h_{n}...h_{1}u$ is called a \emph{representation} of $x$ w.r.t. $u$ if $x = h_{n}...h_{1}u$, and, furthermore, it is called a \emph{canonical representation} of $x$ w.r.t. $u$ if $h_{n} h_{n-1}...h_{1}u \neq  h_{n-1}...h_{1}u$. 

The following proposition shows how to compare any two terms in $X$. The notation $x_{u|j}$ denotes the suffix of length $j$ of a representation of $x$ w.r.t. $u$, i.e., for $x = h_{n}...h_{1}u$, $x_{u|j}$ = $h_{j-1}...h_{1}u$, where $2\leq j\leq n+1$, and $x_{u|1} = u$. Let $I\notin H$ be an artificial hedge called the \emph{identity} on $X$ defined by the rule $\forall x\in X$, $Ix = x$.

\begin{proposition}
\label{prop1}
\cite{HoWech92}
Let $x=h_{n}...h_{1}u$, $y=k_{m}...k_{1}u$ be two canonical representations of $x$ and $y$ w.r.t. $u$, respectively. Then, there exists the largest $j\leq min(m,n)+1$ 
(here, as a convention it should be understood that if  $j=min(m,n)+1$, then $h_{j}=I$, for $j=n+1$, and $k_{j}=I$, for $j=m+1$)
such that $\forall i < j$, $h_i=k_i$, and 

($i$) $x=y$ iff $n=m$ and $h_{j}x_{u|j}=k_{j}x_{u|j}$;

($ii$) $x<y$ iff $h_jx_{u|j}<k_jx_{u|j}$; 

($iii$) $x$ and $y$ are incomparable iff $h_jx_{u|j}$ and $k_jx_{u|j}$ are incomparable. 
\end{proposition}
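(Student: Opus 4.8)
The plan is to reduce the comparison of $x$ and $y$ to the single comparison of the two \emph{pivot terms} $a := h_j x_{u|j}$ and $b := k_j x_{u|j}$, and then to propagate the relation between $a$ and $b$ upward to $x$ and $y$ by semantic heredity. First I would observe that, since $h_i = k_i$ for all $i < j$, the two canonical representations share the common stem $w := x_{u|j} = y_{u|j} = h_{j-1}\ldots h_1 u$, so that $a = h_j w$ and $b = k_j w$ are built on the same term $w$. Rewriting $x = h_n\ldots h_{j+1}\,a$ and $y = k_m\ldots k_{j+1}\,b$ shows $x \in H(a)$ and $y \in H(b)$; this is the structural fact that lets heredity take over. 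Maximality of $j$ guarantees that, away from the boundary, $h_j \neq k_j$, whereas in the boundary case $j = \min(m,n)+1$ one pivot degenerates to an identity-prefixed term (e.g. $a = Ix = x$ when $j = n+1$, so that $b = k_{n+1}x$ is a single hedge applied to $a$).

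Next I would run a four-way case analysis on the relation between $a$ and $b$, which is exhaustive and mutually exclusive: $a=b$, $a<b$, $b<a$, or $a,b$ incomparable. For $a<b$ in the non-boundary case, $a\neq b$ forces $a,b$ to be independent by (A3), so the prose form of semantic heredity ($a\leq b$ implies $H(a)\leq H(b)$) yields $x\leq y$, and (A2) gives $x\notin H(b)\ni y$, whence $x\neq y$ and so $x<y$; in the boundary case I would instead use (A4) to lift $x\leq k_{n+1}x$ to $x\leq y$ and the first clause of (A3) ($x\neq k_{n+1}x$ implies $x\notin H(k_{n+1}x)$) to obtain $x\neq y$. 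The symmetric argument handles $b<a$, giving $y<x$, and the incomparable case is immediate from (A2), which makes every element of $H(a)$ incomparable to every element of $H(b)$, establishing $(iii)$. Since each of these forward implications carries one block of the partition on the pivot side to the matching block on the $\{x,y\}$ side, and both partitions are exhaustive and exclusive, the implications upgrade automatically to the biconditionals of $(i)$--$(iii)$.

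The delicate case is $a=b$, which is exactly where the hypothesis $n=m$ in $(i)$ earns its place: I would argue from canonicity (every hedge in a canonical representation is effective) that if precisely one pivot degenerates, say $j=n+1<m+1$, then $a=x$ while $b=k_{n+1}x\neq x$, so $a\neq b$; hence $a=b$ can occur only when both pivots degenerate, i.e. $j=n+1=m+1$ with $n=m$, in which case the representations coincide hedge-for-hedge and $x=y$, and conversely $x=y$ for canonical representations forces $n=m$ together with termwise equality. The main obstacle I anticipate lies here: one must rule out an \emph{accidental} coincidence $h_j w = k_j w$ when $h_j\neq k_j$ (which would sever the clean pivot-to-pair correspondence and wreck $(ii)$). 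This is the point where the two non-obvious clauses of (A3) — the independence clause and the monotone-extension clause ``$h\neq k$ and $hx\leq kx$ imply $h'hx\leq k'kx$'' — have to be invoked carefully, rather than the more transparent heredity statements (A2) and (A4) used elsewhere.
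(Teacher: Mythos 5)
The paper never proves this proposition: it is imported wholesale from Nguyen and Wechler (1992) (the reference \cite{HoWech92}), so there is no in-paper argument to compare yours against and your attempt must stand on its own. Much of its architecture does stand: reducing to the pivots $a=h_jx_{u|j}$, $b=k_jx_{u|j}$, getting independence of $a$ and $b$ from the ``moreover'' clause of (A3) when $a\neq b$, lifting $a<b$ to $x<y$, using (A2) for clause ($iii$), and using canonicity to kill the case where exactly one pivot degenerates are all correct moves. One expository caveat: your ``prose form of semantic heredity'' ($a\leq b$ implies $H(a)\leq H(b)$) is not an axiom --- the paper explicitly says (A2)--(A4) are only a \emph{weak} formulation of heredity --- but once independence of the pivots is in hand it does follow from (A2) and (A4) by induction on hedge strings, so that step is repairable rather than wrong.

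The genuine gap is exactly the one you flagged and then left open: the ``accidental coincidence'' $h_jx_{u|j}=k_jx_{u|j}$ with $h_j\neq k_j$. Your stated plan --- invoke the two non-obvious clauses of (A3) ``carefully'' --- cannot succeed, because the axioms as reproduced in this paper do not exclude that case; they are consistent with it. Indeed, if $hw=kw$ with $h\neq k$, the extension clause of (A3) applied in both directions yields $h'hw\leq k'kw$ and $k'kw\leq h'hw$, i.e.\ $h'hw=k'kw$ for all $h',k'$: above a coincidence, all hedges act identically. This collapse is realizable: take $X=\{c,a,a_1\}$ with $c<a<a_1$, $H=\{h,k\}$, and $hc=kc=a$, $ha=ka=a_1$, $ha_1=ka_1=a_1$. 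One checks (A1)--(A4) hold (A2 and the ``moreover'' clause of (A3) are vacuous, since no two terms are independent and never $hx\neq kx$). Yet for the canonical representations $x=kc$ and $y=hhc$ we get $j=1$, equal pivots $kc=hc=a$, and nevertheless $x=a<a_1=y$, contradicting clause ($ii$); equivalently, your two partitions fail to match up, because the pivot-side block ``pivots equal, $n\neq m$'' is nonempty but corresponds to no block on the $\{x,y\}$ side. So the proposition is simply not provable from (A1)--(A4) as stated here: any complete proof must import the stronger property implicit in Nguyen and Wechler's original definition (distinct hedges applied to the same term cannot collide unless both fix it), or add it as a hypothesis, and then your case analysis goes through.
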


\subsection{Linear symmetric hedge algebras}
Since we allow hedges to be unary connectives in formulae, there is a need to be able to compute the truth value of a hedge-modified formula from that of the original. To this end, the notion of \emph{an inverse mapping of a hedge} is utilised. In order to define this notion, we restrict ourselves to linear HAs. 

The set of primary terms $G$ usually consists of two comparable ones, denoted by $c^{-}<c^{+}$. For the variable \emph{Truth}, we have $c^{-}=False<c^{+}=True$. Such HAs are called \emph{symmetric} ones.  
For symmetric HAs, the set of hedges $H$ can be divided into two disjoint subsets $H^{+}$ and $H^{-}$ defined as $H^{+} = \{h| hc^{+}>c^{+}\}$ and $H^{-} = \{h| hc^{+}<c^{+}\}$. 
Two hedges $h$ and $k$ are said to be \emph{converse} if $\forall x \in X$, $hx\leq x$ iff $kx\geq x$, i.e., they are in different subsets; $h$ and $k$ are said to be \emph{compatible} if $\forall x \in X$, $hx\leq x$ iff $kx\leq x$, i.e., they are in the same subset. 

Two hedges in each of sets $H^{+}$ and $H^{-}$ may be comparable, e.g., $L$ and $P$, or incomparable, e.g., $A$ and $P$. Thus, $H^{+}$ and $H^{-}$ become posets. 

\begin{definition}[Linear symmetric hedge algebra]
A symmetric HA $\underline{X} = (X,G=\{c^{-},c^{+}\},H,\leq)$ is said to be a \emph{linear symmetric} HA (lin-HA, for short) if the set of hedges $H$ is divided into $H^{+} = \{h| hc^{+}>c^{+}\}$ and $H^{-} = \{h| hc^{+}<c^{+}\}$, and $H^{+}$ and $H^{-}$ are linearly ordered. 
\end{definition}
\begin{example} \label{ex1}
Consider an HA $\underline{X} = (X,G=\{c^{-},c^{+}\},H=\{V, M, P,L\},\leq)$. $\underline{X}$ is a lin-HA as follows. $V$ and $M$ are positive w.r.t. $V$, $M$, and $L$, and negative w.r.t. $P$; $P$ is positive w.r.t. $P$, and negative w.r.t. $V$, $M$, and $L$; $L$ is positive w.r.t. $P$, and negative w.r.t. $V$, $M$, and $L$. $H$ is decomposed into $H^{+}=\{V,M\}$ and $H^{-}=\{P,L\}$. Moreover, in $H^{+}$, we have $M<V$, and in $H^{-}$, we have $P<L$. 
\end{example}

\begin{definition}[Sign function]
\cite{Ho90}
\label{def3}
A function $Sign: X\rightarrow \{-1,0,+1\}$ is a mapping defined recursively as follows, where $h, h'\in H$ and $c\in \{c^{-},c^{+}\}$:

a) $Sign(c^{-}) = -1$, $Sign(c^{+}) = +1$; 

b) $Sign(hc) = -Sign(c)$ if either $h\in H^{+}$ and $c=c^{-}$ or $h\in H^{-}$ and $c=c^{+}$; 

c) $Sign(hc) = Sign(c)$ if either $h\in H^{+}$ and $c=c^{+}$ or $h\in H^{-}$ and $c=c^{-}$; 

d) $Sign(h'hx) = -Sign(hx)$, if  $h'hx\neq hx$, and  $h'$ is negative w.r.t. $h$;

e) $Sign(h'hx) = Sign(hx)$, if $h'hx\neq hx$, and  $h'$ is positive w.r.t. $h$;

f) $Sign(h'hx) = 0$ if $h'hx = hx$. 
\end{definition}
Based on the function $Sign$, we have a criterion to compare $hx$ and $x$ as follows:

\begin{proposition}
\label{prop99}
\cite{Ho90}
For any $h$ and $x$, if $Sign(hx)= +1$, then $hx > x$, and if $Sign(hx) = -1$, then  $hx < x$.
\end{proposition}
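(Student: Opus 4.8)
The plan is to induct on the length $n$ of a canonical representation $x = h_n \ldots h_1 c$ of $x$ with respect to a primary generator $c \in \{c^-, c^+\}$, comparing $hx = h h_n \ldots h_1 c$ with $x$. Throughout I use that, by Definition~\ref{def3}, $Sign(hx)$ is obtained from $Sign(x)$ by one of the rules (b)--(f), and that the cases (d)--(f) are governed by whether $h$ is positive or negative with respect to the outermost hedge $h_n$ of $x$, which is decided by axiom (A1). The claim is vacuous whenever $Sign(hx) = 0$, i.e. (rule (f)) whenever $hx = x$, so I may assume $hx \neq x$ throughout.

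For the base case $n = 0$, the term $x$ is a generator. If $x = c^+$, then rules (b)--(c) give $Sign(hc^+) = +1$ exactly when $h \in H^+$ and $Sign(hc^+) = -1$ exactly when $h \in H^-$; since $H^+$ and $H^-$ are defined by $hc^+ > c^+$ and $hc^+ < c^+$ respectively, both implications follow at once. If $x = c^-$, I first use the symmetric structure --- a hedge in $H^+$ is converse to a hedge in $H^-$ --- to fix the direction of $hc^-$ relative to $c^-$, and then read off the matching value of $Sign(hc^-)$ from rules (b)--(c); the two directions again pair up correctly with the two sign values.

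For the inductive step, write $x = h_n x'$ with $x' = h_{n-1} \ldots h_1 c$, so that $hx = h(h_n x')$. By (A1), $h$ is positive or negative with respect to $h_n$. If $h$ is positive with respect to $h_n$, rule (e) gives $Sign(hx) = Sign(x)$, and the defining inequalities of positivity, namely $h h_n x' \leq h_n x' \leq x'$ or $x' \leq h_n x' \leq h h_n x'$, place $hx$ on the same side of $x'$ as $x = h_n x'$ but no closer; combining this with the induction hypothesis applied to the pair $(h_n, x')$, which relates $Sign(x) = Sign(h_n x')$ to the order of $h_n x'$ and $x'$, yields $hx > x$ when $Sign(hx) = +1$ and $hx < x$ when $Sign(hx) = -1$, the strictness coming from $hx \neq x$. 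If $h$ is negative with respect to $h_n$, rule (d) gives $Sign(hx) = -Sign(x)$, and the defining inequalities of negativity, namely $h_n x' \leq h h_n x' \leq x'$ or $x' \leq h h_n x' \leq h_n x'$, place $hx$ strictly between $x'$ and $x$; the sign flip in rule (d) is precisely what is needed so that, after invoking the induction hypothesis for $(h_n, x')$, the sign of $hx$ again predicts its order relative to $x$.

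The main obstacle is the base case at $x = c^-$: unlike the $c^+$ case, the direction of $hc^-$ is not immediate from the definitions of $H^+$ and $H^-$, which only constrain the action on $c^+$, so it must be extracted from the symmetric and converse structure of the algebra. A secondary point requiring care is the bookkeeping of strictness in the inductive step: the positivity and negativity inequalities are stated with $\leq$, and one must use the assumption $hx \neq x$ (equivalently, that $h h_n \ldots h_1 c$ is a canonical representation) to upgrade the relevant comparison to a strict one.
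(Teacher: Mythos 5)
A preliminary remark: the paper does not prove this proposition at all; it is imported from the hedge-algebra literature with the citation to Nguyen and Wechler, so your attempt can only be judged on its own terms rather than against an in-paper argument. On those terms, your induction scheme and your inductive step are correct: writing $x=h_nx'$, axiom (A1) yields the case split, rules (d)--(e) propagate the sign, the positivity/negativity inequalities instantiated at $x'$ locate $hx$ relative to $x$ and $x'$, the induction hypothesis applied to the pair $(h_n,x')$ fixes on which side of $x'$ the term $x$ lies, and canonicity gives $hx\neq x$, hence strictness. The base case $x=c^+$ is also fine.

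The gap sits exactly where you placed the ``main obstacle'', the base case $x=c^-$, and your treatment of it is an assertion, not a proof. Two concrete problems. First, converseness cannot ``fix the direction of $hc^-$'': it only says that a hedge in $H^+$ and a hedge in $H^-$ move any given element in \emph{opposite} directions, and this condition is symmetric under exchanging the two directions; to pin down which way $hc^-$ actually goes you need the negation symmetry of Proposition \ref{prop3} ($-hx=h(-x)$ together with $x<y$ iff $-x>-y$), which gives $hc^-=-(hc^+)$ and hence $hc^-<c^-$ for $h\in H^+$ and $hc^->c^-$ for $h\in H^-$. Second, had you then really ``read off the matching value of $Sign(hc^-)$ from rules (b)--(c)'', you would have found that it does \emph{not} match: clause (b) as printed assigns $Sign(hc^-)=-Sign(c^-)=+1$ for $h\in H^+$, whereas the direction just derived is $hc^-<c^-$ (consistent with the paper's own Example \ref{ex99}, where $Vc^-<c^-<Lc^-$). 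So against Definition \ref{def3} exactly as printed, the proposition is false in this case; the conditions involving $c=c^-$ in clauses (b) and (c) have evidently been swapped in transcription, and the reading under which the result holds is $Sign(hc)=Sign(c)$ for $h\in H^+$ and $Sign(hc)=-Sign(c)$ for $h\in H^-$, for both generators. Your proof goes through once you derive the direction of $hc^-$ via negation and explicitly use this corrected form of the definition; as written, the claim that ``the two directions again pair up correctly with the two sign values'' is precisely the step that fails.
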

In \citeN{HoWech92}, HAs are extended by augmenting two artificial hedges $\Phi$ and $\Sigma$ defined as $\Phi(x) = infimum(H(x))$ and $\Sigma(x) = supremum(H(x))$, for all $x\in X$. 
An HA is said to be \emph{free} if $\forall x\in X$ and $\forall h\in H$, $hx\neq x$.
It is shown that, for a free lin-HA of the variable \emph{Truth} with $H \neq\emptyset$, $\Phi(c^{+}) = \Sigma(c^{-})$, $\Sigma(c^{+}) = 1$ (\emph{AbsolutelyTrue}), and $\Phi(c^{-}) = 0$ (\emph{AbsolutelyFalse}). Let us put $W=\Phi(c^{+}) = \Sigma(c^{-})$ (called the \emph{middle truth value}); we have $0<c^{-}<W<c^{+}<1$.

\begin{definition} [Linguistic truth domain]
A \emph{linguistic truth domain} $\overline{X}$ taken from a lin-HA $\underline{X} = (X,\{c^{-},c^{+}\}, H,\leq)$ is defined as $\overline{X}=X\cup \{0,W,1\}$, where $0, W$, and $1$ are the least, the neutral, and the greatest elements of $\overline{X}$, respectively.
\end{definition}
\begin{proposition}\cite{HoWech92}
\label{prop2}
For any lin-HA $\underline{X} = (X,G,H,\leq)$, the linguistic truth domain $\overline{X}$ is linearly ordered.
\end{proposition}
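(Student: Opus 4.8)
The plan is to reduce the claim to showing that no two distinct elements of $\overline{X}$ are incomparable. Since $0$, $W$, and $1$ are declared to be the least, the neutral, and the greatest elements of $\overline{X}$, they are automatically comparable with everything, so the real content is to prove that any two terms $x,y\in X$ are comparable. Recalling that in a symmetric HA every term is generated from one of the two primary terms, we may write $X = H(c^{-})\cup H(c^{+})$. I would first dispose of the \emph{cross} case, where $x$ and $y$ lie on opposite sides, say $x\in H(c^{-})$ and $y\in H(c^{+})$: using the extended hedges $\Phi,\Sigma$ together with the identities $\Sigma(c^{-})=\sup H(c^{-})=W=\inf H(c^{+})=\Phi(c^{+})$, I obtain $x\leq W\leq y$, so $x$ and $y$ are comparable. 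The symmetric sub-case is identical.

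The main case is when $x$ and $y$ are generated from the same primary term $u\in\{c^{-},c^{+}\}$, and here I would invoke Proposition~\ref{prop1}. Writing canonical representations $x=h_{n}\dots h_{1}u$ and $y=k_{m}\dots k_{1}u$ and letting $j$ be the largest index with $h_{i}=k_{i}$ for all $i<j$, part $(iii)$ of Proposition~\ref{prop1} says that $x$ and $y$ are incomparable if and only if $h_{j}x_{u|j}$ and $k_{j}x_{u|j}$ are incomparable, where $z:=x_{u|j}$ is a common term and, by the stated convention on $j$, one of $h_{j},k_{j}$ may be the identity $I$. Thus it suffices to prove that the two single-hedge images $h_{j}z$ and $k_{j}z$ are always comparable.

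I would establish this by a short case analysis on where $h_{j}$ and $k_{j}$ sit. If $h_{j}=I$ (or $k_{j}=I$), then $h_{j}z=z$, and since every hedge is an ordering operation, $k_{j}z$ is comparable to $z$. If $h_{j}$ and $k_{j}$ lie in the same subset ($H^{+}$ or $H^{-}$), then by linearity of that subset they are $\leq$-comparable as hedges, say $h_{j}\geq k_{j}$; the definition of the hedge order then forces $h_{j}z$ and $k_{j}z$ onto the same side of $z$ in a definite order, so they are comparable. If $h_{j}\in H^{+}$ and $k_{j}\in H^{-}$ (or vice versa), they are converse, so for the particular $z$ one of $h_{j}z,k_{j}z$ is $\geq z$ while the other is $\leq z$, again yielding comparability. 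In every case $h_{j}z$ and $k_{j}z$ are comparable, and hence by Proposition~\ref{prop1}$(iii)$ so are $x$ and $y$.

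Combining the cross case and the same-generator case shows that $X$ itself is linearly ordered, and inserting the bounds as $0 < H(c^{-}) < W < H(c^{+}) < 1$ extends this to a linear order on all of $\overline{X}$. I expect the single-hedge base comparison to be the main obstacle: the delicate point is making the converse/compatible dichotomy fully rigorous, since this is precisely where the linearity of $H^{+}$ and $H^{-}$ is being used, and correctly handling the identity hedge $I$ that Proposition~\ref{prop1}'s convention introduces when the two canonical representations have different lengths.
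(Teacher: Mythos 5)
You cannot be checked against the paper's own argument here, because the paper does not give one: Proposition~\ref{prop2} is imported from \cite{HoWech92} with a citation and no proof. Judged on its own terms, your reconstruction is correct and uses precisely the tools the paper makes available: the decomposition $X=H(c^{-})\cup H(c^{+})$, the reduction via Proposition~\ref{prop1}$(iii)$ of comparability of two terms over a common generator to comparability of the single-hedge images $h_{j}z$ and $k_{j}z$, and the compatible/converse dichotomy together with linearity of $H^{+}$ and $H^{-}$ for that base comparison; in particular your contrapositive reading of $(iii)$ (comparable images imply comparable terms) is exactly right. Two points deserve tightening. First, the identity $\Sigma(c^{-})=W=\Phi(c^{+})$ on which your cross case rests is asserted in the paper only for \emph{free} lin-HAs of the truth variable with $H\neq\emptyset$; for an arbitrary lin-HA you can avoid it entirely by using axiom (A4): from $c^{-}\leq c^{+}$ and the independence of the two generators, (A4) gives $c^{-}\leq hc^{+}$ for every $h$, and two inductions on the length of representations yield $x\leq y$ for all $x\in H(c^{-})$ and $y\in H(c^{+})$, with no appeal to $W$ at all. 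Second, your base cases (the case $h_{j}=I$ and the converse case) lean on the property that hedges are ordering operations, i.e.\ $hx\leq x$ or $hx\geq x$ for all $h$ and $x$; in this paper that property appears only in the motivating discussion preceding the definition of a hedge algebra, not among the axioms (A1)--(A4), so you should either state it explicitly as part of the assumed HA package or derive it from the $Sign$ machinery (Definition~\ref{def3} together with Proposition~\ref{prop99}: $Sign(hx)=+1$ or $-1$ forces $hx>x$ or $hx<x$ respectively, and $Sign(hx)=0$ only when $hx=x$). Neither point breaks the argument; with those two references pinned down, your proof stands as a sound, self-contained justification of a fact the paper leaves to the literature.
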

The usual operations are defined on $\overline{X}$ as follows:
$(i)$ \emph{negation}: given $x = \sigma c$, where $\sigma\in H^{*}$ and $c\in\{c^{+},c^{-}\}$, $y$ is called the \emph{negation} of $x$, denoted by $y = -x$, if $y = \sigma c'$ and $\{c, c'\}$ = $\{c^{+}, c^{-}\}$. For example, $hc^{+}$ is the negation of $hc^{-}$. In particular, $-1=0$, $-0=1$, and $-W=W$;
$(ii)$ \emph{conjunction}: $x\wedge y$ = min($x, y$);
$(iii)$ \emph{disjunction}: $x\vee y$ = max($x, y$). 

\begin{proposition}\cite{HoWech92}
\label{prop3}
For any lin-HA $\underline{X} = (X,G,H,\leq)$, the following hold:
$(i)$ $-hx = h(-x)$ for any $h\in H$;
$(ii)$ $--x=x$;
$(iii)$ $x < y$ iff $-x>-y$.
\end{proposition}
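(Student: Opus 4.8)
The plan is to dispose of $(i)$ and $(ii)$ directly from the definition of negation, and then to derive $(iii)$ from Proposition~\ref{prop1}, using $(i)$ and $(ii)$ as tools. For $(i)$, I would write $x=\sigma c$ with $\sigma\in H^{*}$ and $c\in\{c^{-},c^{+}\}$, so that $hx=h\sigma c$ and hence $-hx=h\sigma c'$, where $\{c,c'\}=\{c^{+},c^{-}\}$; on the other side $-x=\sigma c'$, so $h(-x)=h\sigma c'$, giving $-hx=h(-x)$. Since negation only swaps the generator and leaves the hedge string untouched, the identity is insensitive to whether $h\sigma c$ is canonical. For $(ii)$, applying the definition twice returns the original generator, $--x=-(\sigma c')=\sigma c=x$. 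The three special elements are handled by the convention $-1=0$, $-0=1$, $-W=W$, for which both identities are immediate.

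For $(iii)$, I would first reduce the biconditional to a single implication. Because $\overline{X}$ is linearly ordered (Proposition~\ref{prop2}) and $-(\cdot)$ is an involution by $(ii)$ (hence a bijection with $-x=-y\iff x=y$), it suffices to prove $x<y\Rightarrow -x>-y$ for all $x,y$: the converse and the equality case then follow by trichotomy. I would split $\overline{X}$ into the ``negative'' part $N=\{0\}\cup H(c^{-})$, the value $W$, and the ``positive'' part $P=H(c^{+})\cup\{1\}$. By the chain $0<c^{-}<W<c^{+}<1$ and semantic heredity (axioms (A2), (A4)) this gives $N<W<P$, and negation maps $N$ and $P$ onto each other while fixing $W$. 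Hence every comparison $x<y$ in which $x$ and $y$ lie in different parts (or one of them equals $W$, $0$, or $1$) reverses under negation for free.

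The substantive case is $x,y$ in the same part, say both in $H(c^{+})$ with canonical representations $x=h_{n}\dots h_{1}c$ and $y=k_{m}\dots k_{1}c$ relative to the common generator $c=c^{+}$. By $(i)$ and $(ii)$, $-x=h_{n}\dots h_{1}c'$ and $-y=k_{m}\dots k_{1}c'$ are again canonical: if $h_{n}$ were redundant on $h_{n-1}\dots h_{1}c'$, then applying $-$ and using $-h_{n}w=h_{n}(-w)$ it would be redundant on $h_{n-1}\dots h_{1}c$ as well. The two pairs share their hedge strings, so Proposition~\ref{prop1} selects the \emph{same} first-difference index $j$ for $(x,y)$ and for $(-x,-y)$, and the comparison collapses to a single-hedge comparison of $h_{j}w$ against $k_{j}w$ with $w=h_{j-1}\dots h_{1}c$ in the first pair, and of $h_{j}(-w)$ against $k_{j}(-w)$ with $-w=h_{j-1}\dots h_{1}c'$ in the second (one of $h_{j},k_{j}$ may be the identity $I$). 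Thus $(iii)$ reduces to the base claim that $h_{j}w<k_{j}w$ implies $h_{j}(-w)>k_{j}(-w)$, i.e.\ that replacing the generator by its opposite reverses single-hedge comparisons.

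I expect this base claim to be the main obstacle. Its proof should run through the \emph{Sign} function: by Definition~\ref{def3} and Proposition~\ref{prop99}, whether a hedge raises or lowers a term is governed by the sign of that term together with the hedge's membership in $H^{+}$ or $H^{-}$, while the relative strength of two hedges on the same side is fixed by the linear orders of $H^{+}$ and $H^{-}$ (from the definition of a lin-HA) and by the monotonicity in axiom (A3). Passing from $w$ to $-w$ flips the controlling sign, so each hedge that pushed $w$ up now pushes $-w$ down and conversely, while the strength order is preserved; the net effect is precisely the reversal $h_{j}w<k_{j}w\Rightarrow h_{j}(-w)>k_{j}(-w)$. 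The remaining minor points to pin down are the separation $N<W<P$ and the preservation of canonical form under negation, both of which I would settle with the heredity axioms and $(i)$--$(ii)$ before invoking Proposition~\ref{prop1}.
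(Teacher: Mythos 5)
The paper never proves Proposition~\ref{prop3}: it is imported verbatim from \citeN{HoWech92}, so your reconstruction cannot be matched against an in-paper argument and must stand on its own. Most of it does. Parts $(i)$ and $(ii)$ are indeed immediate from the representation-based definition of negation, with the same silent assumption the paper itself makes, namely that $-x$ does not depend on the chosen representation $\sigma$ (that well-definedness is the genuinely nontrivial content hidden behind the citation). Your reduction of $(iii)$ is also sound: one implication suffices by involution plus linearity (Proposition~\ref{prop2}); the partition into $\{0\}\cup H(c^{-})$, $\{W\}$, $H(c^{+})\cup\{1\}$ disposes of all cross-part comparisons; negation preserves canonical representations; and since $x$, $-x$ (resp.\ $y$, $-y$) carry identical hedge strings, Proposition~\ref{prop1} picks the same first-difference index for both pairs and collapses everything to the single-hedge comparison.

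The gap is exactly where you predicted, but it is sharper than you think. The principle you invoke for the base claim — that ``whether a hedge raises or lowers a term is governed by the sign of that term together with the hedge's membership in $H^{+}$ or $H^{-}$'' — is false. In the lin-HA of Example~\ref{ex1}, the terms $Lc^{+}$ and $Pc^{+}$ have the same sign $-1$, and $V\in H^{+}$, yet $VLc^{+}<Lc^{+}$ while $VPc^{+}>Pc^{+}$ (compare the ordering listed in Example~\ref{ex99}): the direction is determined by whether $V$ is positive or negative w.r.t.\ the \emph{outermost hedge} of the term ($L$ resp.\ $P$), i.e.\ by clauses d)--e) of Definition~\ref{def3}, not by $V$'s membership in $H^{+}$. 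The repair is to prove the lemma $Sign(\sigma c^{-})=-Sign(\sigma c^{+})$ by induction on the length of $\sigma$: the inductive step multiplies both sides by the same factor, because the positive/negative relation between two hedges (axiom (A1)) is term-independent, and the degenerate clause f) is synchronized on both sides by well-definedness of negation. This lemma gives $Sign(h(-w))=-Sign(hw)$, after which your case analysis closes: if $hw$ and $kw$ straddle $w$ (or one equals $w$), reversal is immediate from Proposition~\ref{prop99}; if both lie on the same side of $w$, then $h$ and $k$ belong to the same linearly ordered set $H^{+}$ or $H^{-}$, the hedge ordering (``modifies more'') is term-independent by its very definition, and injectivity of negation (from $(ii)$) yields strictness. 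One final caveat: clauses b)--c) of Definition~\ref{def3} as printed give $Sign(hc^{-})=Sign(hc^{+})$, which contradicts Proposition~\ref{prop99} together with Example~\ref{ex99} and would make $(iii)$ itself false; any Sign-based argument, yours included, must use the intended convention that $Sign(hc)=Sign(c)$ iff $h\in H^{+}$.
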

It is shown that the identity hedge $I$ is the least element of the sets $H^{+}\cup \{I\}$ and $H^{-}\cup \{I\}$, i.e., $\forall h \in H$, $h\geq I$. 

\begin{definition}[Extended ordering relation]
An \emph{extended ordering relation} on $H \cup \{I\}$, denoted by $\leq_e$, is defined based on the ordering relations on $H^{+}\cup \{I\}$ and $H^{-}\cup \{I\}$ as follows.
Given $h,k\in H\cup\{I\}$, $h\leq_e k$ iff: 
($i$) $h\in H^{-}, k\in H^{+}$; or 
($ii$) $h,k\in H^{+}\cup\{I\}$ and $h\leq k$; or 
($iii$) $h,k\in H^{-}\cup\{I\}$ and $h\geq k$. 
We denote $h<_e k$ iff $h\leq_e k$ and $h\neq k$.
\end{definition}
\begin{example} \label{ex2}
For the HA in Example \ref{ex1}, in $H\cup\{I\}$ we have $L<_e P<_e I<_e M<_e V$.
\end{example}
It is straightforward to show the following:
\begin{proposition}
\label{prop4}
For all $h,k\in H\cup \{I\}$, if $h<_ek$, then $hc^{+}< kc^{+}$.
\end{proposition}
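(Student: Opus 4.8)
The plan is to prove the implication by a case analysis following the three clauses in the definition of $\leq_e$, reducing each to the defining properties of $H^+$ and $H^-$ --- that $hc^+>c^+$ for $h\in H^+$ and $hc^+<c^+$ for $h\in H^-$ --- together with $Ic^+=c^+$ and the semantic meaning of the hedge order. Since $\overline{X}$ is linearly ordered (Proposition \ref{prop2}), all the elements involved are pairwise comparable, so the transitivity steps below are justified. Throughout, assume $h<_e k$, i.e. $h\leq_e k$ and $h\neq k$.

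Two groups of cases are immediate. First, clause (i): if $h\in H^-$ and $k\in H^+$, then $hc^+<c^+<kc^+$ at once, so $hc^+<kc^+$. Second, the subcases of (ii) and (iii) in which one hedge is the identity. In clause (ii) the only case involving $I$ is $h=I$ with $k\in H^+$ (the case $k=I$ is impossible, as $h<k$ and $I$ is the least element of $H^+\cup\{I\}$), and then $hc^+=c^+<kc^+$. Symmetrically, in clause (iii) the only case involving $I$ is $k=I$ with $h\in H^-$, giving $hc^+<c^+=kc^+$.

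The remaining cases are those where $h$ and $k$ are proper hedges lying in the same subset. In clause (ii), $h,k\in H^+$ with $h<k$ means $k\geq h$, so by the definition of the hedge order, taking $x=c^+$ and using that both $hc^+>c^+$ and $kc^+>c^+$, the alternative $c^+\leq hc^+\leq kc^+$ must hold; thus $hc^+\leq kc^+$. In clause (iii), $h,k\in H^-$ with $h>k$ means $h\geq k$, and the definition at $x=c^+$ gives $hc^+\leq kc^+\leq c^+$, again $hc^+\leq kc^+$. Hence in both cases we obtain the non-strict inequality $hc^+\leq kc^+$.

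What remains, and what I expect to be the only real obstacle, is the strictness: upgrading $hc^+\leq kc^+$ to $hc^+<kc^+$ when $h\neq k$ lie in the same subset. The map $h\mapsto hc^+$ sends the chain $H^+\cup\{I\}$ (respectively $H^-\cup\{I\}$) into the chain $\overline{X}$ and is order-preserving by the previous paragraph; on chains, an order-preserving map is strictly order-preserving precisely when it is injective, so it suffices to show that distinct hedges in one subset act differently on $c^+$. I would obtain this from the uniqueness of canonical representations: for proper $h\neq k$, both $hc^+$ and $kc^+$ are length-one canonical representations w.r.t. $c^+$ (each is canonical since $hc^+\neq c^+\neq kc^+$), so $hc^+=kc^+$ would assign the same element two distinct canonical representations, contradicting the uniqueness that underlies the comparison procedure of Proposition \ref{prop1} (one may equivalently appeal to axiom (A3)). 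With injectivity in hand, $h\neq k$ forces $hc^+\neq kc^+$, and combined with $hc^+\leq kc^+$ this yields $hc^+<kc^+$, completing every case.
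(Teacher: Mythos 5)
Your case analysis and your derivation of the non-strict inequality $hc^{+}\leq kc^{+}$ are correct: clause ($i$) and the identity subcases follow from the definitions of $H^{+}$, $H^{-}$ and the minimality of $I$, and within $H^{+}$ (resp.\ $H^{-}$) evaluating the hedge ordering at $x=c^{+}$ and discarding the alternative that contradicts $hc^{+}>c^{+}$ (resp.\ $kc^{+}<c^{+}$) gives exactly what you claim. Note that the paper offers no proof at all --- it introduces the proposition with ``It is straightforward to show the following'' --- so this part of your reconstruction is presumably the intended argument, and you are also right that strictness is the one point that needs care.

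The gap is precisely in the strictness step: neither authority you invoke delivers injectivity of $h\mapsto hc^{+}$. Proposition \ref{prop1} neither presupposes nor implies uniqueness of canonical representations: applied to $x=hc^{+}$ and $y=kc^{+}$ with $h\neq k$, one gets $j=1$ and $x_{u|1}=c^{+}$, so part ($i$) reads ``$x=y$ iff $hc^{+}=kc^{+}$,'' a tautology; that proposition is designed to compare terms whose representations differ and is silent on whether two distinct one-hedge canonical representations can denote the same element. Axiom (A3) points the wrong way as well: it says that if $hx\neq kx$ then $hx$ and $kx$ are independent, not that $h\neq k$ forces $hc^{+}\neq kc^{+}$; in fact, assuming $hc^{+}=kc^{+}$ and applying (A3) with the roles of $h$ and $k$ exchanged only forces all hedges to coincide on $hc^{+}$ and on every term generated from it, which the stated axioms do not evidently rule out. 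Nor does linearity of the order on $H^{+}$ repair this: antisymmetry guarantees that $h$ and $k$ act strictly differently at \emph{some} term $x_{0}$, but nothing places $x_{0}$ at $c^{+}$. So the injectivity you need is really an implicit non-degeneracy assumption of the paper --- visible, for instance, in Example \ref{ex3}, which deduces the strict chain $h_{-q}c^{+}<\dots<c^{+}<\dots<h_{p}c^{+}$ from the $<_{e}$-chain of hedges, and in Example \ref{ex99}, which enumerates all terms $\sigma c^{\pm}$ as pairwise distinct truth values. To close your proof you must either state that assumption explicitly (distinct hedges applied to a primary term yield distinct terms) or actually derive it from (A1)--(A4); the citations you give do not do so.
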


\subsection{Inverse mappings of hedges}
In fuzzy logic, knowledge is usually represented in terms of pairs consisting of a \emph{vague sentence} and its \emph{degree of truth}, which is also expressed in linguistic terms. 
A vague sentence can be represented by an expression $u(x)$, where $x$ is a variable or a constant, and $u$ is a fuzzy predicate. For example, the assertion ``\emph{It is quite true that John is studying hard}" can be represented by a pair $(study\_hard(john),Quite True)$. 

According to \citeANP{Zadeh79} \citeNN{Zadeh79,Zadeh75}, the following assessments can be considered to be approximately semantically equivalent: ``\emph{It is very true that Lucia is young}" and ``\emph{It is true that Lucia is very young}". That means if we have $(young(lucia),VeryTrue)$, we also have $(Very\:young(lucia),True)$. Thus, the hedge ``\emph{Very}" can be moved from the truth value to the fuzzy predicate. This is generalised to the following rule:
\[(R1)\;\;\;(u(x),hTrue)\Rightarrow(hu(x),True)\]
However, the rule is not complete, i.e., in some cases we cannot use it to deduce the truth value of a hedge-modified fuzzy predicate from that of the original. 
For instance, given $(young(lucia),VeryTrue)$, we cannot compute the truth value of $Probably\:young(lucia)$ using the above rule. The notion of an \emph{inverse mapping of a hedge}, which is an extension of Rule $(R1)$, provides a solution to this problem. 

The idea behind this notion is that the truth value of a hedge-modified fuzzy predicate can be a function of that of the original. 
In other words, if we modify a fuzzy predicate by a hedge, its truth value will be changed by the inverse mapping of that hedge. 
Now, we will work out the conditions that an inverse mapping of a hedge should satisfy. We denote the inverse mapping of a hedge $h$ by $h^{-}$. First, since $h^{-}$ is an extension of Rule $(R1)$, we should have $h^{-}(hTrue) = True$. Second, intuitively, the more true a fuzzy predicate is, the more true is its hedge-modified one, so $h^{-}$ should be monotone, i.e., if $x\geq y$, then $h^{-}(x)\geq h^{-}(y)$. 

Third, it seems to be natural that by modifying a fuzzy predicate using a hedge in $H^{+}$ such as \emph{Very} or \emph{More}, we accentuate the fuzzy predicate, so the truth value should decrease. For example, the truth value of 
$Very\:young(lucia)$ should be less than that of $young(lucia)$. Similarly, by applying a hedge in $H^{-}$ such as \emph{Probably} or \emph{Little}, we deaccentuate the fuzzy predicate; thus, the truth value should increase. For example, the truth value of $Probably\:high\_income(tom)$ should be greater than that of $high\_income(tom)$. This is also in accordance with the fuzzy-set-based interpretation of hedges \cite{Zadeh72}, in which hedges such as \emph{Very} are called \emph{accentuators} and can be defined as $Very\;x = x^{1+\alpha}$, where $x$ is a fuzzy predicate expressed by a \emph{fuzzy set} and $\alpha>0$, and hedges such as \emph{Probably} are called \emph{deaccentuators} and can be defined as $Probably\;x = x^{1-\alpha}$ (note that the degree of membership of each element in $x$ is in [0,1]). In summary, this can be formulated as: for all $h,k \in H\cup \{I\}$ such that $h\leq_e k$ and for all $x$, we should have $h^{-}(x)\geq k^{-}(x)$. As a convention, we always assume that for all $x$, $I^{-}(x)=x$.

\begin{definition}[An inverse mapping of a hedge]
Given a lin-HA $\underline{X}=(X,\{c^{+}, c^{-}\},H,\leq)$ and a hedge $h\in H$, a mapping $h^{-}: \overline{X} \rightarrow \overline{X}$ is called an \emph{inverse mapping} of $h$ iff it satisfies the following conditions:
\begin{eqnarray}
h^{-}(h c^{+}) =  c^{+} \label{eq:im1} \\
x \geq y \Rightarrow h^{-}(x) \geq h^{-}(y) \label{eq:im2}\\
h\leq_e k \Rightarrow h^{-}(x)\geq k^{-}(x) \label{eq:im3}
\end{eqnarray}
where $k^{-}$ is an inverse mapping of another hedge $k\in H\cup \{I\}$.
\end{definition}
Since 0, W, and 1 are fixed points, i.e., $\forall x\in \{0,W,1\}$ and $\forall h\in H$, $hx=x$ \cite{HoWech92}, it is reasonable to assume that $\forall h\in H$, $h^{-}(0)=0, h^{-}(W)=W$, and $h^{-}(1)=1$. 

We show why we have to use lin-HAs in order to define the notion of an inverse mapping of a hedge. Consider an HA containing two incomparable hedges $P\;(Probably), A\;(Approximately)\in H^{-}$. We can see that since $Ac^{+}$ and $Pc^{+}$ are incomparable, $P^{-}(Ac^{+})$ and $P^{-}(Pc^{+})=c^{+}$ should be either incomparable or equal. The two values cannot be incomparable since every truth value is comparable to $c^{+}$ and $c^{-}$, and it might not be very meaningful to keep both $P$ and $A$ in the set of hedges if we have $P^{-}(Ac^{+})=P^{-}(Pc^{+})=c^{+}$.

Inverse mappings of hedges always exist; in the following, we give an example of inverse mappings of hedges for a general lin-HA. 

\begin{example} \label{ex3}
Consider a lin-HA $\underline{X}=(X,\{c^{+}, c^{-}\},H,\leq)$ with $H^{-}=\{h_{-q}, h_{-q+1}, ..., h_{-1}\}$ and $H^{+}=\{h_{1}, h_{2}, ..., h_{p}\}$, where $p,q\geq 1$. Let us denote $h_{0}=I$. Without loss of generality, we suppose that $h_{-q}> h_{-q+1}> ...> h_{-1}$ and $h_{1}< h_{2}< ...< h_{p}$. Therefore, we have $h_{-q}<_e h_{-q+1}<_e ...<_e h_{-1}<_e h_{0}<_e h_{1}<_e h_{2}$ $<_e ...<_e h_{p}$, and thus $h_{-q}c^{+}< ...< h_{-1}c^{+}< c^{+}< h_{1}c^{+}< ...< h_{p}c^{+}$. We always assume that, for all $k_1,k_2\in H$ and $c\in \{c^{+}, c^{-}\}$, $k_2k_1c\neq k_1c$, i.e., $Sign(k_2k_1c) \neq 0$.

First, we build inverse mappings of hedges $h_{r}^{-}(x)$, for all $x\in H(c^{+})$, as follows:

$(i)$ $x=c^{+}$. For all $r$ such that $-min(p,q)\leq r\leq min(p,q)$, we put $h_{r}^{-}(c^{+})=h_{-r}c^{+}$. In particular, $h_{0}^{-}(c^{+})=h_{0}c^{+}=c^{+}$. If $p>q$, for all $q+1\leq r\leq p$, $h_{r}^{-}(c^{+})=W$. If $p<q$, for all $-(p+1)\geq r\geq -q$, $h_{r}^{-}(c^{+})=1$. It can be easily verified that, for all $h\in H\cup \{I\}$, $h^{-}(c^{+})$ satisfies Condition (\ref{eq:im3}).

$(ii)$ $x=\sigma h_{s} c^{+}$, where $\sigma \in H^{*}$ and $h_{s}\neq I$, i.e., $s\neq 0$. If $r=s$, we put $h_{r}^{-}(\sigma h_{r} c^{+})= c^{+}$; hence, Condition (\ref{eq:im1}) is satisfied. Otherwise, we have $r\neq s$. If $s-r< -q$, we put $h_{r}^{-}(\sigma h_{s}c^{+})=W$; if $s-r> p$, we put $h_{r}^{-}(\sigma h_{s}c^{+})=1$. Otherwise, we have $-q\leq s-r\leq p$.

For a certain hedge $k$, $Sign(h_pkc^{+})$ can be either -1 or +1 . If $Sign(h_pkc^{+})=+1$, by Proposition \ref{prop99}, we have $kc^{+}< h_pkc^{+}$. Thus, it follows that $h_{-q}kc^{+}< ... < h_{-1}kc^{+}< kc^{+}<h_{1}kc^{+}< ... < h_{p}kc^{+}$. For example, we have $Sign(VPc^{+})=+1$ and $LPc^{+}< PPc^{+}< Pc^{+}< MPc^{+}< VPc^{+}$. Similarly, if $Sign(h_pkc^{+})=-1$, we have $h_{-q}kc^{+}> ... > h_{-1}kc^{+}> kc^{+}>h_{1}kc^{+}> ... > h_{p}kc^{+}$. For instance, we have $Sign(VLc^{+})=-1$ and $LLc^{+}> PLc^{+}> Lc^{+}> MLc^{+}> VLc^{+}$. In summary, the ordering of the elements in the set $\{h_{t}kc^{+}:-q\leq t\leq p\}$ can have one of the two above reverse directions.
Therefore, for a pair $(s,s-r)$, there are two cases:

(a) The orderings of the elements in the sets $\{h_{t}h_{s}c^{+}:-q\leq t\leq p\}$  and $\{h_{t}h_{s-r}c^{+}:-q\leq t\leq p\}$
have the same direction, i.e., we have $h_{-q}h_{s}c^{+}< ... < h_{-1}h_{s}c^{+}< h_{s}c^{+}< h_{1}h_{s}c^{+}< ... < h_{p}h_{s}c^{+}$ and $h_{-q}h_{s-r}c^{+}< ... < h_{-1}h_{s-r}c^{+}< h_{s-r}c^{+}< h_{1}h_{s-r}c^{+}< ... < h_{p}h_{s-r}c^{+}$, or $h_{-q}h_{s}c^{+}> ... > h_{-1}h_{s}c^{+}> h_{s}c^{+}> h_{1}h_{s}c^{+}> ... > h_{p}h_{s}c^{+}$ and $h_{-q}h_{s-r}c^{+}> ... > h_{-1}h_{s-r}c^{+}> h_{s-r}c^{+}> h_{1}h_{s-r}c^{+}> ... > h_{p}h_{s-r}c^{+}$. In this case, we put $h_{r}^{-}(\sigma h_{s}c^{+})=\sigma h_{s-r} c^{+}$. 

(b) The orderings have reverse directions, i.e., we have $h_{-q}h_{s}c^{+}< ... < h_{-1}h_{s}c^{+}< h_{s}c^{+}< h_{1}h_{s}c^{+}< ... < h_{p}h_{s}c^{+}$ and $h_{-q}h_{s-r}c^{+}> ... > h_{-1}h_{s-r}c^{+}> h_{s-r}c^{+}> h_{1}h_{s-r}c^{+}> ... > h_{p}h_{s-r}c^{+}$, or 
$h_{-q}h_{s}c^{+}> ... > h_{-1}h_{s}c^{+}> h_{s}c^{+}> h_{1}h_{s}c^{+}> ... > h_{p}h_{s}c^{+}$ and $h_{-q}h_{s-r}c^{+}< ... < h_{-1}h_{s-r}c^{+}< h_{s-r}c^{+}< h_{1}h_{s-r}c^{+}< ... < h_{p}h_{s-r}c^{+}$.
We put $h_{r}^{-}(\sigma h_{s}c^{+})=\delta h_{s-r}c^{+}$, where $\delta$ is obtained as follows. 
If $\sigma$ is empty, then so is $\delta$. Otherwise, suppose that $\sigma=\sigma'h_{t}$, where $t\neq 0$. 
If $-q\leq -t \leq p$, we put $\delta = h_{-t}$; if $-t<-q$, then $\delta = h_{-q}$; if $-t>p$, then $\delta = h_{p}$.  It can be seen that what we have done here is to make inverse mappings of hedges monotone.

In particular, if $r=0$, then $s=s-r$. Thus, (b) is not the case, and by (a), we have
$h_{0}^{-}(\sigma h_{s} c^{+})=\sigma h_{s} c^{+}$; this complies with the assumption $I^{-}(x)=x$, for all $x$.

Second, for $x\in H(c^{-})$, we define $h^{-}_{r}(x)$ based on the above case as follows. Note that from $x\in H(c^{-})$, we have $-x\in H(c^{+}$). 
If $-min(p,q)\leq r\leq min(p,q)$, we put $h_{r}^{-}(x)=-h_{-r}^{-}(-x)$; if $p>q$, for all $q+1\leq r\leq p$, $h_{r}^{-}(x)=-h_{-q}^{-}(-x)$; if $p<q$, for all $-(p+1)\geq r\geq -q$, $h_{r}^{-}(x)=-h_{p}^{-}(-x)$. 

Finally, as usual, $h^{-}(1)=1$, $h^{-}(W)=W$, and $h^{-}(0)=0$, for all $h$.

It can be easily seen that, for all $x\in H(c^{+})$ and $h\in H\cup \{I\}$, $h^{-}(x)\in H(c^{+})\cup \{W,1\}$, and, 
for all $x\in H(c^{-})$ and $h\in H\cup \{I\}$, $h^{-}(x)\in H(c^{-})\cup \{W,0\}$.
\end{example}
It has been shown in the above example that the inverse mappings satisfy Condition (\ref{eq:im1}). In the following, we prove that they also satisfy Conditions (\ref{eq:im2}) and (\ref{eq:im3}).
\begin{proposition}
The mappings defined above satisfy Condition (\ref{eq:im3}), i.e., $h\leq_e k \Rightarrow h^{-}(x)\geq k^{-}(x)$.
\end{proposition}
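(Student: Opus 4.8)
The plan is to exploit the fact that, under the indexing fixed in the example, $H\cup\{I\}$ is totally ordered by $\leq_e$ as $h_{-q}<_e\cdots<_e h_{-1}<_e h_0<_e h_1<_e\cdots<_e h_p$, so that Condition~(\ref{eq:im3}) is equivalent to the single statement that, for each fixed $x$, the value $h_r^-(x)$ is non-increasing as $r$ grows. I would first dispose of the easy arguments: for $x\in\{0,W,1\}$ every $h_r^-$ fixes $x$, so the values are constant; and for $x\in H(c^-)$ I would use $-x\in H(c^+)$ together with the defining relation $h_r^-(x)=-h_{-r}^-(-x)$ and Proposition~\ref{prop3}$(iii)$ (negation reverses $\leq$), which turns ``non-increasing in $r$ for $-x$'' into ``non-increasing in $r$ for $x$''; the three boundary sub-ranges of the $H(c^-)$ definition are checked by the same reflection. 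This reduces everything to $x\in H(c^+)$.

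For $x=c^+$ the claim is immediate from the explicit formula: on $-\min(p,q)\le r\le\min(p,q)$ we have $h_r^-(c^+)=h_{-r}c^+$, which decreases in $r$ because $h_{-r}c^+$ decreases in $r$ by Proposition~\ref{prop4}, and the flat outer pieces (value $W$ for large $r$, value $1$ for very negative $r$) fit on because every element of $H(c^+)$ lies strictly between $W$ and $1$ (recall $0<c^-<W<c^+<1$ and $H(c^+)\subseteq(W,1)$).

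The substantive case is $x=\sigma h_s c^+$. Writing $t=s-r$, so that $t$ decreases as $r$ increases, the value $h_r^-(x)$ is $1$ when $t>p$, $W$ when $t<-q$, equal to $c^+$ when $t=0$ (i.e.\ $r=s$), and otherwise an \emph{interior} value of the form $\sigma h_t c^+$ (case~(a)) or $\delta h_t c^+$ (case~(b)). The key observation I would make is that, regardless of which case occurs, every interior value is a string of hedges applied to $c^+$ whose \emph{innermost} hedge (the one adjacent to $c^+$) is exactly $h_t=h_{s-r}$. Hence, for two distinct interior arguments $r$ and $r'$, their canonical representations over the common base $u=c^+$ (canonical since, by the standing assumption of the example, no hedge is vacuous) already differ at the innermost position, so by Proposition~\ref{prop1}$(ii)$ their order is precisely the order of $h_{s-r}c^+$ and $h_{s-r'}c^+$ --- the outer prefix, be it $\sigma$ or $\delta$, is irrelevant. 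Since $h_tc^+$ is strictly increasing in $t$ by Proposition~\ref{prop4}, the interior values are monotone in $t$, which is the desired direction in $r$.

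It remains to glue in the special value $c^+$ at $t=0$ and the two flat outer regions, and this is where the only real obstacle lies. The point $t=0$ is a genuine jump (the formal interpolant $\sigma c^+$ is replaced by $c^+$), so monotonicity across it does not follow from the innermost reduction. Here I would invoke axiom~(A4): for $t>0$ we have $h_tc^+>c^+$ and $c^+\notin H(h_tc^+)$, so applying the prefix hedge-by-hedge and using (A4) repeatedly gives $(\text{prefix})\,h_tc^+>c^+$; symmetrically, for $t<0$ the interior value is $<c^+$. Thus all positive-$t$ interior values exceed $c^+$ and all negative-$t$ ones fall below it, so $c^+$ sits correctly at $t=0$. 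Finally the outer $1$- and $W$-regions attach monotonically because every interior value lies in $H(c^+)\subseteq(W,1)$, hence below $1$ (the value for $t>p$) and above $W$ (the value for $t<-q$). The subtlety to flag is precisely this (A4)-step at $t=0$; by contrast, the elaborate case~(a)/(b) split and the $\delta$-construction are harmless for~(\ref{eq:im3}), since the innermost reduction ignores the prefix --- that construction is needed only for monotonicity of $h_r^-$ in its argument, i.e.\ Condition~(\ref{eq:im2}).
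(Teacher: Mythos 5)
Your proof is correct and takes essentially the same route as the paper's: the same decomposition ($\{0,W,1\}$ trivially, $H(c^{-})$ by negation via Proposition \ref{prop3}, and the main case $x=\sigma h_{s}c^{+}$ settled by noting that both inverse-mapping values carry the innermost hedge $h_{s-r}$, so Proposition \ref{prop1} reduces their comparison to that of $h_{s-r_{1}}c^{+}$ and $h_{s-r_{2}}c^{+}$, the prefixes being irrelevant). The remaining differences are matters of detail only: you verify the $x=c^{+}$ case explicitly where the paper defers to Example \ref{ex3}, and you glue in the $t=0$ endpoint with axiom (A4), whereas the paper absorbs that endpoint into the very same Proposition \ref{prop1} comparison by writing $c^{+}=h_{0}c^{+}$ and invoking the identity-hedge convention, so your (A4) detour, while valid, is not actually needed.
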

\begin{proof}
We prove that if $h <_e k$, then $h^{-}(x)\geq k^{-}(x)$.
Assume that $h = h_{r_{1}}$, $k = h_{r_{2}}$, where $r_{1}<r_{2}$.

First, we prove the case $x\in H(c^{+})$. 
The case $x=c^{+}$ has been shown to satisfy Condition (\ref{eq:im3}) in Example \ref{ex3}.
Consider the case $x=\sigma h_{s}c^{+}$, where $s\neq 0$. 
From $r_{1}<r_{2}$ we have $s-r_{1}>s-r_{2}$. 
The case $s-r_{2}<-q$, i.e., $h_{r_{2}}^{-}(\sigma h_{s}c^{+})=W$, is trivial; so is the case $s-r_{1}>p$, i.e., $h_{r_{1}}^{-}(\sigma h_{s}c^{+})=1$. Otherwise, $-q\leq s-r_{2}<s-r_{1}\leq p$; thus, $h^{-}(x)=\delta_{1} h_{s-r_{1}}c^{+}$ and $k^{-}(x)=\delta_{2} h_{s-r_{2}}c^{+}$, for some $\delta_{1}$ and $\delta_{2}$.
Since $h_{s-r_{1}}c^{+}> h_{s-r_{2}}c^{+}$, by Proposition \ref{prop1}, we have $h^{-}(x)> k^{-}(x)$.

Second, consider the case $x\in H(c^{-})$. Since $-x\in H(c^{+})$, from the above case, we have, for all $t$, $h_{p}^{-}(-x)\leq h_{t}^{-}(-x)\leq h_{-q}^{-}(-x)$, and by Proposition \ref{prop3}, $-h_{p}^{-}(-x)\geq -h_{t}^{-}(-x)\geq -h_{-q}^{-}(-x)$. 
If $-r_{1}>p$, then $h_{r_{1}}^{-}(x)=-h_{p}^{-}(-x)$; if $-r_{2}<-q$, then $h_{r_{2}}^{-}(x)=-h_{-q}^{-}(-x)$. Thus, we always have $h_{r_{1}}^{-}(x)\geq h_{r_{2}}^{-}(x)$.
Otherwise, $p\geq -r_{1}>-r_{2}\geq -q$; thus, $h_{r_{1}}^{-}(x)=-h_{-r_{1}}^{-}(-x)$ and $h_{r_{2}}^{-}(x)=-h_{-r_{2}}^{-}(-x)$. We have $h_{-r_{1}}^{-}(-x) \leq h_{-r_{2}}^{-}(-x)$; thus, $-h_{-r_{1}}^{-}(-x) \geq -h_{-r_{2}}^{-}(-x)$, i.e., $h_{r_{1}}^{-}(x)\geq h_{r_{2}}^{-}(x)$.

Finally, for $x\in \{0,W,1\}$, we have $h^{-}(x)=k^{-}(x)=x$.
\end{proof}

\begin{proposition}
The mappings defined above satisfy Condition (\ref{eq:im2}), i.e., $x \geq y \Rightarrow h^{-}(x) \geq h^{-}(y)$.
\end{proposition}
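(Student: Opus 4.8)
The plan is to peel off the easy structural cases using the localization and negation properties recorded in Example \ref{ex3}, thereby reducing the whole statement to monotonicity of $h^-_r$ on $H(c^+)$, and then to settle that core by comparing canonical representations through Proposition \ref{prop1}. Throughout, $h=h_r$ is fixed; since $\overline{X}$ is linearly ordered (Proposition \ref{prop2}), I may assume $x\ge y$ and must deduce $h^-_r(x)\ge h^-_r(y)$.

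For the reductions I would use that $h^-_r$ maps $H(c^+)$ into $H(c^+)\cup\{W,1\}$, maps $H(c^-)$ into $H(c^-)\cup\{W,0\}$, and fixes $0,W,1$; combined with $0<H(c^-)<W<H(c^+)<1$, this shows each of the half-lines $\{z\ge W\}$ and $\{z\le W\}$ is mapped into itself. Hence any pair with $x\ge W\ge y$ is immediate, since then $h^-_r(x)\ge W\ge h^-_r(y)$, and this already covers every pair whose members lie on opposite sides of $W$. Within a single side the fixed points $0,W,1$ are extremal and so are disposed of at once, leaving only $x,y\in H(c^+)$ and $x,y\in H(c^-)$ as substantive. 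The negative case folds into the positive one: for $x\ge y$ in $H(c^-)$ we have $-x\le-y$ in $H(c^+)$ by Proposition \ref{prop3}, and the definition gives $h^-_r(x)=-h^-_{r'}(-x)$ for an index $r'$ (namely $-r$, or its clamp $-q$ or $p$) that depends only on $r$; monotonicity of $h^-_{r'}$ on $H(c^+)$ then yields $h^-_{r'}(-x)\le h^-_{r'}(-y)$, and negating once more gives $h^-_r(x)\ge h^-_r(y)$.

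It remains to prove the core: $x>y$ in $H(c^+)$ implies $h^-_r(x)\ge h^-_r(y)$. Writing canonical representations $x=\sigma h_s c^+$ and $y=\tau h_{s'}c^+$ with respect to $c^+$, I would apply Proposition \ref{prop1} to locate the innermost position at which they differ. If the innermost hedges already differ ($s\ne s'$), the comparison is decided at the innermost slot, so $x>y$ means $h_s c^+>h_{s'}c^+$, i.e.\ $s>s'$ (Proposition \ref{prop4}); applying $h^-_r$ turns these innermost hedges into $h_{s-r}$ and $h_{s'-r}$ with $s-r>s'-r$, and after checking that the boundary clamps to $W$ or $1$ and the special value at $r=s$ ($\mapsto c^+$) all respect the order, Proposition \ref{prop1} again decides $h^-_r(x)$ versus $h^-_r(y)$ at the innermost slot in the correct direction, irrespective of whether clause (a) or (b) applies. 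If instead the innermost hedges agree ($s=s'$), then $x$ and $y$ fall under the same clause, since the choice of (a) versus (b) is governed only by $s$ and $r$. Under clause (a) the string $\sigma$ is preserved, so the images $\sigma h_{s-r}c^+$ and $\tau h_{s-r}c^+$ differ at the very same outer position over a common suffix whose only alteration is that the innermost slot changes from $h_s$ to $h_{s-r}$; because clause (a) is exactly the no-reversal case, the sign equalities $Sign(h_t h_s c^+)=Sign(h_t h_{s-r}c^+)$ hold for every $t$, so by Proposition \ref{prop99} the direction in which the outer hedges order the terms is unchanged, and Proposition \ref{prop1} delivers the same ordering for the images.

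The hard part will be clause (b), which is precisely the case the $\delta$-adjustment was designed for. Here $s$ and $s-r$ lie on opposite sides, so replacing $h_s$ by $h_{s-r}$ reverses the direction in which every outer hedge orders the terms; the definition absorbs this by discarding all of $\sigma$ beyond its innermost hedge and sign-reversing that hedge, i.e.\ $\sigma=\sigma'h_t\mapsto\delta\in\{h_{-t},h_{-q},h_p\}$. I would verify monotonicity in two steps. If $x$ and $y$ agree at the two innermost slots (and hence differ only further out), then $\delta$ depends only on the shared data $t,s,r$, so $h^-_r(x)=h^-_r(y)=\delta h_{s-r}c^+$ and the inequality is an equality. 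If they differ at the second-innermost slot, with hedges $h_{t_x}\ne h_{t_y}$, then $x>y$ is decided by $h_{t_x}h_s c^+$ versus $h_{t_y}h_s c^+$, and I must show that $h_{-t_x}h_{s-r}c^+$ versus $h_{-t_y}h_{s-r}c^+$ comes out the same way. This is the single genuinely delicate computation: the reversal of direction (from $h_s$ to $h_{s-r}$) and the negation of the index ($t\mapsto-t$) must cancel exactly, while the clamping map $t\mapsto\delta$ is checked to be order-reversing so that pushing $-t$ to $-q$ or $p$ never inverts the comparison (equal clamps merely produce ties, harmless for the weak inequality). I would carry this out by case-splitting on the global direction and on the signs of $t_x,t_y$, invoking Proposition \ref{prop99} for the sign bookkeeping and Proposition \ref{prop1} once more at the second-innermost slot.
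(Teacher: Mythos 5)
Your proposal is correct and takes essentially the same route as the paper's proof: reduce to $H(c^{+})$ via negation (Proposition \ref{prop3}) and the fixed points $0,W,1$, localize the comparison of canonical representations via Proposition \ref{prop1}, split on whether the innermost hedges differ and then on clause (a) versus clause (b), using sign preservation (Proposition \ref{prop99}) for (a) and the index-negation/clamping cancellation for (b) --- exactly the paper's cases (1), (2), (3.1), (3.2.1), (3.2.2.1), (3.2.2.2) reorganized by clause rather than by position of first difference. The one difference is emphasis: the paper's most laborious case is clause (a) with the first difference lying beyond the second-innermost slot (its cases (3.2.2.2.1)--(3.2.2.2.3), which need Definition \ref{def3} to propagate signs outward along the string and a three-way analysis of the positions of $k_1\gamma h_m h_s c^{+}$ and $k_2\gamma h_m h_s c^{+}$ relative to $\gamma h_m h_s c^{+}$, together with the induced hedge ordering $k_1>k_2$), which your sketch compresses into ``the direction in which the outer hedges order the terms is unchanged,'' whereas the clause-(b) computation you single out as the delicate one is dispatched by the paper fairly mechanically via the clamps.
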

\begin{proof}
Suppose $x>y$. Consider $h_{r}^{-}(x)$ and $h_{r}^{-}(y)$, for some $r$. 

First, we prove the case $x,y\in H(c^{+})$.
There are three possible cases:

(1) $x=c^{+}$ and $y=\sigma h_{t}c^{+}$, where $t<0$. 
If $t-r<-q$, then $h_{r}^{-}(y)=W\leq h_{r}^{-}(x)$; if $-r>p$, then $h_{r}^{-}(x)=1\geq h_{r}^{-}(y)$. Otherwise, $-q\leq t-r<-r\leq p$, thus $h_r^{-}(x)=h_{-r}c^{+}$ and $h_r^{-}(y)=\delta h_{t-r}c^{+}$. Since $h_{-r}c^{+}>h_{t-r}c^{+}$, we have $h_{r}^{-}(x)> h_{r}^{-}(y)$.

(2) $y=c^{+}$ and $x=\sigma h_{t}c^{+}$, where $t>0$. The proof is similar to that of (1).

(3) $x=\sigma h_{t}c^{+}$ and $y=\delta h_{s}c^{+}$, where $t\geq s$. If $s-r<-q$, then $h_{r}^{-}(y)=W\leq h_{r}^{-}(x)$, and if $t-r>p$, then $h_{r}^{-}(x)=1\geq h_{r}^{-}(y)$. Otherwise, $-q\leq s-r\leq t-r\leq p$; thus, $h_r^{-}(x)=\sigma' h_{t-r}c^{+}$ and $h_r^{-}(y)=\delta' h_{s-r}c^{+}$. There are two cases:

(3.1) $t-r>s-r$. Since $h_{t-r}c^{+}>h_{s-r}c^{+}$, by Proposition \ref{prop1}, $h_r^{-}(x)> h_r^{-}(y)$. 

(3.2) $t=s$. Suppose $x=\sigma_{1}h_{m}h_{s}c^{+}$ and $y=\delta_{1}h_{n}h_{s}c^{+}$, where if $m = 0$, then $\sigma_{1}$ is empty, and if $n = 0$, then $\delta_{1}$ is empty. There are two cases:

(3.2.1) $m\neq n$. Since $x>y$, by Proposition \ref{prop1}, $h_{m}h_{s}c^{+}>h_{n}h_{s}c^{+}$. 
If $h_{m}h_{s-r}c^{+}>h_{n}h_{s-r}c^{+}$, by (a), $h_r^{-}(x)=\sigma_{1}h_{m}h_{s-r}c^{+}$ and $h_r^{-}(y)=\delta_{1}h_{n} h_{s-r}c^{+}$. By Proposition \ref{prop1}, $h_{r}^{-}(x)>h_{r}^{-}(y)$.
Otherwise, $h_{m}h_{s-r}c^{+}<h_{n}h_{s-r}c^{+}$. We prove the case $m>n$, and the case $m<n$ can be proved similarly. 
Since $m>n$ and $h_{m}h_{s-r}c^{+}<h_{n}h_{s-r}c^{+}$, we can see that the values $h_{z}h_{s-r}c^{+}$, where $z=p, p-1, ..., -q$, are increasing while the index $z$ is decreasing. Thus, for all $z$, $h_{p}h_{s-r}c^{+}\leq h_{z}h_{s-r}c^{+}\leq h_{-q}h_{s-r}c^{+}$.
If $-m<-q$, then by (b), $h_r^{-}(x)=h_{-q}h_{s-r}c^{+}$.
In any case, $h_r^{-}(y)=h_{z}h_{s-r}c^{+}$, for some $z$. Therefore, $h_r^{-}(x)\geq h_r^{-}(y)$. 
Similarly, if $-n>p$, then by (b), $h_r^{-}(y)=h_{p}h_{s-r}c^{+}$; thus, $h_r^{-}(x)\geq h_r^{-}(y)$.
Otherwise, $-q\leq -m<-n\leq p$. By (b), $h_r^{-}(x)= h_{-m}h_{s-r}c^{+}$ and $h_r^{-}(y)=h_{-n}h_{s-r}c^{+}$. Since $-m<-n$, we have $h_{-m}h_{s-r}c^{+}>h_{-n}h_{s-r}c^{+}$, i.e., $h_r^{-}(x)>h_r^{-}(y)$.

(3.2.2) $m=n$. Since $x>y$, by Proposition \ref{prop1}, there exist $k_1,k_2\in H\cup \{I\}$ and $k_1\neq k_2$, and $\sigma_2, \delta_2, \gamma \in H^{*}$ such that $x=\sigma_{2}k_1\gamma h_{m}h_{s}c^{+}, y=\delta_{2}k_2\gamma h_{m}h_{s}c^{+}$, and $k_1\gamma h_{m}h_{s}c^{+}>k_2\gamma h_{m}h_{s}c^{+}$. Also, since $x>y$, we have $m=n\neq 0$ (as a convention, all hedges appearing before $h_0=I$ in a representation of a value have no effect). There are two cases: either $h_{m}h_{s}c^{+}>h_{s}c^{+}$ or $h_{m}h_{s}c^{+}<h_{s}c^{+}$. We prove the case $h_{m}h_{s}c^{+}>h_{s}c^{+}$, and the other can be proved similarly. Since $h_{m}h_{s}c^{+}>h_{s}c^{+}$, by Proposition \ref{prop99}, $Sign(h_{m}h_{s}c^{+})=+1$. There are two cases:

(3.2.2.1) $h_{m}h_{s-r}c^{+}<h_{s-r}c^{+}$. By (b), in any case, $h^{-}_{r}(x)=h^{-}_{r}(y)$.

(3.2.2.2) $h_{m}h_{s-r}c^{+}>h_{s-r}c^{+}$. By (a), $h_{r}^{-}(x)=\sigma_{2}k_1\gamma h_{m}h_{s-r}c^{+}$ and $h_{r}^{-}(y)=\delta_{2}k_2\gamma h_{m}h_{s-r}c^{+}$. Since $h_{m}h_{s-r}c^{+}>h_{s-r}c^{+}$, $Sign(h_{m}h_{s-r}c^{+})=+1=Sign(h_{m}h_{s}c^{+})$. By Definition \ref{def3}, $Sign(k_1\gamma h_{m}h_{s-r}c^{+})=Sign(k_1\gamma h_{m}h_{s}c^{+})$ and $Sign(k_2\gamma h_{m}h_{s-r}c^{+})=Sign(k_2\gamma h_{m}h_{s}c^{+})$. Since $k_1\gamma h_{m}h_{s}c^{+}>k_2\gamma h_{m}h_{s}c^{+}$, there are three cases:

(3.2.2.2.1) $k_1\gamma h_{m}h_{s}c^{+}>k_2\gamma h_{m}h_{s}c^{+}\geq \gamma h_{m}h_{s}c^{+}$. 
Thus, by definition, $k_1>k_2$. Moreover, by Proposition \ref{prop99}, $Sign(k_1\gamma h_{m}h_{s}c^{+})=+1$ and $Sign(k_2\gamma h_{m}h_{s}c^{+})\in \{0,+1\}$. Thus, $Sign(k_1\gamma h_{m}h_{s-r}c^{+})=+1$, i.e., $k_1\gamma h_{m}h_{s-r}c^{+}>\gamma h_{m}h_{s-r}c^{+}$. Since $k_1>k_2$, $k_1\gamma h_{m}h_{s-r}c^{+}\geq k_2\gamma h_{m}h_{s-r}c^{+}\geq \gamma h_{m}h_{s-r}c^{+}$; thus, $h^{-}_{r}(x)\geq h^{-}_{r}(y)$.

(3.2.2.2.2) $\gamma h_{m}h_{s}c^{+}\geq k_1\gamma h_{m}h_{s}c^{+}>k_2\gamma h_{m}h_{s}c^{+}$. The proof is similar to that of (3.2.2.2.1).

(3.2.2.2.3) $k_1\gamma h_{m}h_{s}c^{+}\geq \gamma h_{m}h_{s}c^{+}\geq k_2\gamma h_{m}h_{s}c^{+}$. By Proposition \ref{prop99}, $Sign(k_1\gamma h_{m}h_{s}c^{+})$ $=Sign(k_1\gamma h_{m}h_{s-r}c^{+})\in \{0,+1\}$ and $Sign(k_2\gamma h_{m}h_{s}c^{+}) = Sign(k_2\gamma h_{m}h_{s-r}c^{+})\in \{0,-1\}$. Thus, $k_1\gamma h_{m}h_{s-r}c^{+}\geq \gamma h_{m}h_{s-r}c^{+}$ and $k_2\gamma h_{m}h_{s-r}c^{+}\leq \gamma h_{m}h_{s-r}c^{+}$. 
Since $k_1\gamma h_{m}h_{s}c^{+}>k_2\gamma h_{m}h_{s}c^{+}$, one of $Sign(k_1\gamma h_{m}h_{s-r}c^{+})$ and 
$Sign(k_2\gamma h_{m}h_{s-r}c^{+})$ must differ from 0; thus $k_1\gamma h_{m}h_{s-r}c^{+}> k_2\gamma h_{m}h_{s-r}c^{+}$. Therefore, $h^{-}_{r}(x)> h^{-}_{r}(y)$.

Second, consider the case $x,y\in H(c^{-})$. In any case, $h_{r}^{-}(x)=-h_{z}^{-}(-x)$ and 
$h_{r}^{-}(y)=-h_{z}^{-}(-y)$, for some $z$. Since $x,y\in H(c^{-})$, we have $-x,-y\in H(c^{+})$. By the above case, 
$x>y\Rightarrow -x<-y \Rightarrow h_{z}^{-}(-x)\leq h_{z}^{-}(-y)\Rightarrow h_{r}^{-}(x)\geq h_{r}^{-}(y)$.

Finally, if $x\in H(c^{+})\cup \{W,1\}$ and $y\in H(c^{-})\cup \{0,W\}$, then $h^{-}(x)\geq W\geq h^{-}(y)$; and if $x=1$, then $h^{-}(x)\geq h^{-}(y)$.
\end{proof}

\subsection{Limited hedge algebras}
In the present work, we only deal with finite linguistic truth domains. The rationale for this is as follows.

First, in daily life, humans only use linguistic terms with a limited length. This is due to the fact that it is difficult to distinguish the different meaning of terms with many hedges such as \emph{Very Little Probably True} and \emph{More Little Probably True}. Hence, we can assume that applying any hedge to truth values that have a certain number $l$ of hedges will not change their meaning. In other words, canonical representations of all terms w.r.t. primary terms have a length of at most $l+1$.

Second, according to \citeN{Zadeh75b}, in most applications to approximate reasoning, a small finite set of fuzzy truth values would, in general, be sufficient since each fuzzy truth value represents a fuzzy set rather than a single element of [0,1].

Third, more importantly, it is reasonable for us to consider only finitely many truth values in order to provide a logical system that can be implemented for computers. In fact, we later show that with a finite truth domain, we can obtain the Least Herbrand model for a finite program after a finite number of iterations of an immediate consequences operator. 

\begin{definition}[l-limited HA]
An \emph{l-limited} HA, where $l$ is a positive integer, is a lin-HA in which canonical representations of all terms w.r.t. primary terms have a length of at most $l+1$.
\end{definition}
For an \emph{l}-limited HA $\underline{X} = (X,G,H,\leq)$, since the set of hedges $H$ is finite, so is the linguistic truth domain $\overline{X}$.

In the following, we give a particular example of inverse mappings of hedges for a 2-limited HA.

\begin{example} \label{ex99}
Consider a $2$-limited HA $\underline{X}=(X,\{c^{+}, c^{-}\},\{V,M,P,L\},\leq)$ with $L<_e P<_e I<_e M<_e V$. We have a linguistic truth domain $\overline{X}=\{
v_0=0, 
v_1=VVc^{-},v_2=MVc^{-}, v_3=Vc^{-}, v_4=PVc^{-}, v_5=LVc^{-},
v_6=VMc^{-}, v_7=MMc^{-}, v_8=Mc^{-}, v_9=PMc^{-}, v_{10}=LMc^{-}, 
v_{11}=c^{-},
v_{12}=VPc^{-}, v_{13}=MPc^{-}, v_{14}=Pc^{-}, v_{15}=PPc^{-}, 
v_{16}=LPc^{-}, 
v_{17}=LLc^{-}, v_{18}=PLc^{-}, v_{19}=Lc^{-}, v_{20}=MLc^{-}, 
v_{21}=VLc^{-}, 
v_{22}=W,
v_{23}=VLc^{+}, v_{24}=MLc^{+}, v_{25}=Lc^{+}, v_{26}=PLc^{+}, 
v_{27}=LLc^{+}, 
v_{28}=LPc^{+}, v_{29}=PPc^{+}, v_{30}=Pc^{+}, v_{31}=MPc^{+}, 
v_{32}=VPc^{+}, 
v_{33}=c^{+},
v_{34}=LMc^{+}, v_{35}=PMc^{+}, v_{36}=Mc^{+}, v_{37}=MMc^{+}, 
v_{38}=VMc^{+}, 
v_{39}=LVc^{+}, v_{40}=PVc^{+}, v_{41}=Vc^{+}, v_{42}=MVc^{+}, 
v_{43}=VVc^{+},
v_{44}=1\}$.

Based on the inverse mappings defined in Example \ref{ex3}, we can build the inverse mappings for this 2-limited HA with some modifications.
Since we are working with the 2-limited HA, if $h^{-}(x)=W$, for $x\in H(c^{+})$, we can put $h^{-}(x)=VLc^{+}$, the minimum value of $H(c^{+})$; if $h^{-}(x)=1$, for $x\in H(c^{+})$, we can put $h^{-}(x)=VVc^{+}$, the maximum value of $H(c^{+})$; if $h^{-}(x)=W$, for $x\in H(c^{-})$, we can put $h^{-}(x)=VLc^{-}$, the maximum value of $H(c^{-})$; and if $h^{-}(x)=0$, for $x\in H(c^{-})$, we can put $h^{-}(x)=VVc^{-}$, the minimum value of $H(c^{-})$. 
Changes are also made to the inverse mappings of hedges with a value in $\{c^{-},c^{+}\}$. 
This means that inverse mappings of hedges are not unique. This is acceptable since reasoning based on fuzzy logic is approximate, and inverse mappings of hedges should be built according to applications. 

Inverse mappings of hedges for the 2-limited HA are shown in Table \ref{tab3}, in which the value of an inverse mapping of a hedge $h^{-}$, appearing in the first row, of a value $x$, appearing in the first column, is in the corresponding cell. For example, $M^{-}(PPc^{+})=MLc^{+}$. Note that the values of $x$ appear in an ascending order.

\begin{table}
\caption{Inverse mappings of hedges}
\label{tab3}
\begin{minipage}{\textwidth}
\begin{tabular}{lccccc}
\hline\hline
& $V^{-}$ & $M^{-}$ & $P^{-}$ & $L^{-}$ & \\
\hline
$0$ & $0$ & $0$ & $0$ & $0$&\\
$kVc^{-}$ & $VVc^{-}$ & $VVc^{-}$ & $kMc^{-}$ & $c^{-}$&$^{a}$\\ 
$kMc^{-}$ & $VVc^{-}$ & $kVc^{-}$ & $c^{-}$ & $kPc^{-}$&$^{a}$\\
$c^{-}$ & $Vc^{-}$ & $Mc^{-}$ & $Pc^{-}$ & $Lc^{-}$&\\
$VPc^{-}$ & $VMc^{-}$ & $PMc^{-}$ & $LLc^{-}$ & $VLc^{-}$&\\
$MPc^{-}$ & $MMc^{-}$ & $LMc^{-}$ & $PLc^{-}$ & $VLc^{-}$&\\
$Pc^{-}$ & $Mc^{-}$ & $c^{-}$ & $Lc^{-}$ & $VLc^{-}$&\\ 
$PPc^{-}$ & $PMc^{-}$ & $VPc^{-}$ & $MLc^{-}$ & $VLc^{-}$&\\
$LPc^{-}$ & $LMc^{-}$ & $VPc^{-}$ & $VLc^{-}$ & $VLc^{-}$&\\
$LLc^{-}$ & $LMc^{-}$ & $VPc^{-}$ & $VLc^{-}$ & $VLc^{-}$&\\
$PLc^{-}$ & $LMc^{-}$ & $MPc^{-}$ & $VLc^{-}$ & $VLc^{-}$&\\
$Lc^{-}$ & $c^{-}$ & $Pc^{-}$ & $VLc^{-}$ & $VLc^{-}$&\\ 
$MLc^{-}$ & $VPc^{-}$ & $PPc^{-}$ & $VLc^{-}$ & $VLc^{-}$&\\ 
$VLc^{-}$ & $PPc^{-}$ & $LPc^{-}$ & $VLc^{-}$ & $VLc^{-}$&\\ 
$W$ & $W$ & $W$ & $W$ & $W$&\\
$VLc^{+}$ & $VLc^{+}$ & $VLc^{+}$ & $LPc^{+}$ & $PPc^{+}$&\\ 
$MLc^{+}$ & $VLc^{+}$ & $VLc^{+}$ & $PPc^{+}$ & $VPc^{+}$&\\ 
$Lc^{+}$ & $VLc^{+}$ & $VLc^{+}$ & $Pc^{+}$ & $c^{+}$&\\ 
$PLc^{+}$ & $VLc^{+}$ & $VLc^{+}$ & $MPc^{+}$ & $LMc^{+}$&\\ 
$LLc^{+}$ & $VLc^{+}$ & $VLc^{+}$ & $VPc^{+}$ & $LMc^{+}$&\\ 
$LPc^{+}$ & $VLc^{+}$ & $VLc^{+}$ & $VPc^{+}$ & $LMc^{+}$&\\
$PPc^{+}$ & $VLc^{+}$ & $MLc^{+}$ & $VPc^{+}$ & $PMc^{+}$&\\
$Pc^{+}$ & $VLc^{+}$ & $Lc^{+}$ & $c^{+}$ & $Mc^{+}$&\\ 
$MPc^{+}$ & $VLc^{+}$ & $PLc^{+}$ & $LMc^{+}$ & $MMc^{+}$&\\ 
$VPc^{+}$ & $VLc^{+}$ & $LLc^{+}$ & $PMc^{+}$ & $VMc^{+}$&\\ 
$c^{+}$ & $Lc^{+}$ & $Pc^{+}$ & $Mc^{+}$ & $Vc^{+}$&\\
$kMc^{+}$ & $kPc^{+}$ & $c^{+}$ & $kVc^{+}$ & $VVc^{+}$&$^{a}$\\
$kVc^{+}$ & $c^{+}$ & $kMc^{+}$ & $VVc^{+}$ & $VVc^{+}$& \footnote{ $k$ is any of the hedges, including the identity $I$.}\\ 
$1$ & $1$ & $1$ & $1$ & $1$&\\
\hline\hline
\end{tabular}
\vspace{-2\baselineskip}
\end{minipage}
\end{table}
\end{example}

\subsection{Many-valued modus ponens}
Our logic is truth-functional, i.e., the truth value of a compound formula, built from its components using a logical connective, is a function, which is called the \emph{truth function} of the connective, of the truth values of the components. 

Our procedural semantics is developed based on many-valued modus ponens.
In order to guarantee the soundness of many-valued modus ponens, the truth function of an implication, called an \emph{implicator}, must be \emph{residual} to the \emph{t-norm}, a commutative and associative binary operation on the truth domain, evaluating many-valued modus ponens \cite{Ha98}. 
The many-valued modus ponens syntactically looks like: 
\[\frac{(B,b), (A\leftarrow B,r)}{(A,\mathcal{C}(b,r))}\]
Its soundness semantically states that whenever $f$ is an interpretation such that $f(B)\geq b$, i.e., $f$ is a model of $(B,b)$, and $f(A\leftarrow B)=\leftarrow^{\bullet}(f(A),f(B))\geq r$, i.e., $f$ is a model of $(A\leftarrow B,r)$, then $f(A)\geq \mathcal{C}(b,r)$, where $\leftarrow^{\bullet}$ is an implicator, and $\mathcal{C}$ is a t-norm.
This means the truth value of $A$ under any model of $(B,b)$ and $(A\leftarrow B,r)$ is at least $\mathcal{C}(b,r)$.
More precisely, let $r$ be a lower bound to the truth value of the implication $h\leftarrow b$, let $\mathcal{C}$ be a t-norm, and let $\leftarrow^\bullet$ be its residual implicator; we have: 
\begin{equation}
\mathcal{C}(b,r)\leq h \mbox{ iff } r\leq \leftarrow^\bullet(h,b) \label{ti01}
\end{equation}
According to \citeN{Ha98}, from (\ref{ti01}), we have:
\begin{eqnarray}
(\forall b)(\forall h)\; \mathcal{C}(b,\leftarrow^\bullet(h,b))\leq h \label{ti02} \\
(\forall b)(\forall r)\; \leftarrow^\bullet(\mathcal{C}(b,r),b)\geq r \label{ti03}
\end{eqnarray}
Note that t-norms are not necessary to be a truth function of any conjunction in our language.

Recall that in many-valued logics, there are several prominent sets of connectives called \L ukasiewicz, G\"{o}del, and product logic ones. Each of the sets has a pair of residual t-norm and implicator. Since our truth values are linguistic, we cannot use the product logic connectives. 

Given a linguistic truth domain $\overline{X}$, since all the values in $\overline{X}$ are linearly ordered, we assume that they are $v_0\leq v_1\leq ...\leq v_n$, where $v_0=0$ and $v_n=1$. The \L ukasiewicz t-norm and implicator can be defined on $\overline{X}$ as follows:
\[\mathcal{C}_{L}(v_i,v_j)= \left\{\begin{array}{ll}
                                                 v_{i+j-n}	& \mbox{if $i+j-n> 0$} \\
                                                 v_0	& \mbox{otherwise} 
                                                 \end{array}
                                            \right.\]
\[\leftarrow_{L}^\bullet (v_j,v_i)= \left\{\begin{array}{ll}
                                                 v_n	& \mbox{if $i\leq j$} \\
                                                 v_{n+j-i}	& \mbox{otherwise}
                                                 \end{array}
                                            \right.\]
and those of G\"{o}del can be:
\[\mathcal{C}_{G} (v_i,v_j)= min(v_i,v_j)\]
\[\leftarrow_{G}^\bullet (v_j,v_i)= \left\{\begin{array}{ll}
                                                 v_n	& \mbox{if $i\leq j$} \\
                                                 v_j	& \mbox{otherwise}
                                                 \end{array}
                                            \right.\]
Clearly, each of the implicators is the residuum of the corresponding t-norm. It can also be seen that t-norms are monotone in all arguments, and implicators are non-decreasing in the first argument and non-increasing in the second.

\section{Fuzzy linguistic logic programming}
\subsection{Language}
Like \citeN{Vo01}, our language is a many sorted (typed) predicate language. Let $\mathcal{A}$ denote the set of all attributes. For each sort of variables $A\in\mathcal{A}$, there is a set $\mathcal{C}^A$ of constant symbols, which are names of elements of the domain of $A$. In order to achieve the Least Herbrand model after a finite number of iterations of an immediate consequences operator, we do not allow any function symbols. This is not a severe restriction since in many database applications, there are no function symbols involved. 

Connectives can be: conjunctions $\wedge$ (also called G\"{o}del) and $\wedge_{L}$ (\L ukasiewicz); the disjunction $\vee$; implications $\leftarrow_L$ (\L ukasiewicz) and  $\leftarrow_G$ (G\"{o}del); and linguistic hedges as unary connectives. For any connective $c$ different from hedges, its truth function is denoted by $c^\bullet$, and for a hedge connective $h$, its truth function is its inverse mapping $h^{-}$. The only quantifier allowed is the universal quantifier $\forall$. 

A \emph{term} is either a constant or a variable. 

An \emph{atom} or \emph{atomic formula} is of the form $p(t_1,...,t_n)$, where $p$ is an n-ary predicate symbol, and $t_1,...,t_n$ are terms of corresponding attributes $A_1,...,A_n$. 

A \emph{body formula} is defined inductively as follows:
($i$) An atom is a body formula.
($ii$) If $B_1$ and $B_2$ are body formulae, then so are $\wedge(B_1,B_2)$, $\vee(B_1,B_2)$, and $hB_1$, where $h$ is a hedge. Here, we use the prefix notation for connectives in body formulae. 

A \emph{rule} is a graded implication $(A\leftarrow B.r)$, where $A$ is an atom called \emph{rule head}, $B$ is a body formula called \emph{rule body}, and $r$ is a truth value different from 0. $(A\leftarrow B)$ is called the \emph{logical part} of the rule. 

A \emph{fact} is a graded atom ($A.b$), where $A$ is an atom called the logical part of the fact, and $b$ is a truth value different from 0.

\begin{definition}[Fuzzy linguistic logic program]
A \emph{fuzzy linguistic logic program} (program, for short) is a finite set of rules and facts, where truth values are from the linguistic truth domain of an $l$-limited HA, hedges used in body formulae (if any) belong to the set of hedges of the HA, and there are no two rules (facts) having the same logical part, but different truth values.
\end{definition}
We follow Prolog conventions where predicate symbols and constants begin with a lower-case letter, and variables begin with a capital letter. 
\begin{example} \label{ex101}
Assume we use the truth domain from the 2-limited HA in Example \ref{ex99}, that is, $\underline{X} = (X,\{False,True\},\{V, M, P,L\},\leq)$, and we have the following knowledge base:

($i$) The sentence ``\emph{If a student studies very hard, and his/her university is probably high-ranking, then he/she will be a good employee}" is \emph{Very More True}.

($ii$) The sentence ``\emph{The university where Ann is studying is high-ranking}" is \emph{Very True}.

($iii$) The sentence ``\emph{Ann is studying hard}" is \emph{More True}.

Let \emph{gd\_em, st\_hd, hira\_un}, and \emph{T} stand for ``\emph{good employee}", ``\emph{study hard}", ``\emph{high-ranking university}", and ``\emph{True}", respectively. Then, the knowledge base can be represented by the following program: 
\begin{eqnarray*}
(gd\_em(X)\leftarrow_G \wedge(V\;st\_hd(X),P\;hira\_un(X)).VMT)\\
(hira\_un(ann).VT)\\
(st\_hd(ann).MT)
\end{eqnarray*}
Note that the predicates $st\_hd(X)$ and $hira\_un(X)$ in the only rule are modified by the hedges $V$ and $P$, respectively.
\end{example}
We assume as usual that the underlying language of a program $P$ is defined by constants (if no such constant exists, we add some constant such as $a$ to form ground terms) and predicate symbols appearing in $P$. With this understanding, we can now refer to the \textit{Herbrand universe} of sort $A$, which consists of all ground terms of $A$, by $U^A_P$, and to the \textit{Herbrand base} of $P$, which consists of all ground atoms, by $B_P$ \cite{Ll87}.

A program $P$ can be represented as a partial mapping:
\[P:Formulae\rightarrow \overline{X} \setminus \{0\}\]
\noindent where the domain of $P$, denoted by $dom(P)$, is finite and consists only of logical parts of rules and facts, and $\overline{X}$ is a linguistic truth domain. The truth value of a rule ($A\leftarrow B.r$) is $r=P(A\leftarrow B)$, and that of a fact ($A.b$) is $b=P(A)$.

Since in our logical system we only want to obtain the computed answers for queries, we do not look for 1-tautologies to extend the capabilities of the system although we can have some due to the fact that our connectives are classical many-valued ones (see \citeN{Ha98}). 

\subsection{Declarative semantics}
Since we are working with logic programs without negation, it is reasonable to consider only fuzzy Herbrand interpretations and models. Given a program $P$, let $\overline{X}$ be the linguistic truth domain; a \emph{fuzzy linguistic Herbrand interpretation} (interpretation, for short) $f$ is a mapping $f:B_P\rightarrow \overline{X}$. The ordering $\leq$ in $\overline{X}$ can be extended to the set of interpretations as follows. We say $f_1\sqsubseteq f_2 $ iff $f_1(A)\leq f_2(A)$ for all ground atoms $A$. Clearly, the set of all interpretations of a program is a complete lattice under $\sqsubseteq$. The least interpretation called the \emph{bottom interpretation}, denoted by $\bot$, maps every ground atom to 0.

An interpretation $f$ can be extended to all ground formulae, denoted by $\overline{f}$, using the unique homomorphic extension as follows: 
($i$) $\overline{f}(A)=f(A)$, if $A$ is a ground atom;
($ii$) $\overline{f}(c(B_1,B_2))=c^\bullet(\overline{f}(B_1),\overline{f}(B_2))$, where $B_1,B_2$ are ground formulae, and $c$ is a binary connective;
($iii$) $\overline{f}(hB)=h^{-}(\overline{f}(B))$, where $B$ is a ground body formula, and $h$ is a hedge. 

For non-ground formulae, since all the formulae in the language are considered universally quantified, the interpretation $\overline{f}$ is defined as
\[\overline{f}(\varphi)=\overline{f}(\forall\varphi)=inf_{\vartheta}\{\overline{f}(\varphi\vartheta)|\varphi\vartheta \mbox{ is a ground instance of } \varphi\}\]
where $\forall\varphi$ means universal quantification of all variables with free occurrence in $\varphi$. 

An interpretation $f$ is a \emph{model} of a program $P$ if for all formulae $\varphi\in dom(P)$, we have $\overline{f}(\varphi)\geq P(\varphi)$. Therefore, $P(\varphi)$ is understood as a lower bound to the truth value of $\varphi$.

A \emph{query} is an atom used as a question $?A$ prompting the system.

\begin{definition}[Correct answer]
Given a program $P$, let $\overline{X}$ be the linguistic truth domain. A pair $(x;\theta)$, where $x\in \overline{X}$, and $\theta$ is a substitution, is called a \emph{correct answer} for $P$ and a query $?A$ if for any model $f$ of $P$, we have $\overline{f}(A\theta)\geq x$.
\end{definition}

\subsection{Procedural semantics}
Given a program $P$ and a query $?A$, we want to compute a lower bound for the truth value of $A$ under any model of $P$.  Recall that in the theory of many-valued modus ponens \cite{Ha98}, given $(A\leftarrow B.r)$ and $(B.b)$, we have $(A.\mathcal{C}(b,r))$. As in \citeN{Vo01}, our procedural semantics utilises \emph{admissible rules}. 

Admissible rules act on tuples of words in the alphabet, denoted by $L^{e}_P$, which is the disjoint union of the alphabet of the language of $dom(P)$ augmented by the truth functions of the connectives (except $\leftarrow_i$ and $\leftarrow^{\bullet}_i$) and symbols $\mathcal{C}_i$, and the linguistic truth domain. 

\begin{definition}[Admissible rules]
Admissible rules are defined as follows: \\
\textbf{Rule 1.} From $((X A_m Y);\vartheta)$ infer $((X \mathcal{C}(B,r) Y)\theta;\vartheta \theta)$ if 
\begin{enumerate} 
\item $A_m$ is an atom (called \textit{the selected atom}) 
\item $\theta$ is an mgu of $A_m$ and $A$ 
\item $(A\leftarrow B.r)$ is a rule in the program.
\end{enumerate}
\textbf{Rule 2.} From $(X A_m Y)$ infer $(X 0 Y)$. This rule is usually used for situations where $A_m$ does not unify with any rule head or logical part of facts in the program. \\
\textbf{Rule 3.} From $(X hB Y)$ infer $(X h^{-}(B) Y)$ if $B$ is a non-empty body formula, and $h$ is a hedge.\\
\textbf{Rule 4.} From $((X A_m Y);\vartheta)$ infer $((X r Y)\theta;\vartheta\theta)$ if 
\begin{enumerate} 
\item $A_m$ is an atom (also called the selected atom)
\item $\theta$ is an mgu of $A_m$ and $A$ 
\item $(A.r)$ is a fact in the program.
\end{enumerate}
\textbf{Rule 5.} If there are no more predicate symbols in the word, replace all connectives $\wedge$'s, and $\vee$'s with $\wedge^{\bullet}$, and $\vee^{\bullet}$, respectively. Then, since this word contains only some additional $\mathcal{C}$'s, $h^{-}$'s, and truth values, evaluate it. The substitution remains unchanged.
\end{definition}
Note that our rules except Rule 3 are the same as those in \citeN{Vo01}.

\begin{definition}[Computed answer]
Let $P$ be a program and $?A$ a query. A pair $(r;\theta)$, where $r$ is a truth value, and $\theta$ is a substitution, is said to be a \emph{computed answer} for $P$ and $?A$ if there is a sequence $G_0,...,G_n$ such that
\begin{enumerate}
\item every $G_i$ is a pair consisting of a word in $L^e_P$ and a substitution
\item $G_0=(A;id)$
\item every $G_{i+1}$ is inferred from $G_i$ by one of the admissible rules (here we also utilise the usual Prolog renaming of variables along derivation)
\item $G_n=(r;\theta')$ and $\theta=\theta'$ restricted to variables of $A$,
\end{enumerate}
and we say that the \textit{computation} has a length of $n$.
\end{definition}
Let us give an example of a computation.

\begin{example} \label{ex4}
We take the program in Example \ref{ex101}, that is:
\begin{eqnarray*}
(gd\_em(X)\leftarrow_G \wedge(Vst\_hd(X),Phira\_un(X)).VMT)\\
(hira\_un(ann).VT)\\
(st\_hd(ann).MT)
\end{eqnarray*}
Given a query $?gd\_em(ann)$, we can have the following computation (since the query is ground, the substitution in the computed answer is the identity):
\begin{eqnarray*}
?gd\_em(ann)\\
\mathcal{C}_G(\wedge(V\;st\_hd(ann), P\;hira\_un(ann)),V M T)\\
\mathcal{C}_G(\wedge(V^{-}(st\_hd(ann)), P\;hira\_un(ann)),V M T)\\
\mathcal{C}_G(\wedge(V^{-}(st\_hd(ann)), P^{-}(hira\_un(ann))),V M T)\\
\mathcal{C}_G(\wedge(V^{-}(M T), P^{-}(hira\_un(ann))),V M T)\\
\mathcal{C}_G(\wedge(V^{-}(M T), P^{-}(V T)),V M T)\\
\mathcal{C}_G(\wedge^\bullet(V^{-}(M T), P^{-}(V T)),V M T)
\end{eqnarray*}
Using the inverse mappings of hedges in Table \ref{tab3}, we have
$\mathcal{C}_G(\wedge^\bullet(V^{-}(M T),P^{-}(V T))$, $V M T)=\mathcal{C}_G(min(P T, VV T),$ $V M T)=\mathcal{C}_G(P T,V M T)=PT$.
Hence, the sentence ``\emph{Ann will be a good employee}" is at least \emph{Probably True}. 
This result is reasonable as follows: one of the conditions constituting the result is the one saying that ``\emph{The student studies very hard}"; since ``\emph{Ann is studying hard}" is \emph{MT} (\emph{More True}), the truth value of ``\emph{Ann is studying very hard}" is $V^{-}(M T)$; and since $MT<VT$, we have $V^{-}(M T)<V^{-}(V T)=T$, and $V^{-}(M T)=PT$ is acceptable. 

If we use the \L ukasiewicz implication instead of the G\"{o}del implication in the rule, then in the computation, the G\"{o}del t-norm will be replaced by the \L ukasiewicz t-norm, and, finally, we have an answer $(gd\_em(ann).MLT)$. 
\end{example}
From the definition of the procedural semantics, we can see that in order to increase the chances of finding a good computed answer which has a better truth value along a computation, we should do the following:

$(i)$ If there is more than one rule or fact whose rule heads or logical parts can be unifiable with the selected atom, and of such rules or facts there is only one to which the highest truth value is assigned, then we choose it for the next step. 

$(ii)$ If there is one fact among such rules or facts which are associated with the highest truth value, then we choose the fact for the next step since the t-norm evaluating such a rule always yields a lower truth value than that of the fact. 

$(iii)$ If there is more than one such a rule, but no facts, which have the highest truth value, then we choose the one with the G\"{o}del implication for the next step since in this case, the G\"{o}del t-norm usually, but not always (since it  also depends on the bodies of the rules), yields a better truth value than the \L ukasiewicz t-norm. In Example \ref{ex4}, it has been shown that with the same body formula, the rule with the G\"{o}del implication yields a better result ($PT$) than the rule with the \L ukasiewicz implication ($MLT$).

\subsection{Soundness of the procedural semantics}
\begin{theorem} \label{th04}
Every computed answer for a program $P$ and a query $?A$ is a correct answer for $P$ and $?A$.
\end{theorem}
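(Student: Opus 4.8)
The plan is to prove soundness by induction on the length $n$ of the computation, showing that at each step the word maintained by the admissible rules represents a valid lower bound to the truth value of the query under every model. The key conceptual point is that each intermediate word $G_i = (W_i;\vartheta_i)$, once all its predicate symbols are instantiated and its connective-symbols evaluated, yields a truth value that is a correct lower bound for $\overline{f}(A\vartheta_i)$ under any model $f$ of $P$. Concretely, for a word $W$ containing atoms interspersed with the symbols $\mathcal{C}_i$, $\wedge$, $\vee$, and $h^{-}$, I would associate to $W$ under an interpretation $f$ the value obtained by reading the atoms off via $f$ and applying the truth functions; the induction hypothesis is that this value, taken as an infimum over ground instances, is $\le \overline{f}(A\vartheta_i)$.

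First I would set up the base case: $G_0 = (A;id)$, where the associated value is simply $\overline{f}(A)$, so the hypothesis holds trivially. Then I would treat the inductive step by case analysis over which of the five admissible rules produces $G_{i+1}$ from $G_i$. The heart of the argument lies in \textbf{Rule 1}: when the selected atom $A_m$ is unified via mgu $\theta$ with a rule head $A$ of $(A\leftarrow B.r)$ and replaced by $\mathcal{C}(B,r)$, soundness rests on many-valued modus ponens, namely inequality~(\ref{ti02})--(\ref{ti03}) and the defining property~(\ref{ti01}) of the residual pair $(\mathcal{C},\leftarrow^\bullet)$. Since $f$ is a model, $\overline{f}(A\theta\leftarrow B\theta)\ge r$ and hence $\overline{f}(A\theta)\ge \mathcal{C}(\overline{f}(B\theta),r)$; this is exactly what justifies substituting $\mathcal{C}(B,r)$ for $A_m$ without increasing the represented value beyond the true one. \textbf{Rule 4} is the analogous, simpler fact case, using $\overline{f}(A\theta)\ge r = P(A)$. \textbf{Rule 2} replaces an atom by $0$, which is sound because $0$ is the least element of $\overline{X}$. \textbf{Rule 3} replaces $hB$ by $h^{-}(B)$, which matches the homomorphic extension clause $\overline{f}(hB)=h^{-}(\overline{f}(B))$ exactly, so it preserves the represented value. \textbf{Rule 5} merely finalises the evaluation once no predicate symbols remain, again preserving values.

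Two technical ingredients must be handled carefully. The first is the interaction between substitutions and the infimum over ground instances: because non-ground formulae are interpreted as $\overline{f}(\varphi)=\inf_\vartheta \overline{f}(\varphi\vartheta)$, applying an mgu can only restrict the set of ground instances, so the infimum can only increase; this monotonicity is what lets the lower bound survive unification. The second is \textbf{monotonicity of all the truth functions}: the t-norms $\mathcal{C}_L,\mathcal{C}_G$ are monotone in both arguments, the inverse mappings $h^{-}$ are monotone by Condition~(\ref{eq:im2}), and $\wedge^\bullet,\vee^\bullet$ are $\min,\max$; hence replacing a subword by a value no larger than its true interpretation propagates a valid lower bound through the surrounding context $X\,\cdot\,Y$. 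I would state this propagation as a small lemma so the inductive step reduces to checking the single replaced position.

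\textbf{The main obstacle I anticipate} is bookkeeping the infimum-over-ground-instances correctly through successive substitutions in Rule~1 and Rule~4, i.e.\ making precise that the value represented by the word $W_{i+1}$ with substitution $\vartheta\theta$ is still a lower bound for $\overline{f}(A\vartheta\theta)$ and therefore, after restriction to the variables of $A$, for the correct answer. This is essentially the Mgu/Lifting-lemma content that the introduction notes was left unproved in Vojt\'{a}\v{s}' original fuzzy logic programming; the new wrinkle here is Rule~3, but since $h^{-}$ is just another monotone truth function satisfying~(\ref{eq:im2}), it slots into the same monotonicity-plus-homomorphism framework without difficulty. Once the monotonicity lemma and the substitution-infimum relationship are in place, the five cases are routine, and at $G_n=(r;\theta')$ we conclude $r\le\overline{f}(A\theta')$ for every model $f$, which after restriction of $\theta'$ to the variables of $A$ is precisely the statement that $(r;\theta)$ is a correct answer.
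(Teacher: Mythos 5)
Your plan is correct in substance, but it organizes the induction quite differently from the paper, so a comparison is in order. The paper also inducts on the length of the computation, but in a \emph{big-step} style: the base case (length 1) consists exactly of Rules 2 and 4; for length $k>1$ it observes that the first rule applied must be Rule 1, decomposes the remainder of the computation into subcomputations of length $\leq k-1$, one for each atom $D$ of the instantiated body $B\theta_1$, applies the induction hypothesis to each of these, lifts the resulting bounds to the whole body using monotonicity of $\wedge^{\bullet}$, $\vee^{\bullet}$ and the $h^{-}$'s, and closes with the model property and inequality~(\ref{ti02}); Rules 3 and 5 never appear as separate cases, being absorbed into the evaluation of the body. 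Your \emph{small-step} invariant over the intermediate words $G_i$ treats all five admissible rules uniformly. What each approach buys: the paper never needs to assign a semantic value to an arbitrary word of $L^{e}_P$, but its decomposition step (extracting interleaved subcomputations for the body atoms together with the composed substitutions $\theta_2\ldots\theta_k$) is left informal and is the weakest point of its argument; your invariant handles the interleaving and the substitution bookkeeping automatically, at the price of defining and manipulating the value of a general word under an interpretation. Both arguments ultimately rest on the same three facts you identify: monotonicity of all truth functions (including Condition~(\ref{eq:im2}) for $h^{-}$), residuation~(\ref{ti02}), and the model property.

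One caution about your formulation: the invariant must be stated \emph{per ground instance}, not at the level of infima. Writing $v_f(W)$ for the value you associate to a ground word $W$ under $f$, the inf-level hypothesis $\inf_{\vartheta} v_f(W_i\vartheta)\leq\overline{f}(A\vartheta_i)$ does not yield its analogue for $W_i\theta$ and $A\vartheta_i\theta$ after a unification step: instantiation can only raise \emph{both} infima, and nothing relates the two increases, so the observation that ``the infimum can only increase'' works against you on the left-hand side as well. The invariant that does survive is pointwise: for every grounding $\vartheta$, $v_f(W_i\vartheta)\leq\overline{f}(A\vartheta_i\vartheta)$, with the \emph{same} $\vartheta$ on both sides. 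This version is preserved by each rule (Rules 1 and 4 by the model property and~(\ref{ti02}) applied to the relevant ground instance of the program rule or fact, Rule 3 by the clause $\overline{f}(hB)=h^{-}(\overline{f}(B))$, Rule 2 trivially, Rule 5 exactly), and at $G_n=(r;\theta')$ it gives $r\leq\inf_{\vartheta}\overline{f}(A\theta'\vartheta)=\overline{f}(A\theta')$, which is precisely correctness. With that adjustment your plan goes through.
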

\noindent
\begin{proof}
Assume that a pair $(r;\theta)$ is a computed answer for $P$ and $?A$. Let $f$ be any model of $P$; we will prove that $\overline{f}(A\theta)\geq r$.

The proof is by induction on length $n$ of computations.

First, suppose that $n=1$. Hence, either Rule 2 or Rule 4 has been applied. The case of Rule 2 is obvious since $r=0$. The case of Rule 4 implies that $P$ has a fact $(C.r)$ such that $A\theta=C\theta$. Therefore, $\overline{f}(A\theta)=\overline{f}(C\theta)\geq \overline{f}(C)\geq P(C)=r$.

Next, suppose that the result holds for computed answers coming from computations of length $\leq k-1$, where $k>1$. We prove that it also holds for a computation of length $k$. 

Assume that the sequence of the substitutions in the computation is $\theta_1,...,\theta_k$ (some of them are the identity), where $\theta=\theta_1...\theta_k$ restricted to variables of $A$. Since the length of the computation $k>1$, the first admissible rule to be applied is Rule 1. This means there exists a rule $(C\leftarrow_i B.c)$ in $P$ such that $A\theta_1=C\theta_1$. For each atom $D$ in the rule body $B\theta_1$, there exists a computation of length $\leq k-1$ for it. Suppose $d$ is the computed truth value for $D$ in that computation; by the induction hypothesis, we have $d\leq \overline{f}(D\theta_2...\theta_k)$. Furthermore, since the truth functions of the conjunctions, the disjunction, and inverse mappings of hedges are non-decreasing in all their arguments, if $b$ is the computed truth value for the whole rule body $B\theta_1$, which is calculated from all the $d$ for each atom $D$ using the truth functions of the connectives, then $b\leq \overline{f}(B\theta_1\theta_2...\theta_k)$. Therefore, we have: $r=\mathcal{C}_i(b,c)\leq \mathcal{C}_i(\overline{f}(B\theta_1...\theta_k),c)\leq^{(*)} \mathcal{C}_i(\overline{f}(B\theta_1...\theta_k),\overline{f}(C\theta_1...\theta_k\leftarrow_i B\theta_1...\theta_k))= \mathcal{C}_i(\overline{f}(B\theta_1...\theta_k),\leftarrow_i^\bullet(\overline{f}(C\theta_1...\theta_k), \overline{f}(B\theta_1...\theta_k))) \leq^{(**)} \overline{f}(C\theta_1...\theta_k)=\overline{f}(A\theta_1...\theta_k)=\overline{f}(A\theta)$,
where (*) holds since $f$ is a model of $P$, and (**) follows from (\ref{ti02}).
\end{proof}

\subsection{Fixpoint semantics}

Similar to \citeN{kra04}, the immediate consequences operator, introduced by van Emden and Kowalski, can be generalised to the case of fuzzy linguistic logic programming as follows.

\begin{definition}[Immediate consequences operator] \label{ico}
Let $P$ be a program. The operator $T_P$ mapping from interpretations to interpretations is defined as follows. For every interpretation $f$ and every ground atom $A\in B_P$,

$T_P(f)(A)=max\{sup\{\mathcal{C}_i(\overline{f}(B),r):(A\leftarrow_i B.r)$ is a ground instance of a rule in $P\}$, $sup\{b:(A.b)$ is a ground instance of a fact in $P\}\}$.
\end{definition}
Since $P$ is function-free, each Herbrand universe $U_P^A$ of a sort $A$ is finite, and so is its Herbrand base $B_P$. Hence, for each $A\in B_P$, there are a finite number of ground instances of rule heads and logical parts of facts which match $A$. Therefore, the suprema in the definition of $T_P$ are in fact maxima.

Similar to \citeN{Medina04}, we have the following results. 
\begin{theorem} \label{th01}
The operator $T_P$ is monotone.
\end{theorem}
\noindent
\begin{proof}
Let $f_1$ and $f_2$ be two interpretations such that $f_1\sqsubseteq f_2$; we prove that $T_P(f_1)\sqsubseteq T_P(f_2)$.

First, let us prove $\overline{f_1}(B)\leq \overline{f_2}(B)$ for all ground body formulae $B$ by induction on the structure of the formulae. In the base case where $B$ is a ground atom, we have $\overline{f_1}(B)=f_1(B)\leq f_2(B)=\overline{f_2}(B)$. For the inductive case, consider a ground body formula $B$. By case analysis and the induction hypothesis, we have $B=\wedge(B_1, B_2)$, or $B=\vee(B_1, B_2)$, or $B=hB_1$ such that $\overline{f_1}(B_1)\leq \overline{f_2}(B_1)$ and $\overline{f_1}(B_2)\leq \overline{f_2}(B_2)$. By definition, we have $\overline{f_1}(B)=\wedge^{\bullet}(\overline{f_1}(B_1),\overline{f_1}(B_2))\leq \wedge^{\bullet}(\overline{f_2}(B_1),\overline{f_2}(B_2))=\overline{f_2}(B)$, or $\overline{f_1}(B)=\vee^{\bullet}(\overline{f_1}(B_1),\overline{f_1}(B_2))\leq \vee^{\bullet}(\overline{f_2}(B_1),\overline{f_2}(B_2))=\overline{f_2}(B)$, or
$\overline{f_1}(B)$ = $h^{-}(\overline{f_1}(B_1))\leq h^{-}(\overline{f_2}(B_1))=\overline{f_2}(B)$, respectively. Thus, $\overline{f_1}(B)\leq \overline{f_2}(B)$ for all ground body formulae $B$.

Now, let $A$ be any ground atom. If $A$ does not unify with any rule head or logical part of facts in $P$, then $T_P(f_1)(A)=T_P(f_2)(A)=0$. Otherwise, since the value of the second $sup$ in Definition \ref{ico} does not depend on the interpretations, what we need to consider now is the first $sup$. For any ground instance $(A\leftarrow_i B.r)$ of a rule in $P$, since $B$ is ground, we have $\mathcal{C}_i(\overline{f_1}(B),r)\leq \mathcal{C}_i(\overline{f_2}(B),r)$. By taking suprema for all ground instances $(A\leftarrow_i B.r)$ on both sides, we have $sup\{\mathcal{C}_i(\overline{f_1}(B),r)\}\leq sup\{\mathcal{C}_i(\overline{f_2}(B),r)\}$. Therefore, $T_P(f_1)(A)\leq T_P(f_2)(A)$ for all ground atoms $A$.
\end{proof}
\begin{theorem} \label{th02}
The operator $T_P$ is continuous.
\end{theorem}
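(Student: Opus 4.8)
The plan is to prove that $T_P$ preserves least upper bounds of directed families of interpretations: for every directed set $\mathcal{F}$ of interpretations, $T_P(\bigsqcup \mathcal{F}) = \bigsqcup_{f \in \mathcal{F}} T_P(f)$. Since the interpretation lattice is complete (as noted after the definition of $\sqsubseteq$), both sides exist. The inequality $\bigsqcup_{f} T_P(f) \sqsubseteq T_P(\bigsqcup \mathcal{F})$ is immediate from Theorem~\ref{th01}: each $f \sqsubseteq \bigsqcup \mathcal{F}$ gives $T_P(f) \sqsubseteq T_P(\bigsqcup \mathcal{F})$, so $T_P(\bigsqcup \mathcal{F})$ is an upper bound of $\{T_P(f)\}$. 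All the work is in the reverse inequality, which I would establish pointwise, namely $T_P(\bigsqcup \mathcal{F})(A) \leq \sup_{f} T_P(f)(A)$ for every ground atom $A$.

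The key lemma is that evaluation of body formulae commutes with directed suprema: for every ground body formula $B$, $\overline{\bigsqcup \mathcal{F}}(B) = \sup_{f \in \mathcal{F}} \overline{f}(B)$. I would prove this by structural induction on $B$, exactly paralleling the induction in the proof of Theorem~\ref{th01}. The base case ($B$ a ground atom) is just the definition of the pointwise supremum. For the inductive step I need the truth functions to commute with the supremum of the family $\{\overline{f}(B_i)\}_{f \in \mathcal{F}}$, which is directed because $\mathcal{F}$ is directed and, by the monotonicity argument already used in Theorem~\ref{th01}, each $\overline{\cdot}(B_i)$ is monotone. Concretely this amounts to $\min$ and $\max$ distributing over directed suprema and $h^{-}$ preserving them.

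Then, for a fixed ground atom $A$, I would unfold Definition~\ref{ico}. The fact-part supremum does not depend on the interpretation, so it is constant in $f$. For the rule part I apply the lemma to replace $\overline{\bigsqcup\mathcal{F}}(B)$ by $\sup_f \overline{f}(B)$, use continuity of the t-norm $\mathcal{C}_i$ in its first argument to pull the supremum outside $\mathcal{C}_i$, and then exchange the two suprema. This exchange is legitimate because $B_P$ is finite (the program is function-free), so for each $A$ only finitely many rule instances match $A$; the supremum over rule instances is a finite maximum, which commutes with the directed supremum over $f$. Folding the constant fact-part back in (join distributes over the directed join) yields $\sup_f T_P(f)(A)$, completing the reverse inequality.

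The main obstacle is the supremum-exchange step together with checking that $\min$, $\max$, $h^{-}$ and the t-norms $\mathcal{C}_i$ all preserve directed suprema. Here the finiteness of the truth domain $\overline{X}$ of an $l$-limited HA does the decisive work: because $\overline{X}$ is a finite chain, any directed family in it is finite and attains its supremum, so every monotone operation on $\overline{X}$ automatically commutes with directed suprema and all the suprema above are in fact maxima. Indeed, finiteness alone already forces $\bigsqcup\mathcal{F}$ to be the greatest element of $\mathcal{F}$, giving a one-line alternative proof; I would nonetheless present the structural argument to make the contribution of each connective explicit and to keep the development parallel to \citeN{Medina04}.
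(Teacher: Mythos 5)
Your proposal is correct, but the argument you foreground is genuinely different from the paper's. The paper's whole proof of the nontrivial inequality is exactly what you relegate to a closing remark: since $B_P$ and the truth domain $\overline{X}$ are finite, the set of all Herbrand interpretations is finite, so every directed set $X$ of interpretations contains its own supremum, and monotonicity (Theorem \ref{th01}) immediately yields $T_P(\sup(X)) \sqsubseteq \sup\{T_P(f) : f\in X\}$. Your main argument instead establishes continuity ``from the inside'': a structural induction showing that evaluation of ground body formulae commutes with directed suprema, then continuity of the t-norms in their first argument, then an exchange of suprema. (One small over-justification: the exchange of suprema needs no finiteness at all, since $\sup_i\sup_j a_{ij}=\sup_j\sup_i a_{ij}$ holds for arbitrary families in a complete lattice; finiteness is only needed to make the various suprema attained.) What your decomposition buys is an explicit record of which properties of the connectives are actually used --- monotonicity plus preservation of directed suprema by $\wedge^{\bullet}$, $\vee^{\bullet}$, $h^{-}$, and $\mathcal{C}_i$ --- so the same template works over infinite truth domains, as in multi-adjoint logic programming \citeN{Medina04}, where those properties become hypotheses on the connectives rather than consequences of finiteness. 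What it costs is that, in this paper's setting, each of those connective-level facts is itself deduced from the finiteness of $\overline{X}$, so your longer argument ultimately rests on the same pillar as the paper's one-line proof, only invoked at every connective instead of once at the level of the interpretation lattice.
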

\noindent
\begin{proof}
Recall that a mapping $f:L\rightarrow L$, where $L$ is a complete lattice, is said to be \emph{continuous} if for every directed subset $X$ of $L$, $f(sup(X))=sup\{f(x)|x\in X\}$.

Let us prove that for each directed set $X$ of interpretations, $T_P(sup (X))=sup\{T_P(f)|f\in X\}$.

Since $T_P$ is monotone, we have $sup\{T_P(f)|f\in X\}\sqsubseteq T_P(sup (X))$. On the other hand, since the Herbrand base $B_P$ and the truth domain are finite, the set of all Herbrand interpretations of $P$ is finite. Therefore, for each finite directed set $X$ of interpretations, we have an upper bound of $X$ in $X$. This, together with the monotonicity of $T_P$, leads to $T_P(sup(X))\sqsubseteq sup\{T_P(f):f\in X\}$.
\end{proof}
\begin{theorem} \label{th03}
An interpretation $f$ is a model of a program $P$ iff $T_P(f)\sqsubseteq f$.
\end{theorem}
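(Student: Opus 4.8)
The plan is to prove both implications by directly unwinding the definition of a model and of $T_P$, with the residuation identity (\ref{ti01}) serving as the bridge between the two. Recall that $f$ being a model means $\overline{f}(\varphi)\ge P(\varphi)$ for every logical part $\varphi\in dom(P)$, where such a $\varphi$ is either a rule $(A\leftarrow_i B)$ or a fact $A$, whereas $T_P(f)\sqsubseteq f$ is the pointwise inequality $T_P(f)(A)\le f(A)$ ranging over all ground atoms $A$. Since $dom(P)$ may contain non-ground formulae while $T_P$ is phrased via ground instances, the first thing I would pin down is the reading of $\overline{f}$ on a universally quantified formula as the infimum of $\overline{f}$ over its ground instances; this is what lets me pass freely between a (possibly non-ground) member of $dom(P)$ and the ground instances that feed Definition \ref{ico}.

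For the direction from model to $T_P(f)\sqsubseteq f$, I would take an arbitrary ground atom $A$ and bound the two suprema of Definition \ref{ico} separately. For the fact part, each ground instance $(A.b)$ arises from a fact whose logical part $A'$ satisfies $\overline{f}(A')\ge b$; since $A$ is among the ground instances over which $\overline{f}(A')$ takes its infimum, $f(A)=\overline{f}(A)\ge\overline{f}(A')\ge b$. For the rule part, each ground instance $(A\leftarrow_i B.r)$ comes from a rule with $\overline{f}(A'\leftarrow_i B')\ge r$; reading this infimum at the particular instance gives $\leftarrow_i^\bullet(\overline{f}(A),\overline{f}(B))\ge r$, and then the adjunction (\ref{ti01}) converts this into $\mathcal{C}_i(\overline{f}(B),r)\le f(A)$. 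Taking the maximum of the two bounded suprema yields $T_P(f)(A)\le f(A)$.

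For the converse I would run the same computation backwards. Given a fact $(A'.b)\in P$, every ground instance $A'\vartheta$ satisfies $b\le T_P(f)(A'\vartheta)\le f(A'\vartheta)=\overline{f}(A'\vartheta)$, so taking the infimum over $\vartheta$ gives $\overline{f}(A')\ge b$. Given a rule $(A'\leftarrow_i B'.r)\in P$, every ground instance gives $\mathcal{C}_i(\overline{f}(B'\vartheta),r)\le T_P(f)(A'\vartheta)\le\overline{f}(A'\vartheta)$, which by (\ref{ti01}) is equivalent to $r\le\,\leftarrow_i^\bullet(\overline{f}(A'\vartheta),\overline{f}(B'\vartheta))$; the infimum of the right-hand side over $\vartheta$ is exactly $\overline{f}(A'\leftarrow_i B')$, so $\overline{f}(A'\leftarrow_i B')\ge r$. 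Hence $\overline{f}(\varphi)\ge P(\varphi)$ for every $\varphi\in dom(P)$, i.e.\ $f$ is a model.

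The step I expect to require the most care is the repeated interchange, underneath an infimum over ground substitutions, between the implicator form $\leftarrow_i^\bullet$ that governs the model condition and the t-norm form $\mathcal{C}_i$ that governs $T_P$. The clean way to handle it is to apply the equivalence (\ref{ti01}) at each fixed ground instance first, and only then take the infimum over $\vartheta$: an infimum bounded below by $r$ forces every instance to satisfy $\leftarrow_i^\bullet(\overline{f}(A\vartheta),\overline{f}(B\vartheta))\ge r$, and conversely a uniform bound $r$ holding at every instance is preserved by the infimum. I would be careful not to try to commute $\mathcal{C}_i$ past the infimum directly, which is where a naive argument would stumble.
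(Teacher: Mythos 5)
Your proof is correct and follows essentially the same route as the paper's: both directions unwind the definitions of model and of $T_P$, treat facts and rules separately, handle non-ground members of $dom(P)$ via the infimum over ground instances, and bridge between the t-norm form and the implicator form by residuation. The only cosmetic difference is that you apply the adjunction (\ref{ti01}) directly at each ground instance, whereas the paper uses the derived inequalities (\ref{ti02}) and (\ref{ti03}) together with the monotonicity of $\mathcal{C}_i$ and $\leftarrow_i^\bullet$.
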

\noindent
\begin{proof}
First, assume that $f$ is a model of $P$; we prove that $T_P(f)\sqsubseteq f$.

Let $A$ be any ground atom. Consider the following cases:

($i$) If $A$ is neither a ground instance of a logical part of facts nor a ground instance of a rule head in $P$, then $T_P(f)(A)=0\leq f(A)$.

($ii$) For each ground instance $(A.b)$ of a fact, say $(C.b)$, in $P$, since $f$ is a model of $P$, and $A$ is a ground instance of $C$, we have $b=P(C)\leq \overline{f}(C)\leq f(A)$. Hence, $f(A)\geq sup\{b|(A.b)$ is a ground instance of a fact in $P\}$. 

($iii$) For each ground instance $(A\leftarrow_i B.r)$ of a rule, say $(C.r)$, in $P$, we have:
$\mathcal{C}_i(\overline{f}(B),r)=\mathcal{C}_i(\overline{f}(B),P(C))\leq^{(*)} \mathcal{C}_i(\overline{f}(B),\overline{f}(A\leftarrow_i B))=\mathcal{C}_i(\overline{f}(B),\leftarrow_i^\bullet(f(A),\overline{f}(B)))\leq^{(**)}f(A)$,
where (*) holds since $(A\leftarrow_i B)$ is a ground instance of $C$, and (**) follows from (\ref{ti02}). Therefore, $f(A)\geq sup\{\mathcal{C}_i(\overline{f}(B),r)|(A\leftarrow_i B.r)$ is a ground instance of a rule in $P\}$. 

Thus, by definition, $T_P(f)(A)\leq f(A)$ for all $A\in B_P$.

Finally, let us show that if $T_P(f)\sqsubseteq f$, then $f$ is a model of $P$.

Let $C$ be any formula in $dom(P)$. There are two cases:

($i$) $(C.c)$, where $c$ is a truth value, is a fact in $P$. For each ground instance $A$ of $C$, by hypothesis and definition, we have $f(A)\geq T_P(f)(A)\geq sup\{b|(A.b)$ is a ground instance of a fact in $P\}\geq c=P(C)$. Therefore, $\overline{f}(C)=inf\{f(A)|A$ is a ground instance of $C\}\geq P(C)$.

($ii$) $(C.c)$ is a rule in $P$. For each ground instance $A\leftarrow_j D$ of $C$, by hypothesis and definition, we have $f(A)\geq T_P(f)(A)\geq sup\{\mathcal{C}_i(\overline{f}(B),r)|(A\leftarrow_i B.r)$ is a ground instance of a rule in $P\}\geq \mathcal{C}_j(\overline{f}(D),c) =\mathcal{C}_j(\overline{f}(D),P(C))$. Hence, $\overline{f}(A\leftarrow_j D)=\leftarrow_j^\bullet(f(A),\overline{f}(D))\geq^{(*)}\leftarrow_j^\bullet(\mathcal{C}_j(\overline{f}(D),P(C)),\overline{f}(D))\geq^{(**)} P(C)$, 
where (*) holds since $\leftarrow_i^\bullet$ is non-decreasing in the first argument, and (**) follows from (\ref{ti03}). Consequently, $\overline{f}(C)=inf\{\overline{f}(A\leftarrow_j D)|(A\leftarrow_j D)$ is a ground instance of $C\}\geq P(C)$.
\end{proof}

Since the given immediate consequences operator $T_P$ satisfies Theorem \ref{th02} and Theorem \ref{th03}, and the set of Herbrand interpretations of the program $P$ is a complete lattice under the relation $\sqsubseteq$, due to Knaster and Tarski \cite{Tarski55}, the Least Herbrand model of the program $P$ is exactly the least fixpoint of $T_P$ and can be obtained by iterating $T_P$ from the bottom interpretation $\bot$ after
$\omega$ iterations, where $\omega$ is the smallest limit ordinal (apart from 0). Furthermore, since the truth domain $\overline{X}$ and the Herbrand base $B_P$ are finite, the least model of $P$ can be obtained after at most $\mathcal{O}(|P||\overline{X}|)$ steps, where $|A|$ denotes the cardinality of the set $A$. This is an important tool for dealing with recursive programs, for which computations can be infinite.

\subsection{Completeness of the procedural semantics}
The following theorem shows that $T_P^n(\bot)$ in fact builds computed answers for ground atoms.
\begin{theorem} \label{th05}
Let $P$ be a program and $A$ a ground atom. For all $n$, there exists a computation for $P$ and the query $?A$ such that the computed answer is $(T_P^n(\bot)(A);id)$.
\end{theorem}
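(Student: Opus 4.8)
The plan is to prove the statement by induction on $n$, with the statement quantified uniformly over all ground atoms $A$, so that the induction hypothesis supplies computations for every ground atom at level $n$. For the base case $n=0$ we have $T_P^0(\bot)(A)=\bot(A)=0$; starting from $G_0=(A;id)$ and reading the one-symbol word $A$ as $X A_m Y$ with $X,Y$ empty, a single application of Rule 2 yields $G_1=(0;id)$, so $(0;id)$ is a computed answer and $0=T_P^0(\bot)(A)$.

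For the inductive step I would set $g=T_P^n(\bot)$ and $v=T_P^{n+1}(\bot)(A)=T_P(g)(A)$. By Definition \ref{ico}, $v$ is the maximum of a fact-supremum and a rule-supremum, each taken over a finite set, so $v$ is actually attained. If $v=0$, Rule 2 again gives $(0;id)$. If the maximum is attained by a fact, there is a fact $(C.b)$ in $P$ with a ground instance $(A.b)$ and $b=v$; taking $\theta$ to be an mgu of $A$ and $C$ (which exists and satisfies $C\theta=A$, since $A$ is ground and a ground instance of $C$), Rule 4 sends $G_0=(A;id)$ to $(v;\theta)$, and restricted to the variables of the ground atom $A$ the substitution is the identity, giving the computed answer $(v;id)$.

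The substantive case is that the maximum is attained by a rule: there is a ground instance $(A\leftarrow_i B.r)$ of a rule $(C\leftarrow_i B_0.r)$ in $P$ with $\mathcal{C}_i(\overline{g}(B),r)=v$. Here I would first apply Rule 1 with an mgu $\theta$ of $A$ and $C$, reaching the word $\mathcal{C}_i(B_0\theta,r)$ with substitution $\theta$. The body word $B_0\theta$ is assembled from its atomic subterms by the connectives $\wedge,\vee$ and by hedges, exactly paralleling the homomorphic definition of $\overline{g}(B)$, so I would reduce it to the single truth value $\overline{g}(B)$ by an inner induction on the structure of $B_0$: a hedge node $hB_1'$ is rewritten by Rule 3 to $h^{-}(B_1')$, matching clause $(iii)$ of the extension $\overline{g}$; binary connective nodes are left in place, to be evaluated by Rule 5, matching clause $(ii)$; and at an atomic leaf $D$, whose grounding inside $B$ is the ground atom $D':=D\delta$ (where $\gamma=\theta\delta$ is the grounding substitution of the chosen rule instance), the outer induction hypothesis supplies a computation for $?D'$ with computed value $g(D')$, which I carry out in place on the subword $D$. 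Since the already-reduced entries are truth-value constants, later substitutions do not disturb them. Once every atom has been reduced, the word is $\mathcal{C}_i(\overline{g}(B),r)$ with $\wedge,\vee$ still present, and a final Rule 5 step replaces them by $\wedge^{\bullet},\vee^{\bullet}$ and evaluates the word to $\mathcal{C}_i(\overline{g}(B),r)=v$, yielding $(v;id)$.

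The step I expect to be the real obstacle is this in-place reduction of the body atoms, namely the bookkeeping of substitutions. After Rule 1 the atoms of $B_0\theta$ need not be ground, whereas the induction hypothesis speaks only of ground atoms, so I need a lifting argument guaranteeing that the ground computations for each $D'=D\delta$ can be replayed on the non-ground $D$ with mgu's whose composition restricts to $\delta$ on the body variables, and that these partial substitutions remain mutually consistent across atoms that share variables. This is precisely the point where an extended Mgu/Lifting lemma is required. What makes the gluing plausible is that truth values are insensitive to substitution, together with the monotonicity of $\mathcal{C}_i$, $\wedge^{\bullet}$, $\vee^{\bullet}$ and the inverse mappings already exploited in Theorem \ref{th04}; but carrying out this coordination cleanly, rather than the structural recursion over the body, is the delicate part of the argument.
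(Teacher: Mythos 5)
Your proposal is correct and takes essentially the same approach as the paper: induction on $n$, Rule 2 for the base case and the zero-value case, Rule 4 when the maximum in Definition \ref{ico} is attained by a fact instance, and otherwise Rule 1 followed by a structural reduction of the body that plugs in the induction-hypothesis computations for the ground body atoms and finishes with Rule 5. The ``real obstacle'' you flag --- replaying ground computations on the possibly non-ground atoms of $B_0\theta$ with mutually consistent substitutions --- is precisely the step the paper's own proof waves through with ``Clearly, there is a computation\ldots'', so your treatment is, if anything, more candid than the paper's (whose Mgu and Lifting lemmas, the natural tools for that gluing, are only proved afterwards for the non-ground completeness theorem).
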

\noindent
\begin{proof}
Note that since $A$ is ground, the substitutions in all computed answers are always the identity. 

We prove the result by induction on $n$.

Suppose first that $n=0$. Since $T_P^0(\bot)(A)=0$, there is a computation for $P$ and $?A$ in which only Rule 2 is applied with the computed answer $(0;id)$.

Now suppose that the result holds for $n-1$, where $n\geq 1$; we prove that it also holds for $n$. There are two cases:

($i$) $A$ does not unify with any rule head or logical part of facts in $P$. Then, $T_P^n(\bot)(A)=0$, and the computation is the same as the case $n=0$. 

($ii$) Otherwise, since the suprema in the definition of $T_P$ are in fact maxima, there exists either a ground instance $(A.b)$ of a fact in $P$ such that $T_P^n(\bot)(A)=b$ or a ground instance $(A\leftarrow_i B.r)$ of a rule  in $P$ such that $T_P^n(\bot)(A)=\mathcal{C}_i(T_P^{n-1}(\bot)(B),r)$. For the former case, there is a computation for $P$ and $?A$ in which only Rule 1 is applied, and the computed answer is $(b;id)$. For the latter, by the induction hypothesis, for each ground atom $B_j$ in $B$, there exists a computation such that $T_P^{n-1}(\bot)(B_j)$ is the computed truth value for $B_j$. Therefore, the computed truth value of the whole body $B$ is $T_P^{n-1}(\bot)(B)$, calculated from all $T_P^{n-1}(\bot)(B_j)$ along the complexity of $B$ using the truth functions of the connectives. Clearly, there is a computation for $P$ and $?A$ in which the first rule to be applied is Rule 1 carried out on the rule in $P$ which has $(A\leftarrow_i B.r)$ as its ground instance, and the rest is a combination of the computations of each $B_j$ in $B$. It is clear that the computed truth value for $?A$ in this computation is $T_P^n(\bot)(A)$.
\end{proof}
The completeness result for the case of ground queries is shown as follows.

\begin{theorem} \label{th06}
For every correct answer $(x;id)$ of a program $P$ and a ground query $?A$, there exists a computed answer $(r;id)$ for $P$ and $?A$ such that $r\geq x$.
\end{theorem}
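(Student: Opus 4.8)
The plan is to bridge the semantic notion of a correct answer and the procedural notion of a computed answer through the Least Herbrand model, exploiting the fixpoint characterisation already established. The key observation is that the Least Herbrand model $M_P$ of $P$ is itself a model of $P$, and that, by Theorem \ref{th05}, the iterates $T_P^n(\bot)$ are exactly realisable as computed truth values. So the strategy is: first pin down $M_P$ as a finite iterate of $T_P$, then use that $M_P$ is a model to bound $x$, and finally quote Theorem \ref{th05} to turn that iterate into an actual computation.

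First I would recall that, since $T_P$ is continuous (Theorem \ref{th02}) and an interpretation is a model of $P$ iff it is a pre-fixpoint of $T_P$ (Theorem \ref{th03}), the Knaster--Tarski theorem gives that the Least Herbrand model $M_P$ is the least fixpoint of $T_P$ and is obtained by iterating $T_P$ from $\bot$. Crucially, because both the Herbrand base $B_P$ and the truth domain $\overline{X}$ are finite, this least fixpoint is reached after finitely many steps; that is, there exists a finite $m$ with $M_P=T_P^m(\bot)$. This finiteness is exactly what makes the otherwise merely ordinal-indexed fixpoint construction usable by the finite-length computations of the procedural semantics.

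Next I would invoke the hypothesis. Since $(x;id)$ is a correct answer for $P$ and $?A$, by definition $\overline{f}(A)\geq x$ for \emph{every} model $f$ of $P$; applying this to $f=M_P$ and using that $A$ is ground (so $\overline{M_P}(A)=M_P(A)$) yields $M_P(A)\geq x$. Combining this with $M_P=T_P^m(\bot)$ gives $T_P^m(\bot)(A)\geq x$. Finally, Theorem \ref{th05} applied with $n=m$ furnishes a computation for $P$ and $?A$ whose computed answer is $(T_P^m(\bot)(A);id)$; setting $r=T_P^m(\bot)(A)$, this is a computed answer $(r;id)$ with $r\geq x$, as required.

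I do not expect a serious obstacle here: the real content has already been packaged into Theorems \ref{th02}, \ref{th03}, and \ref{th05}, and the proof is essentially an assembly of these facts together with the finiteness of $\overline{X}$ and $B_P$. The one point that must be handled with care is justifying that the least fixpoint is attained at a \emph{finite} iterate $m$ rather than only at the limit ordinal $\omega$, since Theorem \ref{th05} speaks of $T_P^n(\bot)$ for natural numbers $n$; this is precisely where the finiteness of the truth domain and the Herbrand base is indispensable, and I would state it explicitly when identifying $M_P$ with $T_P^m(\bot)$.
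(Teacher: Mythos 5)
Your proposal is correct and follows essentially the same route as the paper: both pass from the correct-answer hypothesis to the Least Herbrand model $M_P$, identify $M_P(A)$ with a finite iterate $T_P^n(\bot)(A)$ using finiteness of the truth domain and Herbrand base, and then invoke Theorem \ref{th05} to realise that value as a computed answer. If anything, your explicit justification that the least fixpoint is attained at a finite iterate is slightly more careful than the paper's phrasing of the same point.
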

\noindent
\begin{proof}
Since $(x;id)$ is a correct answer of $P$ and $?A$, for every model $f$ of $P$, we have $f(A)\geq x$. In particular, 
let $M_P$ be the Least Herbrand model of $P$; $M_P(A)=T_P^w(\bot)(A)\geq x$. Recall that $T_P^w(\bot)(A)=sup\{T_P^n(\bot)(A): n<w\}$. Since $w$ is a finite number, the $sup$ operator is in fact a maximum. Hence, there exists $n<w$ such that $T_P^n(\bot)(A)=T_P^w(\bot)(A)$. By Theorem \ref{th05}, there exists a computation for $P$ and $?A$ such that the computed answer is $(T_P^n(\bot)(A);id)$; thus, the theorem is proved.
\end{proof}
The completeness for the case of non-ground queries can be obtained by employing some extended versions of Mgu lemma and Lifting lemma \cite{Ll87} as follows.

We define several more notions. Consider a computation of length $n$ for a program $P$ and a query $?A$; we call each $G_i, i=0...(n-1)$, in the sequence of the computation an \emph{intermediate query}, and the part of the computation from $G_i$ to $G_n$ an \emph{intermediate computation} of length $n-i$. Thus, a computation is a special intermediate computation with $i=0$.
Similar to \citeN{Ll87}, we define an \emph{unrestricted computation} (an \emph{unrestricted intermediate computation}) as a computation (an intermediate computation) in which the substitutions $\theta_i$ in each step are not necessary to be most general unifiers (mgu), but only required to be unifiers. 

In the following proofs, since it is clear for which program a computed answer is, we may omit the program and state that the computed answer is for the (intermediate) query, or the query has the computed answer. The same convention is applied to (unrestricted) (intermediate) computations and correct answers.

\begin{lemma}[Mgu Lemma] \label{lemma1}
Let $P$ be a program and $G_i$ an intermediate query. Suppose that there is an unrestricted intermediate computation for $P$ and $G_i$. Then, there exists an intermediate computation for $P$ and $G_i$ with the same computed truth value and length such that, if $\theta_{i+1},...,\theta_n$ are the unifiers from the unrestricted intermediate computation, and $\theta_{i+1}',...,\theta_n'$ are the mgu's from the intermediate computation, then there exits a substitution $\gamma$ such that $\theta_{i+1}...\theta_n=\theta_{i+1}'...\theta_n'\gamma$.
\end{lemma}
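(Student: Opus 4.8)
The plan is to prove this by induction on the length $n-i$ of the intermediate computation, following the classical Mgu Lemma of Lloyd but adapting it to the five admissible rules and to the bookkeeping of the computed truth value. The guiding observation, which I would establish first, is that the computed truth value depends only on which admissible rules (hence which program rules and facts) are applied and on the structural shape of the words produced, never on the particular unifiers chosen: substitutions merely rewrite the term arguments of atoms, whereas the truth functions $\mathcal{C}_i$, $\wedge^{\bullet}$, $\vee^{\bullet}$ and $h^{-}$ evaluated by Rule 5 act solely on the truth values $r$ and $b$ attached to the rules and facts. Consequently, once I produce a restricted computation that mimics the unrestricted one step for step (same selected atoms, same program rules and facts, same applications of Rules 2, 3, 5), it automatically has the same length and the same computed truth value, and only the substitution-composition identity needs care.

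For the inductive step I would split on the first admissible rule applied at $G_i$. If it is Rule 2, 3 or 5, no unification occurs, so I set $\theta_{i+1}'=id$, apply the induction hypothesis to the length-$(n-i-1)$ tail starting at $G_{i+1}$, prepend the identical first step, and inherit the same $\gamma$; the identity $\theta_{i+1}\cdots\theta_n=\theta_{i+1}'\cdots\theta_n'\gamma$ holds because $\theta_{i+1}=id$. If it is Rule 1 (Rule 4 being analogous), the unrestricted step uses a unifier $\theta_{i+1}$ of the selected atom $A_m$ and a program rule head $A$; unifiability then guarantees an mgu $\theta_{i+1}'$ together with a substitution $\sigma$ satisfying $\theta_{i+1}=\theta_{i+1}'\sigma$. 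Applying Rule 1 with $\theta_{i+1}'$ yields a more general word $G_{i+1}'$ with $G_{i+1}=G_{i+1}'\sigma$.

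The crux, and the step I expect to be the main obstacle, is transporting the remaining unrestricted computation from the instantiated $G_{i+1}$ back onto the more general $G_{i+1}'$. Here I would fold $\sigma$ into the next unifier: the tail of length $n-i-1$ on $G_{i+1}$, using unifiers $\theta_{i+2},\dots,\theta_n$, gives rise to an unrestricted computation on $G_{i+1}'$ using unifiers $\sigma\theta_{i+2},\theta_{i+3},\dots,\theta_n$. Justifying this requires the standardization-apart convention (program clauses are renamed with fresh variables at each step, so $H\sigma=H$ for every clause head $H$ used downstream), which ensures both that $\sigma\theta_{i+2}$ genuinely unifies the corresponding atom of $G_{i+1}'$ with that clause head and that every resolvent $G_{i+2},\dots,G_n$ is reproduced unchanged; since the same rules and facts are used, the computed truth value is preserved. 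I would then apply the induction hypothesis to this length-$(n-i-1)$ unrestricted computation on $G_{i+1}'$ to obtain mgu's $\theta_{i+2}',\dots,\theta_n'$ and a substitution $\gamma$ with $\sigma\theta_{i+2}\cdots\theta_n=\theta_{i+2}'\cdots\theta_n'\gamma$.

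Finally I would assemble the pieces: prepending the Rule 1 step with mgu $\theta_{i+1}'$ produces a restricted intermediate computation for $G_i$ of length $n-i$ with the same computed truth value, and the composition identity follows by a one-line calculation,
\[
\theta_{i+1}\theta_{i+2}\cdots\theta_n=\theta_{i+1}'\sigma\theta_{i+2}\cdots\theta_n=\theta_{i+1}'\theta_{i+2}'\cdots\theta_n'\gamma,
\]
which closes the induction. The base case (length one) is immediate: a Rule 2/3/5 step carries no unifier, while a Rule 1/4 step supplies a single unifier whose mgu decomposition gives $\theta_n=\theta_n'\gamma$ directly.
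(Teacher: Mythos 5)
Your overall strategy---induction on the length, decomposing the unrestricted unifier as $\theta_{i+1}=\theta_{i+1}'\sigma$ with $\theta_{i+1}'$ an mgu, pushing $\sigma$ into the remainder of the computation, and invoking the induction hypothesis on the more general tail---is exactly the skeleton of the paper's proof, and your base case and your treatment of a first step by Rule 2, 3 or 5 are fine. The gap is at the step you yourself flag as the crux. You claim that the tail on $G_{i+1}$, with unifiers $\theta_{i+2},\dots,\theta_n$, ``gives rise to an unrestricted computation on $G_{i+1}'$ using unifiers $\sigma\theta_{i+2},\theta_{i+3},\dots,\theta_n$'', i.e., you fold $\sigma$ into the very next unifier. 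This is only meaningful when the transition from $G_{i+1}$ to $G_{i+2}$ is an application of Rule 1 or Rule 4, because those are the only admissible rules that apply a substitution at all: Rule 2 and Rule 3 rewrite the word in place (their contribution to the unifier sequence is forced to be the identity), and Rule 5 leaves the substitution unchanged. If the next step is, say, a hedge step (Rule 3), there is no unifier slot into which $\sigma$ can be folded, and the justification that ``$\sigma\theta_{i+2}$ genuinely unifies the corresponding atom of $G_{i+1}'$ with that clause head'' does not apply. For the same reason, your claim that every resolvent $G_{i+2},\dots,G_n$ is ``reproduced unchanged'' is false in general: through Rule 2/3 steps the primed computation produces strictly more general words, related to the originals by $G_l=G_l'\sigma$, and they become literally equal only after $\sigma$ has been absorbed by a later unification step.

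The paper's proof supplies precisely this missing bookkeeping. It propagates $\sigma$ (called $\vartheta$ there) through the block of Rule 2/3 steps, maintaining the invariant $G_l=G_l'\vartheta$, until the first subsequent application of Rule 1 or Rule 4, at some position $m$; there it folds $\vartheta$ into that step's unifier, forming $\vartheta\theta_{m+1}$ (legitimate by standardization apart), which makes $G_{m+1}'=G_{m+1}$, and it applies the induction hypothesis at $G_m'$ rather than at $G_{i+1}'$. It must also treat the case you omit entirely: when no further Rule 1/4 step occurs before the final Rule 5 evaluation, $\sigma$ is never absorbed, and one instead observes that $G_{n-1}$ contains no predicate symbols, hence is invariant under substitution, so $G_{n-1}'=G_{n-1}$ and the Rule 5 evaluation yields the same truth value. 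Because the unifiers contributed by the intervening Rule 2/3 steps are identities, the final composition identity comes out exactly as in your one-line calculation, so your assembly survives; but the construction it rests on fails as written whenever a Rule 2 or Rule 3 step immediately follows the lifted unification step, and that is the situation the paper's case analysis exists to handle.
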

\begin{proof}
The proof is by induction on the length of the unrestricted intermediate computation. 
Suppose first that the length is 1, i.e., $n=i+1$. Since if either Rule 2 or Rule 5 is applied, the unifier is the identity (an mgu), and Rule 1 and Rule 3 cannot be the last rule to be applied in an unrestricted intermediate  computation, the rule to be applied here is Rule 4. Since Rule 4 is the last rule to be applied in the unrestricted intermediate  computation, it can be shown that the unrestricted intermediate computation is also an unrestricted computation of length 1. This means $i=0$. Suppose that $G_0=(A_m;id)$, where $A_m$ is an atom. Then, there exists a fact $(A.b)$ in $P$ such that $\theta_1$ is a unifier of $A_m$ and $A$, and $b$ is the computed truth value. Assume that $\theta_1'$ is an mgu of $A_m$ and $A$. Then, $\theta_1=\theta_1'\gamma$ for some $\gamma$. Clearly, there is a computation for $P$ and $?A_m$  carried out on the same fact $(A.b)$ with length 1, the computed truth value $b$, and the mgu $\theta_1'$.

Now suppose that the result holds for length $\leq k-1$, where $k\geq 2$; we prove that it also holds for length $k$. Assume that there is an unrestricted intermediate computation for $P$ and $G_i$ of length $k$ with the sequence of unifiers $\theta_{i+1},...,\theta_n$, where $n=i+k$. Consider the transition from $G_i$ to $G_{i+1}$. Since $k\geq 2$, it cannot be an application of Rule 5 and thus is one of the following cases:

($i$) Either Rule 2 or Rule 3 is applied. Then, $\theta_{i+1}=id$. By the induction hypothesis, there exists an intermediate computation for $P$ and $G_{i+1}$ of length $k-1$ with mgu's $\theta_{i+2}',...,\theta_n'$ such that $\theta_{i+2}...\theta_n=\theta_{i+2}'...\theta_n'\gamma$ for some $\gamma$. Thus, there is an intermediate computation for $P$ and $G_{i}$ of length $k$ with mgu's $\theta_{i+1}'=id,\theta_{i+2}',...,\theta_n'$ and $\theta_{i+1}...\theta_n=\theta_{i+1}'...\theta_n'\gamma$.

($ii$) Either Rule 1 or Rule 4 is applied. Hence, $\theta_{i+1}$ is a unifier for the selected atom $A$ in $G_i$ and an atom $A'$, which is either a rule head (if Rule 1 is applied) or a logical part of a fact (if Rule 4 is applied) in $P$. There exists an mgu $\theta_{i+1}'$ for $A$ and $A'$ such that $\theta_{i+1}=\theta_{i+1}'\vartheta$ for some $\vartheta$. Therefore, if we use $\theta_{i+1}'$ instead of $\theta_{i+1}$ in the transition, we will obtain an intermediate query $G_{i+1}'$ such that $G_{i+1}=G_{i+1}'\vartheta$ since $G_{i+1}$ and $G_{i+1}'$ are all obtained from $G_{i}$ by replacing $A$ with the same expression, then applying $\theta_{i+1}$ or $\theta_{i+1}'$, respectively. Now consider the transitions from $G_{i+1}$ to $G_{n-1}$. Since they cannot be an application of Rule 5, there are two possible cases:

($a$) All the transitions use only Rule 2 or Rule 3. Thus, all the unifiers are the identity. If we apply the same rule on the corresponding atom (for the case of Rule 2) or on the corresponding body formula (for the case of Rule 3) for each transition from the intermediate query $G_{i+1}'$, we will obtain a sequence $G_{i+1}',...,G_{n-1}'$, and it can be shown that for all $i+1\leq l\leq n-1$, $G_{l}=G_l'\vartheta$. Since the last transition from $G_{n-1}$ to $G_n$ uses Rule 5, $G_{n-1}$ does not have any predicate symbols, and neither does $G_{n-1}'$. Thus, they are identical. As a result, $G_i$ has an intermediate computation $G_i, G'_{i+1},..., G'_{n-1},G_n$ with mgu's $\theta_{i+1}'$ and the identities.

($b$) There exists the smallest $m$ such that $i+1\leq m\leq n-2$, and the transition from $G_{m}$ to $G_{m+1}$ uses either Rule 1 or Rule 4. Hence, all the transitions from $G_{i+1}$ to $G_{m}$ use only Rule 2 or Rule 3. As above, we can have a sequence $G_{i+1}',...,G_{m}'$ such that for all $i+1\leq l\leq m$, $G_{l}=G_l'\vartheta$. Now we will prove the result for the case that Rule 1 is applied in the transition from $G_{m}$ to $G_{m+1}$, and the case for Rule 4 can be proved similarly. The application of Rule 1 in the transition implies that there exists a rule $(A''\leftarrow_j B.r)$ in $P$ such that $\theta_{m+1}$ is a unifier of the selected atom $A_m$ in $G_m$ and $A''$. Since we utilise the usual Prolog renaming of variables along derivation, we can assume that $\vartheta$ does not act on any variables of $A''$ or $B$. Suppose that $A_m'$ is the corresponding selected atom in $G_m'$, we have $A_m=A_m'\vartheta$. Therefore, $\vartheta\theta_{m+1}$ is a unifier for $A_m'$ and $A''$ since $A_m'\vartheta\theta_{m+1}=A_m\theta_{m+1}=A''\theta_{m+1}=A''\vartheta\theta_{m+1}$. Now applying Rule 1 to $G_m'$ on the selected atom $A_m'$ and the rule $(A''\leftarrow_j B.r)$ with the unifier $\vartheta\theta_{m+1}$, we obtain an intermediate query $G_{m+1}'$. Since $(\mathcal{C}_j(B,r))\theta_{m+1}=(\mathcal{C}_j(B,r))\vartheta\theta_{m+1}$ and $G_m=G_m'\vartheta$, we have $G_{m+1}'=G_{m+1}$. Thus, $G_i$ has an unrestricted intermediate computation with the sequence $G_i, G_{i+1}',..., G_m', G_{m+1},..., G_n$ and the unifiers $\theta_{i+1}',\theta_{i+2},...,\theta_{m},\vartheta\theta_{m+1},\theta_{m+2},...,\theta_n$. By the induction hypothesis, $G_m'$ has an intermediate computation with the sequence $G_m',G_{m+1}',...,G_n'$, the mgu's $\theta_{m+1}',...,\theta_n'$, and the same computed truth value such that $\vartheta\theta_{m+1}\theta_{m+2}...\theta_n=\theta_{m+1}'...\theta_n'\gamma$ for some $\gamma$. Since $\theta_{i+2},...,\theta_{m}$ are the identity, $G_i$ has an intermediate computation with the sequence $G_i, G_{i+1}',...,G_m',G_{m+1}',...,G_n'$ and the mgu's $\theta_{i+1}',\theta_{i+2},...,\theta_{m},\theta_{m+1}'...,\theta_n'$, and we have $\theta_{i+1}...\theta_m\theta_{m+1}\theta_{m+2}...\theta_n=\theta_{i+1}'\theta_{i+2}...\theta_m\vartheta\theta_{m+1}\theta_{m+2}...\theta_n=\theta_{i+1}'\theta_{i+2}...\theta_m\theta_{m+1}'...\theta_n'\gamma$.
\end{proof}

\begin{lemma}[Lifting Lemma] \label{lemma2}
Let $P$ be a program, $?A$ a query, and $\theta$ a substitution. Suppose there exists a computation for $P$ and the query $?A\theta$. Then there exists a computation for $P$ and $?A$ of the same length and the same computed truth value such that, if $\theta_1,...,\theta_n$ are mgu's from the computation for $P$ and $?A\theta$, and $\theta_1',...,\theta_n'$ are mgu's from the computation for $P$ and $?A$, then there exists a substitution $\gamma$ such that $\theta\theta_1...\theta_n=\theta_1'...\theta_n'\gamma$.
\end{lemma}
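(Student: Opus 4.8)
The plan is to reduce the statement to the Mgu Lemma (Lemma~\ref{lemma1}): first I would convert the given computation for $?A\theta$ into an \emph{unrestricted} computation for $?A$ of the same length and computed truth value, and then invoke Lemma~\ref{lemma1} to replace the unifiers by mgu's while keeping track of the accumulated substitution. The point is that only the very first step needs special handling, because that is where the prefix $\theta$ must be absorbed; everything after it is a mechanical appeal to the already-proved Mgu Lemma.

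First I would examine the first step of the computation for $?A\theta$, starting from $G_0=(A\theta;id)$. Since $A\theta$ is an atom, the rule applied here can only be Rule~2, Rule~1, or Rule~4 (Rule~3 requires a formula of the form $hB$, and Rule~5 requires a word with no predicate symbols). The Rule~2 case is immediate: applying Rule~2 to $?A$ yields the same computed value $0$ in one step, and $\theta\cdot id=id\cdot\theta$ gives $\gamma=\theta$. For the Rule~1 case there is a rule $(C\leftarrow_i B.c)$ with $\theta_1$ an mgu of $A\theta$ and $C$, and I claim $\theta\theta_1$ is a unifier of $A$ and $C$. Indeed, by the Prolog renaming convention $\theta$ does not act on the variables of $C$ or $B$, so $A\theta\theta_1=C\theta_1=C\theta\theta_1$; moreover the word produced by applying Rule~1 to $?A$ with unifier $\theta\theta_1$ is $\mathcal{C}_i(B,c)\theta\theta_1=\mathcal{C}_i(B,c)\theta_1$, which is exactly the word component of $G_1$ in the original computation. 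The Rule~4 case is identical with a fact $(C.c)$ in place of the rule.

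Hence, prepending $\theta$ to the first unifier turns the entire sequence $G_0,\dots,G_n$ for $?A\theta$ into an unrestricted computation for $?A$ of the same length: the word components of the successive queries are unchanged (only their accumulated substitutions carry the extra prefix $\theta$), so the subsequent unifiers $\theta_2,\dots,\theta_n$ still apply verbatim to identical words, and the final evaluation under Rule~5 yields the same computed truth value. This unrestricted computation for $?A$ therefore has unifier sequence $\theta\theta_1,\theta_2,\dots,\theta_n$.

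Finally I would apply the Mgu Lemma with $i=0$ to this unrestricted computation. It produces a genuine (mgu-based) computation for $?A$ of the same length and computed truth value, with mgu's $\theta_1',\dots,\theta_n'$ satisfying $(\theta\theta_1)\theta_2\cdots\theta_n=\theta_1'\cdots\theta_n'\gamma$ for some $\gamma$, which is precisely the required identity $\theta\theta_1\cdots\theta_n=\theta_1'\cdots\theta_n'\gamma$. The main obstacle is the bookkeeping in that first step: one must carefully invoke the renaming convention so that $\theta$ leaves the clause variables untouched, which simultaneously guarantees that $\theta\theta_1$ is a unifier of $A$ and $C$ and that the lifted and original computations share identical word components throughout. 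Once this is settled, the identity on substitutions follows directly from Lemma~\ref{lemma1}.
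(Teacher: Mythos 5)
Your proposal is correct and follows essentially the same route as the paper's proof: absorb $\theta$ into the first unifier (using the renaming convention so that $\theta$ leaves the clause variables untouched), obtain an unrestricted computation for $?A$ with the same words and computed truth value, and then invoke the Mgu Lemma to recover an mgu-based computation together with the substitution identity. The only difference is cosmetic — you treat the Rule~1, Rule~2, and Rule~4 cases explicitly, whereas the paper proves the Rule~1 case and notes the others are similar.
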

\begin{proof}
The proof is similar to that in \citeN{Ll87}. Suppose that the computation for $P$ and $?A\theta$ has a sequence $G_0=(A\theta;id),G_1, ...,G_n$. Consider the admissible rule to be applied in the transition from $G_0$ to $G_1$. We will prove the result for the case of Rule 1, and it can be proved similarly for the others. The application of Rule 1 implies that there exists a rule $(A'\leftarrow_j B.r)$ in $P$ such that $\theta_1$ is an mgu of $A\theta$ and $A'$. We assume that $\theta$ does not act on any variables of $A'$ or $B$; thus, $\theta\theta_1$ is a unifier for $A$ and $A'$. Now applying Rule 1 to $G_0'=(A;id)$ on the rule $(A'\leftarrow_j B.r)$ with the unifier $\theta\theta_1$, we have $G_1'=G_1$. Therefore, we obtain an unrestricted computation for $P$ and $?A$, which looks like the given computation for $P$ and $?A\theta$, except that the first intermediate query $G_0'$ is different, and the first unifier is $\theta\theta_1$. Now applying the mgu lemma, we obtain the result.
\end{proof}
We also have a lemma which is an extension of Lemma 8.5 in \citeN{Ll87}.
\begin{lemma} \label{lemma3}
Let $P$ be a program and $?A$ a query. Suppose that $(x;\theta)$ is a correct answer for $P$ and $?A$. Then there exists a computation for $P$ and the query $?A\theta$ with a computed answer $(r;id)$ such that $r\geq x$.
\end{lemma}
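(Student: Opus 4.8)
The plan is to reduce the (possibly non-ground) query $?A\theta$ to a ground one, invoke the ground-query completeness result already established in Theorem~\ref{th06}, and then transport the resulting computation back to $?A\theta$ with the Lifting Lemma (Lemma~\ref{lemma2}). First I would unwind the hypothesis: since $(x;\theta)$ is a correct answer for $?A$, every model $f$ of $P$ satisfies $\overline{f}(A\theta)\geq x$. Because $\overline{f}$ on a non-ground formula is the infimum of $\overline{f}$ over its ground instances, this says precisely that $(x;id)$ is a correct answer for the query $?A\theta$, and moreover that $\overline{f}(A\theta\gamma)\geq x$ for every model $f$ and every grounding substitution $\gamma$ of $A\theta$.

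Next I would ground $A\theta$ in the style of Lemma~8.5 of \citeN{Ll87}: letting $x_1,\dots,x_k$ be the variables of $A\theta$, I would choose fresh constants $a_1,\dots,a_k$ of the appropriate sorts that occur neither in $P$ nor in $A$, and set $\gamma=\{x_1/a_1,\dots,x_k/a_k\}$ so that $A\theta\gamma$ is ground. By the previous paragraph $(x;id)$ is a correct answer for the ground query $?A\theta\gamma$, so Theorem~\ref{th06} supplies a computation for $P$ and $?A\theta\gamma$ with a computed answer $(r;id)$ satisfying $r\geq x$. Applying Lemma~\ref{lemma2} with the query $A\theta$ and the substitution $\gamma$ then produces a computation for $P$ and $?A\theta$ of the same length and with the same computed truth value $r\geq x$, together with mgu's $\theta_1',\dots,\theta_n'$ and a substitution $\delta$ such that $\gamma\theta_1\cdots\theta_n=\theta_1'\cdots\theta_n'\delta$, where $\theta_1,\dots,\theta_n$ are the (identity) unifiers of the ground computation.

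It then remains to see that the computed answer for $?A\theta$ is $(r;id)$, i.e.\ that $\theta_1'\cdots\theta_n'$ binds no variable of $A\theta$. This is exactly where the freshness of $a_1,\dots,a_k$ is used: since these constants occur neither in $P$ nor in $A\theta$, none of the mgu's $\theta_i'$ can introduce them, so in the relation $\gamma\theta_1\cdots\theta_n=\theta_1'\cdots\theta_n'\delta$ the $a_i$ appear on the left only via $\gamma$ and on the right only via $\delta$. Restricting this equality to $x_1,\dots,x_k$ and then uniformly replacing each $a_i$ by $x_i$ shows that $\theta_1'\cdots\theta_n'$ is, on the variables of $A\theta$, a renaming, and hence may be taken to be the identity under the usual renaming convention; this yields the desired computed answer $(r;id)$ with $r\geq x$.

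I expect this last, identity-substitution step to be the main obstacle, and the fresh-constant device to be its crux. Some extra care is warranted because the setting is many-sorted and function-free with an a~priori fixed Herbrand universe: one must make the new constants $a_i$ legitimate enough that $A\theta\gamma$ counts as a ground instance on which correctness transfers to Theorem~\ref{th06} (conceptually, working in the conservative extension of $P$ by the $a_i$, whose least Herbrand model agrees with $M_P$ on the original language), while keeping them outside $P$ so that they cannot surface in any $\theta_i'$. By contrast, the bookkeeping that the Lifting Lemma preserves the computed truth value exactly, so that $r\geq x$ is inherited directly from Theorem~\ref{th06}, is routine.
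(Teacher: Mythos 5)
Your proposal is correct in substance, and its skeleton coincides with the paper's: the paper likewise grounds $A\theta$ with fresh constants $a_1,\dots,a_k$ occurring in neither $P$ nor $A$, transfers correctness to the ground query via $\overline{f}(A\theta\gamma)\geq\overline{f}(A\theta)\geq x$, and invokes Theorem~\ref{th06}. Where you diverge is the transport back to $?A\theta$: the paper does not use the Lifting Lemma here at all, but simply replaces each $a_i$ by $x_i$ throughout the ground computation. Since a constant can never be bound by a unifier, after this replacement no unifier of the resulting computation binds any $x_i$, so its computed answer is $(r;id)$ immediately --- the renaming problem you wrestle with in your third paragraph never arises. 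Your route through Lemma~\ref{lemma2} is also viable, and it has the virtue of reusing an established lemma rather than an ad hoc syntactic-replacement argument; but it incurs a debt the paper's route avoids: by your own freshness argument, the lifted computation's answer substitution is only guaranteed to be a \emph{renaming} $x_j\mapsto y_j$, whereas the lemma demands the identity. Saying it ``may be taken to be the identity under the usual renaming convention'' is not quite enough --- the convention invoked in the paper concerns renaming program clauses apart along a derivation, not normalizing answer substitutions. The missing (routine) step is: extend the injective map $y_j\mapsto x_j$ to a permutation $\rho$ of finitely many variables and compose the last mgu of the lifted computation with $\rho$; an mgu composed with a renaming is still an mgu, the computed truth value is a word without the affected variables and hence unchanged, and the composed answer substitution restricted to the variables of $A\theta$ becomes the identity. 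With that spelled out, your proof closes. Incidentally, your side remark about sorts and the conservative extension generated by the fresh constants is a point of care that the paper (following \citeN{Ll87}) silently glosses over, so raising it is to your credit.
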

\begin{proof}
The proof is similar to that in \citeN{Ll87}. Suppose that $A\theta$ has variables $x_1,...,x_n$. Let $a_1,...,a_n$ be distinct constants not appearing in $P$ or $A$, and let $\theta_1$ be the substitution $\{x_1/a_1,...,x_n/a_n\}$. Since for any model $f$ of $P$, $\overline{f}(A\theta\theta_1)\geq\overline{f}(A\theta)\geq x$, and $A\theta\theta_1$ is ground, $(x;id)$ is a correct answer for $P$ and $?A\theta\theta_1$. By Theorem \ref{th06}, there exists a computation for $P$ and $?A\theta\theta_1$ with a computed answer $(r;id)$ such that $r\geq x$. Since the $a_i$ do not appear in $P$ or $A$, by replacing $a_i$ with $x_i$ $(i=1,...,n)$ in this computation, we obtain a computation for $P$ and $?A\theta$  with the computed answer $(r;id)$.
\end{proof}
The completeness of the procedural semantics is stated as follows.

\begin{theorem}
Let $P$ be a program, and $?A$ a query. For every correct answer $(x;\theta)$ for $P$ and $?A$, there exists a computed answer $(r;\sigma)$ for $P$ and $?A$, and a substitution $\gamma$ such that $r\geq x$ and $\theta=\sigma\gamma$.
\end{theorem}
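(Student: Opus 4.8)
The plan is to prove the final completeness theorem by combining the ground-case completeness (Theorem~\ref{th06}) with the two structural lemmas just established, namely the Lifting Lemma (Lemma~\ref{lemma2}) and Lemma~\ref{lemma3}, in direct analogy to the classical completeness argument for definite logic programs in \citeN{Ll87}. The overall structure is: start from an arbitrary correct answer $(x;\theta)$, use Lemma~\ref{lemma3} to produce a computation for the \emph{instantiated} query $?A\theta$ with a good truth value, and then use the Lifting Lemma to lift that computation back to one for the original query $?A$, tracking the substitutions so that the required factorisation $\theta = \sigma\gamma$ falls out.

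Concretely, first I would invoke Lemma~\ref{lemma3}: since $(x;\theta)$ is a correct answer for $P$ and $?A$, there exists a computation for $P$ and the query $?A\theta$ with computed answer $(r;id)$ such that $r\geq x$. This already secures the truth-value inequality $r\geq x$ that the theorem demands, so the remaining work is entirely about the substitution bookkeeping. Second, I would apply the Lifting Lemma to this computation for $?A\theta$: it yields a computation for $P$ and $?A$ of the same length and the same computed truth value $r$, with mgu's $\theta_1',\dots,\theta_n'$, together with a substitution $\gamma'$ satisfying $\theta\theta_1\cdots\theta_n = \theta_1'\cdots\theta_n'\gamma'$, where $\theta_1,\dots,\theta_n$ are the mgu's of the $?A\theta$ computation. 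Here $\theta_1\cdots\theta_n=id$ on the relevant variables because the computation for the ground-ish query $?A\theta$ contributes identity substitutions in the lifted sense; more carefully, since $(r;id)$ is the computed answer for $?A\theta$, the composite $\theta_1\cdots\theta_n$ restricted to the variables of $A\theta$ is the identity.

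Setting $\sigma$ to be $\theta_1'\cdots\theta_n'$ restricted to the variables of $A$, the lifted computation has computed answer $(r;\sigma)$, which is by definition a computed answer for $P$ and $?A$ with $r\geq x$. The factorisation $\theta\theta_1\cdots\theta_n=\theta_1'\cdots\theta_n'\gamma'$ then restricts, on the variables of $A$, to $\theta = \sigma\gamma$ for the appropriate restriction $\gamma$ of $\gamma'$, completing the proof. I expect the main obstacle to be the careful restriction-to-variables-of-$A$ argument in this last step: one must verify that restricting the equation $\theta\theta_1\cdots\theta_n=\theta_1'\cdots\theta_n'\gamma'$ to the free variables of $A$ genuinely yields $\theta=\sigma\gamma$, which requires checking that $\theta_1\cdots\theta_n$ acts as the identity on those variables (so the left side collapses to $\theta$) and that $\sigma$ was defined as exactly the restriction of $\theta_1'\cdots\theta_n'$. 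This is the standard but delicate variable-tracking that the classical proof in \citeN{Ll87} handles, and the only genuinely new content relative to that proof is ensuring the truth-value monotonicity carries through, which is already guaranteed by Lemma~\ref{lemma3} and the length-and-value preservation in the Lifting Lemma.
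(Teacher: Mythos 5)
Your proposal is correct and follows essentially the same route as the paper's own proof: invoke Lemma~\ref{lemma3} to obtain a computation for $?A\theta$ with computed answer $(r;id)$, $r\geq x$, then apply the Lifting Lemma (Lemma~\ref{lemma2}) to lift it to a computation for $?A$, using the observation that the mgu's of the instantiated computation act as the identity on the variables of $A\theta$ so that the factorisation $\theta\theta_1\cdots\theta_n=\theta_1'\cdots\theta_n'\gamma'$ collapses to $\theta=\sigma\gamma$ after restriction. The substitution bookkeeping you flag as the delicate step is handled in the paper exactly as you describe.
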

\begin{proof}
Since $(x;\theta)$ is a correct answer for $P$ and $?A$, by Lemma \ref{lemma3}, there exists a computation for $P$ and the query $?A\theta$ with a computed answer $(r;id)$ such that $r\geq x$. Suppose the sequence of mgu's in the computation is $\theta_1,...,\theta_n$. Then $A\theta\theta_1...\theta_n=A\theta$. By the lifting lemma, there exists a computation for $P$ and $?A$ with the same computed truth value $r$ and mgu's $\theta_1',...,\theta_n'$ such that $\theta\theta_1...\theta_n=\theta_1'...\theta_n'\gamma'$, for some substitution $\gamma'$. Let $\sigma$ be $\theta_1'...\theta_n'$ restricted to the variables in $A$. Then $\theta=\sigma\gamma$, where $\gamma$ is an appropriate restriction of $\gamma'$.
\end{proof}
Clearly, the proofs of Mgu and Lifting lemmas here can be similarly applied to fuzzy logic programming and the frameworks of logic programming developed based on it such as \emph{multi-adjoint logic programming} (see, e.g., \citeN{Medina04}).

\subsection{More examples}
\begin{example} \label{ex5}
Assume that we use the truth domain from the 2-limited HA in Example \ref{ex99}, that is, $\underline{X} = (X,\{False,True\},\{V, M, P,L\},\leq)$, and have the following knowledge base:

$(i)$ The sentence ``\emph{A hotel is convenient for a business trip if it is \textbf{very} near to the business location, has a reasonable cost at the time, and is a fine building}" is \emph{Very True}. 

$(ii)$ The sentence ``\emph{A hotel has a reasonable cost if either its dinner cost or its hotel rate at the time is reasonable}" is \emph{Very True}.

$(iii)$ The sentence ``\emph{Causeway hotel is near Midtown Plaza}" is \emph{Little More True}.

$(iv)$ The sentence ``\emph{Causeway hotel is a fine building}" is \emph{Probably More True}.

$(v)$ The sentence ``\emph{Causeway hotel has a reasonable dinner cost in November}" is \emph{Very More True}.

$(vi)$ The sentence ``\emph{Causeway hotel has a reasonable hotel rate in November}" is \emph{Little Probably True}.

Let \emph{cn\_ht, ne\_to, re\_co, fn\_bd, re\_di, re\_rt, Bu\_lo, mt, cw} and \emph{T} stand for ``convenient hotel", ``near to", ``reasonable cost", ``fine building", ``reasonable dinner cost", ``reasonable hotel rate", ``business location", ``Midtown Plaza", ``Causeway hotel", and ``True", respectively. Then, the knowledge base can be represented by the following program:
\begin{eqnarray*}
(cn\_ht(Bu\_lo,Time, Hotel)\leftarrow_G \\
\wedge(V\;ne\_to(Bu\_lo,Hotel), 
re\_co(Hotel,Time), 
fn\_bd(Hotel)).VT)\\
(re\_co(Hotel,Time)\leftarrow_L
\vee(re\_di(Hotel,Time), re\_rt(Hotel,Time)).VT)\\
(ne\_to(mt,cw).LMT)\\
(fn\_bd(cw).PMT)\\
(re\_di(cw,nov).VMT)\\
(re\_rt(cw,nov).LPT)
\end{eqnarray*}
Note that although the conjunctions and disjunction are binary connectives, they can be easily extended to have any arity greater than 2.

Given a query $?cn\_ht(mt,nov,cw)$, we can have the following computation (the substitution in the computed answer is the identity):
\begin{eqnarray*}
?cn\_ht(mt,nov,cw)\\
\mathcal{C}_G(\wedge(V\;ne\_to(mt,cw), re\_co(cw,nov), fn\_bd(cw)),VT)\\
\mathcal{C}_G(\wedge(V^{-}(ne\_to(mt,cw)), re\_co(cw,nov), fn\_bd(cw)),VT)\\
\mathcal{C}_G(\wedge(V^{-}(LMT), re\_co(cw,nov), fn\_bd(cw)),VT)\\
\mathcal{C}_G(\wedge(V^{-}(LMT), re\_co(cw,nov), PMT),VT)\\
\mathcal{C}_G(\wedge(V^{-}(LMT),\mathcal{C}_L(\vee(re\_di(cw,nov), re\_rt(cw,nov)),VT), PMT),VT)\\
\mathcal{C}_G(\wedge(V^{-}(LMT),\mathcal{C}_L(\vee(VMT, re\_rt(cw,nov)),VT), PMT),VT)\\
\mathcal{C}_G(\wedge(V^{-}(LMT),\mathcal{C}_L(\vee(VMT,LPT),VT), PMT),VT)\\
\mathcal{C}_G(\wedge^{\bullet}(V^{-}(LMT),\mathcal{C}_L(\vee^{\bullet}(VMT,LPT),VT), PMT),VT)
\end{eqnarray*}
Using the inverse mappings of hedges in Table \ref{tab3}, we have $\mathcal{C}_G(\wedge^{\bullet}(V^{-}(LMT), \mathcal{C}_L(\vee^{\bullet}($ $VMT,LPT),VT), PMT),VT)=\mathcal{C}_G(\wedge^{\bullet}(V^{-}(LMT), \mathcal{C}_L(VMT,VT), PMT),VT)=\mathcal{C}_G(\wedge^{\bullet}(LPT,PMT,PMT),VT) = \mathcal{C}_G(LPT,VT)=LPT$. Thus, the computed answer is $(LPT;id)$, and the sentence ``\emph{Causeway hotel is convenient for a business trip to Midtown Plaza in November}" is at least \emph{Little Probably True}. 

Now, if we want to relax the first condition in the sentence $(i)$, we can replace the phrase ``\emph{very near to}" by a phrase ``\emph{probably near to}". Then, similarly, we can have a similar program and the following computation:
\begin{eqnarray*}
?cn\_ht(mt,nov,cw)\\
\mathcal{C}_G(\wedge(P\;ne\_to(mt,cw), re\_co(cw,nov), fn\_bd(cw)),VT)\\
...\\
\mathcal{C}_G(\wedge^{\bullet}(P^{-}(LMT),\mathcal{C}_L(\vee^{\bullet}(VMT,LPT),VT), PMT),VT)
\end{eqnarray*}
Using the inverse mappings in Table \ref{tab3}, we have a computed answer $(PMT;id)$.

Similarly, if we remove the hedge for the first condition in the sentence $(i)$, we can have a similar program and the following computation:
\begin{eqnarray*}
?cn\_ht(mt,nov,cw)\\
\mathcal{C}_G(\wedge(ne\_to(mt,cw), re\_co(cw,nov), fn\_bd(cw)),VT)\\
...\\
\mathcal{C}_G(\wedge^{\bullet}(LMT,\mathcal{C}_L(\vee^{\bullet}(MPT,LPT),VT), PMT),VT)
\end{eqnarray*}
Thus, we have a computed answer $(LMT;id)$.

It can be seen that with the same hotel (\emph{Causeway}), the time (\emph{November}), and the business location (\emph{Midtown Plaza}), by similar computations, if we put a higher requirement for the condition ``\emph{near to}", we obtain a lower truth value. More precisely, with the conditions ``\emph{very near to}", ``\emph{near to}", and ``\emph{probably near to}", we obtain the truth values $LPT$, $LMT$, and $PMT$, respectively, and $LPT<LMT<PMT$. This is reasonable and in accordance with common sense.
\end{example}
\section{Applications}
\subsection{A data model for fuzzy linguistic databases with flexible querying}
Information stored in databases is not always precise. Basically, two important issues in research in this field are representation of uncertain information in a database and provision of more flexibility in the information retrieval process, notably via inclusion of linguistic terms in queries. 
Also, the relationship between deductive databases and logic programming has been well established. Therefore, fuzzy linguistic logic programming (FLLP) can provide a tool for constructing fuzzy linguistic databases equipped with flexible querying.

The model is an extension of Datalog \cite{Ul88} without negation and possibly with recursion, which is similar to that in \citeN{PV01}, called \emph{fuzzy linguistic Datalog} (FLDL).
It allows one to find answers to queries over a \emph{fuzzy linguistic database} (FLDB) using a \emph{fuzzy linguistic knowledge base} (FLKB). 
An FLDB is a (crisp) relational database in which an additional attribute is added to every relation to
store a linguistic truth value for each tuple, and an FLKB is a \emph{fuzzy linguistic Datalog program} (FLDL program).
Here, we also work on \emph{safe} rules, i.e., every variable appearing in the rule head of a rule also appears in the rule body.
An FLDL program consists of finite safe rules and facts.
Moreover, in an FLDL program, a fuzzy predicate is either an \emph{extensional database} (EDB) predicate, the logical part of a fact, whose relation is stored in the database, or an \emph{intensional database} (IDB) predicate which is defined by rules, but not both.

We can extend the \emph{monotone subset}, consisting of selection, Cartesian product, equijoin, projection, and union, of relational algebra \cite{Ul88} for the case of our relations and create a new one called \emph{hedge modification}. We call this collection of operations \emph{fuzzy linguistic relational algebra} (FLRA).

Based on the operations, we can convert rules with the same IDB predicate in their heads to an expression of FLRA; the expression yields a relation for the predicate.
Furthermore, it can be observed that the way the expression calculates the truth value of a tuple in the relation for the IDB predicate is the same as the way the immediate consequences operator $T_P$ does for the corresponding ground atom \cite{PV01}.
Thus, similar to the classical case, the FLRA augmented by the immediate consequences operator is sufficient to evaluate recursive FLDL programs, and every query over an FLKB represented by an FLDL program can be exactly evaluated by finitely iterating the operations of FLRA from a set of relations for the EDB predicates.

\subsection{Threshold computation}
This is the case when one is interested in looking for a computed answer to a query with a truth value not less than some threshold $t$. 

Assume that at a certain point in a computation we need to find an answer to the selected atom $A_m$ with a threshold $t_m$. Since $\mathcal{C}_{c}(x,y)\leq min(x,y)$, for $c\in \{L,G\}$, the selected rule or fact which will be used in the next step must have a truth value not less than $t_m$. If there is no such rule or fact, we can cut the computation branch.
For the case that $A_m$ will be unified with the rule head of such a rule, the truth value of the whole body of the rule must not be less than $t_{m+1}=inf\{b|\mathcal{C}(b,r)\geq t_m\}$, where $r$ is the truth value of the rule and $r\geq t_m$. If the implication used in the rule is the G\"{o}del implication, then $t_{m+1}=t_m$; if it is the \L ukasiewicz implication, then $t_{m+1}=v_{n+k-j}$, where $r=v_j, t_m=v_k$ are two values in the truth domain $\overline{X}$, and $v_n=1$. Since $n\geq j\geq k$, we have $t_{m+1}\geq t_m$, and if $r<1$, we have $t_{m+1}>t_m$.

Recall that a rule body can be built from its components using the conjunctions, the disjunction, and hedge connectives. Therefore, we have:
($i$) For the case of G\"{o}del conjunction, $t_{m+1}$ is the next threshold for each of its components, and if $t_{m+1}>t_m$, for all $m$ (this will happen if all the implications are \L ukasiewicz, and all the truth values of rules are less than 1), we can estimate the depth of the search tree according to the threshold $t$ and the highest truth value of rules. 
($ii$) For the case of \L ukasiewicz conjunction, if all the truth values of the facts in the program are less than 1 (thus the computed truth value of any component in any body formula is less than 1), the next threshold for each of the components is greater than $t_{m+1}$. Hence, similar to the above case, we can also work out the depth of the search tree.
($iii$) For the case of disjunction, one of the components of the rule body must have a computed truth value at least $t_{m+1}$.
($iv$) Finally, the problem of finding a computed truth value for a hedge-modified formula $hB$ with a threshold $u$ can be reduced to that of $B$ with a new threshold $u'=inf\{v|h^{-}(v)\geq u\}$.
\subsection{Fuzzy control}
Control theory is aimed at determining a function $\underline{f} :X\rightarrow Y$ whose intended meaning is that given an input value $x$, $\underline{f}(x)$ is the correct value of control signal. A fuzzy approach to control employs an approximation of such a (ideal) function by a system of fuzzy IF-THEN rules of the form ``IF $x$ is $A$ THEN $y$ is $B$'', where $A$ and $B$ are labels of fuzzy subsets.

In the literature, there are several attempts to reduce fuzzy control to fuzzy logic in narrow sense. \citeANP{Gerla05} \citeNN{Gerla05,Gerla01} proposed an interesting reduction in which a fuzzy IF-THEN rule ``IF $x$ is $A$ THEN $y$ is $B$'' is translated into a fuzzy logic programming rule $(good(x,y)\leftarrow A(x)\wedge B(y).\lambda)$, where $A$ and $B$ are now considered as fuzzy predicates. The truth value $\lambda$ is understood as the \emph{degree of confidence} of the experts in such a rule, and by default, $\lambda =1$.
The intended meaning of the new predicate $good(x, y)$ is that given an input value $x$, $y$ is a \emph{good} value for the control variable. 
Therefore, the information carried by a system of fuzzy IF-THEN rules can be represented by a fuzzy logic program.

More precisely, a system of fuzzy IF-THEN rules:
\begin{eqnarray} 
\mbox{IF }x\mbox{ is }A_1\mbox{ THEN }y\mbox{ is }B_1 \nonumber\\
... \label{sys01}\\
\mbox{IF }x\mbox{ is }A_n\mbox{ THEN }y\mbox{ is }B_n \nonumber
\end{eqnarray}
can be associated with the following program $P$:
\begin{eqnarray} 
(good(x,y)\leftarrow A_1(x)\wedge B_1(y).1) \nonumber\\
... \nonumber\\
(good(x,y)\leftarrow A_n(x)\wedge B_n(y).1) \label{sys02} \\
(A_i(r).r_{A_i}) \mbox{, for } r\in X, i=1...n \nonumber\\
(B_j(t).t_{B_j}) \mbox{, for } t\in Y, j=1...n \nonumber
\end{eqnarray}
where $r_{A_i}$ is the degree of truth to which an input value $r$ satisfies a predicate $A_i$, and $t_{B_j}$ is the degree of truth to which an output value $t$ satisfies a predicate $B_j$. 
Each element $r\in X$ or $t\in Y$ is considered as a constant. Thus, the language of $P$ is a two-sorted predicate one, and we have two Herbrand universes $U^{X}_P=X$ and $U^{Y}_P=Y$. Since the truth values of the rules are all equal to 1, \L ukasiewicz and G\"{o}del t-norms yield the same results in computations; therefore, without loss of generality we can use the same notation for the implications.

By iterating the $T_P$ operator from the bottom interpretation $\bot$, we obtain the Least Herbrand model $M_P$ of $P$. In fact, it can be shown that $M_P=T^{2}_P(\bot)$. Let us put $\mathcal{G}(r,t)=M_P(good(r,t))$. Indeed, $\mathcal{G}(r,t)$ can be interpreted as the degree of preference on the output value $t\in Y$, given the input value $r\in X$. 
Therefore, the purpose of the program $P$ is not to compute the ideal function $\underline{f}: X\rightarrow Y$, but to define a fuzzy predicate \emph{good} expressing a graded opinion on a possible control value $t$ w.r.t. a given input value $r$.
Clearly, given an input value $r$, it should be better to take a value $t$ that maximises $\mathcal{G}(r,t)$.
Note that the value $\mathcal{G}(r,t)$ is not a true value, but a lower bound to the truth value of $good(r,t)$. In other words, we can say that given $r$, $t$ can be proved to be good at least at the level $\mathcal{G}(r,t)$.

It is worth noticing that in fuzzy control, it is quite often that the labels of fuzzy subsets in a system of fuzzy IF-THEN rules, i.e., $A_i$ and $B_i$ in the system (\ref{sys01}), are hedge-modified ones, e.g., \emph{Verylarge} and \emph{Veryfast}. Thus, our language can be used to represent the associated program in a very natural way since we allow using linguistic hedges to modify fuzzy predicates. Clearly, in such a program, all the facts $(A_i(r).r_{A_i})$ and $(B_j(t).t_{B_j})$ we need are only for primary predicates (predicates without hedge modification) such as \emph{large} or \emph{fast}, but not for all predicates as in the case of fuzzy logic programming.
\section{Implementation}
In the literature, there has been research on \emph{multi-adjoint logic programming} (MALP) (see, e.g., \citeN{Medina04}), which is an extension of fuzzy logic programming in which truth values can be elements of any complete bounded lattice instead of the unit interval. Also, there have been several attempts to implement systems where multi-adjoint logic programs can be executed. Due to the similarity between MALP and FLLP, the implementation of a system for executing fuzzy linguistic logic programs can be carried out based on the systems built for multi-adjoint ones. 
In the sequel, we sketch an idea for implementing such a system, which is inspired by the FLOPER (Fuzzy LOgic Programming Environment for Research) system described in \citeN{MM08}.

The main objective is to translate fuzzy linguistic logic programs into Prolog ones which can be safely executed inside any standard Prolog interpreter in a completely transparent way. 
We take the following program as an illustrative example:
\begin{eqnarray*}
(gd\_em(X)\leftarrow_G \wedge_L(V\;st\_hd(X),P\;hira\_un(X)).VMT)\\
(hira\_un(ann).VT)\\
(st\_hd(ann).MT)
\end{eqnarray*}
For simplicity, instead of computing with the truth values, we can compute with their indexes in the truth domain. Thus, the program can be coded as:
\begin{eqnarray*}
gd\_em(X)\;<godel\;\; \&luka(hedge\_v(st\_hd(X)),hedge\_p(hira\_un(X))) \;with\; 38.\\
hira\_un(ann)\; with\; 41.\\
st\_hd(ann)\; with\; 36.
\end{eqnarray*}
where 38, 41, and 36 are respectively the indexes of the truth values $VMT$, $VT$, and $MT$ in the truth domain in Example \ref{ex99}.

During the parsing process, the system produces Prolog code as follows:

$(i)$ Each atom appearing in a fuzzy rule is translated into a Prolog atom extended by an additional argument, a truth variable of the form $\_TV_i$, which is intended to store the truth value obtained in the subsequent evaluation of the atom.

$(ii)$ The truth functions of the binary connectives and the t-norms can be easily defined by standard Prolog clauses as follows:
\begin{eqnarray*} 
and\_godel(X,Y,Z)\; :-\; (X=<Y,Z=X;X>Y,Z=Y). \\
and\_luka(X,Y,Z)\; :-\; H\; is\; X+Y-n,(H=<0,Z=0;H>0,Z=H). \\
or\_godel(X,Y,Z)\; :-\; (X=<Y,Z=Y;X>Y,Z=X). 
\end{eqnarray*} 
where $n$ is the index of the truth value 1 in the truth domain (in Example \ref{ex99}, $n= 44$).
Note that $and\_godel$ is the t-norm $\mathcal{C}_G$ as well as the truth function of the conjunction $\wedge$ ($\wedge_G$) while 
$and\_luka$ is the t-norm $\mathcal{C}_L$ and also the truth function of the conjunction $\wedge_L$, and $or\_godel$ is the truth function of the disjunction $\vee$.

Inverse mappings of hedges can be defined by listing all cases in the form of ground Prolog facts (except inverse mappings of 0, $W$, and 1). More precisely, the inverse mappings in Table \ref{tab3} can be defined as follows:
\begin{eqnarray*} 
inv\_map(H,0,0).\\
...\\
inv\_map(l,17,21).\\
...\\
inv\_map(v,33,25).\\
...\\
inv\_map(H,44,44).
\end{eqnarray*} 
where 33, 25, 17, and 21 are indexes of the values $c^{+}$, $Lc^{+}$, $LLc^{-}$, and $VLc^{-}$, respectively; the fact $inv\_map(v,33,25).$ defines the case $V^{-}(c^{+})=Lc^{+}$ while the fact $inv\_map(l,17,21).$ defines the case $L^{-}(LLc^{-})=VLc^{-}$. The facts $inv\_map(H,0,0).$, $inv\_map(H,22,22).$, and $inv\_map(H,44,44).$, where $H$ is a variable of hedges, define the mappings: for all $h$, $h^{-}(0)=0$, $h^{-}(W)=W$, and $h^{-}(1)=1$.

$(iii)$ Each fuzzy rule is translated into a Prolog clause in which the calls to the atoms appearing in its body must be in an appropriate order. More precisely, the call to the atom corresponding to an operation must be after the calls to the atoms corresponding to its arguments in order for the truth variables to be correctly instantiated, and the last call must be to the atom corresponding to the t-norm evaluating the rule. For example, the rule in the previous program can be translated into the following Prolog clause:
\begin{eqnarray*} 
gd\_em(X,\_TV0)\; :-\; st\_hd(X,\_TV1), inv\_map(v,\_TV1,\_TV2), \\
hira\_un(X,\_TV3), inv\_map(p,\_TV3,\_TV4),\\ 
and\_luka(\_TV2,\_TV4,\_TV5), and\_godel(\_TV5,38,\_TV0).
\end{eqnarray*} 
$(iv)$ Each fuzzy fact is translated into a Prolog fact in which the additional argument is just its truth value instead of a truth variable.
For the above program, the two fuzzy facts are translated into two Prolog facts $hira\_un(ann,41)$ and $st\_hd(ann,36)$.

$(v)$ A query is translated into a Prolog goal that is an atom with an additional argument, a truth variable to store the computed truth value. For instance, the query $?gd\_em(X)$ is translated into the Prolog goal: $?-\;gd\_em(X,Truth\_value)$. Given the above program and the above query, a Prolog interpreter will return a computed answer $[X=ann,Truth\_value=29]$, i.e., we have $(gd\_em(ann).PPT)$.

\section{Conclusions and future work}
We have presented fuzzy linguistic logic programming as a result of integrating fuzzy logic programming and hedge algebras. The main aim of this work is to facilitate the representation and reasoning on knowledge expressed in natural languages, where vague sentences are often assessed by a degree of truth expressed in linguistic terms rather than in numbers, and linguistic hedges are usually used to indicate different levels of emphasis. 
It is well known that in order for a formalism to model such knowledge, it should address the twofold usage of linguistic hedges, i.e., in generating linguistic values and in modifying predicates. 
Hence, in this work we use linguistic truth values and allow linguistic hedges as predicate modifiers.
More precisely, in a fuzzy linguistic logic program, each fact or rule  is graded to a certain degree specified by a value in a linguistic truth domain taken from a hedge algebra of a truth variable, and hedges can be used as unary connectives in body formulae. 

Besides the declarative semantics, a sound and complete procedural semantics which directly manipulates linguistic terms is provided to compute a lower bound to the truth value of a query. Thus, it can be regarded as a method of computing with words. 
A fixpoint semantics of logic programs is defined and provides an important tool to handle recursive programs, for which computations can be infinite.

It has been shown that knowledge bases expressed in natural languages can be represented by our language, and
the theory has several applications such as a data model for fuzzy linguistic databases with flexible querying, threshold computation, and fuzzy control. 

Finding more applications for the theory and implementing a system where fuzzy linguistic logic programs can be executed
are directions for our future work.


\begin{thebibliography}{}

\bibitem[\protect\citeauthoryear{Dinh-Khac, H\"{o}lldobler, and Tran}{Dinh-Khac
  et~al\mbox{.}}{2006}]{DK06}
{\sc Dinh-Khac, D.}, {\sc H\"{o}lldobler, S.}, {\sc and} {\sc Tran, D.~K.}
  2006.
\newblock The fuzzy linguistic description logic ${ALC}_{FL}$.
\newblock In {\em Proc. of the 11th International Conference on Information
  Processing and Management of Uncertainty in Knowledge-Based Systems
  ({IPMU}'2006)}. 2096--2103.

\bibitem[\protect\citeauthoryear{Gerla}{Gerla}{2001}]{Gerla01}
{\sc Gerla, G.} 2001.
\newblock Fuzzy control as fuzzy deduction system.
\newblock {\em Fuzzy Sets and Systems\/}~{\em 121}, 409--425.

\bibitem[\protect\citeauthoryear{Gerla}{Gerla}{2005}]{Gerla05}
{\sc Gerla, G.} 2005.
\newblock Fuzzy logic programming and fuzzy control.
\newblock {\em Studia Logica\/}~{\em 79}, 231--254.

\bibitem[\protect\citeauthoryear{H\'{a}jek}{H\'{a}jek}{1998}]{Ha98}
{\sc H\'{a}jek, P.} 1998.
\newblock {\em Metamathematics of Fuzzy Logic}.
\newblock Kluwer, Dordrecht, The Netherlands.

\bibitem[\protect\citeauthoryear{Kraj\v{c}i, Lencses, and
  Vojt\'{a}\v{s}}{Kraj\v{c}i et~al\mbox{.}}{2004}]{kra04}
{\sc Kraj\v{c}i, S.}, {\sc Lencses, R.}, {\sc and} {\sc Vojt\'{a}\v{s}, P.}
  2004.
\newblock A comparison of fuzzy and annotated logic programming.
\newblock {\em Fuzzy Sets and Systems\/}~{\em 144}, 173--192.

\bibitem[\protect\citeauthoryear{Lloyd}{Lloyd}{1987}]{Ll87}
{\sc Lloyd, J.~W.} 1987.
\newblock {\em Foundations of logic programming}.
\newblock Springer Verlag, Berlin, Germany.

\bibitem[\protect\citeauthoryear{Medina, Ojeda-Aciego, and
  Vojt\'{a}\v{s}}{Medina et~al\mbox{.}}{2004}]{Medina04}
{\sc Medina, J.}, {\sc Ojeda-Aciego, M.}, {\sc and} {\sc Vojt\'{a}\v{s}, P.}
  2004.
\newblock Similarity-based unification: a multi-adjoint approach.
\newblock {\em Fuzzy Sets and Systems\/}~{\em 146,\/}~1, 43--62.

\bibitem[\protect\citeauthoryear{Morcillo and Moreno}{Morcillo and
  Moreno}{2008}]{MM08}
{\sc Morcillo, P.~J.} {\sc and} {\sc Moreno, G.} 2008.
\newblock Using floper for running/debugging fuzzy logic programs.
\newblock In {\em Proc. of the 12th International Conference on Information
  Processing and Management of Uncertainty in Knowledge-Based Systems
  (IPMU'2008)}, {L.~Magdalena}, {M.~Ojeda-Aciego}, {and} {J.~Verdegay}, Eds.
  M\'alaga, 481--488.

\bibitem[\protect\citeauthoryear{Naito, Ozawa, Hayashi, and Wakami}{Naito
  et~al\mbox{.}}{1995}]{NOHW95}
{\sc Naito, E.}, {\sc Ozawa, J.}, {\sc Hayashi, I.}, {\sc and} {\sc Wakami, N.}
  1995.
\newblock A proposal of a fuzzy connective with learning function and query
  networks for fuzzy retrieval systems.
\newblock In {\em Fuzziness in Database Management Systems}, {P.~Bosc} {and}
  {J.~Kacprzyk}, Eds. Physica-Verlag, 345--364.

\bibitem[\protect\citeauthoryear{Nguyen, Tran, Huynh, and Nguyen}{Nguyen
  et~al\mbox{.}}{1999}]{HoKhang99}
{\sc Nguyen, C.~H.}, {\sc Tran, D.~K.}, {\sc Huynh, V.~N.}, {\sc and} {\sc
  Nguyen, H.~C.} 1999.
\newblock Linguistic-valued logic and their application to fuzzy reasoning.
\newblock {\em International Journal of Uncertainty, Fuzziness and
  Knowledge-Based Systems\/}~{\em 7}, 347--361.

\bibitem[\protect\citeauthoryear{Nguyen, Vu, and Le}{Nguyen
  et~al\mbox{.}}{2008}]{Ho08}
{\sc Nguyen, C.~H.}, {\sc Vu, N.~L.}, {\sc and} {\sc Le, X.~V.} 2008.
\newblock Optimal hedge-algebras-based controller: Design and application.
\newblock {\em Fuzzy Sets and Systems\/}~{\em 159}, 968--989.

\bibitem[\protect\citeauthoryear{Nguyen and Wechler}{Nguyen and
  Wechler}{1990}]{Ho90}
{\sc Nguyen, C.~H.} {\sc and} {\sc Wechler, W.} 1990.
\newblock Hedge algebras: An algebraic approach to structure of sets of
  linguistic truth values.
\newblock {\em Fuzzy Sets and Systems\/}~{\em 35}, 281--293.

\bibitem[\protect\citeauthoryear{Nguyen and Wechler}{Nguyen and
  Wechler}{1992}]{HoWech92}
{\sc Nguyen, C.~H.} {\sc and} {\sc Wechler, W.} 1992.
\newblock Extended hedge algebras and their application to fuzzy logic.
\newblock {\em Fuzzy Sets and Systems\/}~{\em 52}, 259--281.

\bibitem[\protect\citeauthoryear{Pokorn\'{y} and Vojt\'{a}\v{s}}{Pokorn\'{y}
  and Vojt\'{a}\v{s}}{2001}]{PV01}
{\sc Pokorn\'{y}, J.} {\sc and} {\sc Vojt\'{a}\v{s}, P.} 2001.
\newblock A data model for flexible querying.
\newblock In {\em Advances in Databases and Information Systems, ADBIS''01},
  {A.~Caplinskas} {and} {J.~Eder}, Eds. Springer Verlag, 280–--293.
\newblock LNCS 2151.

\bibitem[\protect\citeauthoryear{Tarski}{Tarski}{1955}]{Tarski55}
{\sc Tarski, A.} 1955.
\newblock A lattice-theoretical fixpoint theorem and its applications.
\newblock {\em Pacific Journal of Mathematics\/}~{\em 5}, 285--309.

\bibitem[\protect\citeauthoryear{Ullman}{Ullman}{1988}]{Ul88}
{\sc Ullman, J.~D.} 1988.
\newblock {\em Principles of Database and Knowledge-Base Systems}. Vol.~I.
\newblock Computer Science Press, United States of America.

\bibitem[\protect\citeauthoryear{Vojt\'{a}\v{s}}{Vojt\'{a}\v{s}}{2001}]{Vo01}
{\sc Vojt\'{a}\v{s}, P.} 2001.
\newblock Fuzzy logic programming.
\newblock {\em Fuzzy Sets and Systems\/}~{\em 124}, 361--370.

\bibitem[\protect\citeauthoryear{Zadeh}{Zadeh}{1972}]{Zadeh72}
{\sc Zadeh, L.~A.} 1972.
\newblock A fuzzy-set-theoretic interpretation of linguistic hedges.
\newblock {\em Journal of Cybernetics\/}~{\em 2,\/}~3, 4--34.

\bibitem[\protect\citeauthoryear{Zadeh}{Zadeh}{1975a}]{Zadeh75}
{\sc Zadeh, L.~A.} 1975a.
\newblock The concept of a linguistic variable and its application in
  approximate reasoning.
\newblock {\em Information Sciences\/}~{\em 8, 9}, 199--249, 301--357, 43--80.

\bibitem[\protect\citeauthoryear{Zadeh}{Zadeh}{1975b}]{Zadeh75b}
{\sc Zadeh, L.~A.} 1975b.
\newblock Fuzzy logic and approximate reasoning.
\newblock {\em Synthese\/}~{\em 30}, 407--428.

\bibitem[\protect\citeauthoryear{Zadeh}{Zadeh}{1979}]{Zadeh79}
{\sc Zadeh, L.~A.} 1979.
\newblock A theory of approximate reasoning.
\newblock In {\em Machine Intelligence}, {J.~E. Hayes}, {D.~Michie}, {and}
  {L.~I. Mikulich}, Eds. Vol.~9. Wiley, 149--194.

\bibitem[\protect\citeauthoryear{Zadeh}{Zadeh}{1989}]{Zadeh89}
{\sc Zadeh, L.~A.} 1989.
\newblock Knowledge representation in fuzzy logic.
\newblock {\em IEEE Transactions on Knowledge and Data Engineering\/}~{\em
  1,\/}~1, 89--99.

\bibitem[\protect\citeauthoryear{Zadeh}{Zadeh}{1997}]{Zadeh97}
{\sc Zadeh, L.~A.} 1997.
\newblock Toward a theory of fuzzy information granulation and its centrality
  in human reasoning and fuzzy logic.
\newblock {\em Fuzzy Sets and Systems\/}~{\em 90}, 111--127.

\end{thebibliography}
\end{document}